\documentclass[journal,twoside,web]{ieeecolor}
\usepackage{generic}
\usepackage{cite}
\usepackage{amsmath,amssymb,amsfonts}
\usepackage{verbatim} 
\usepackage{algorithmic}
\usepackage{graphicx}
\graphicspath{{figures/}{generalfig}}
\usepackage{subfigure} 
\allowdisplaybreaks[4]

\newtheorem{assumption}{\bf{Assumption}}   
\newtheorem{remark}{\bf{Remark}}		   
\newtheorem{theorem}{\bf{Theorem}}		   
\newtheorem{lemma}{\bf{Lemma}}			   
\newtheorem{proposition}{\bf{Proposition}} 

\usepackage{algorithm,algorithmic}
\usepackage{stfloats}

\makeatletter
\let\NAT@parse\undefined
\makeatother

\usepackage{hyperref}
\hypersetup{
    colorlinks=true,        
    linkcolor=blue,         
    citecolor=red          
}
\usepackage{textcomp}  
\def\BibTeX{{\rm B\kern-.05em{\sc i\kern-.025em b}\kern-.08em
    T\kern-.1667em\lower.7ex\hbox{E}\kern-.125emX}}
\begin{document}
\bstctlcite{IEEEexample:BSTcontrol} 

\title{Stabilization of Unidirectional First-Order PDE-ODE Coupled Systems with Boundary and Distributed Input Delays}
\author{Sanguan Zhong, \IEEEmembership{Student Member, IEEE}, Jie Qi, \IEEEmembership{Member, IEEE}
\thanks{This work was supported by the National Natural Science Foundation of China (62173084), the Project of Science and Technology Commission of Shanghai Municipality, China (23ZR1401800), and the Fundamental Research Funds for the Central Universities (CUSF-DH-T-2023026). ({\slshape Corresponding author: Jie Qi}.)}
\thanks{Sanguan Zhong and Jie Qi are with the School of Information and Intelligent Science, Donghua University, Shanghai, 201620, China (e-mail: zsgguan@mail.dhu.edu.cn; jieqi@dhu.edu.cn).}
}
\maketitle

\begin{abstract}
The paper considers a system of unidirectional first-order hyperbolic partial differential equations (PDEs) coupled with ordinary differential equations (ODEs), modeling two-phase advective transport processes subject to both boundary and distributed input delays. To address the system defined over two spatial domains, we design a set of three backstepping transformations involving both Volterra- and Fredholm-type integral operators, which introduce eight kernel functions: five associated with the PDE states and three with the ODE states. Since the distributed delayed variable spans the two spatial domains, its transformation must integrate over all state variables, leading to three-dimensional kernel functions with singular initial conditions involving Dirac delta functions that serve as sampling operators to establish mappings between variables of different dimensions.
 By employing the method of characteristics and successive approximations, we prove the well-posedness of the kernel equations. On this basis, we prove the invertibility of the Fredholm transformation and establish the exponential stability of the closed-loop system in the $L^\infty$ norm under the delay-compensated controller. Simulation results for a linear example and a nonlinear screw extrusion model are presented to demonstrate the effectiveness of the proposed method.
\end{abstract}

\begin{IEEEkeywords}
Distributed actuation, Fredholm transformation, multiple input delays, PDE-ODE coupled systems, PDE backstepping.
\end{IEEEkeywords}

\section{Introduction}\label{sec:introduction}
\IEEEPARstart{U}{nidirectional} (homodirectional) first-order hyperbolic partial differential equations (PDEs) provide a canonical framework for advective transport in multi-region processes, where several spatial segments are coupled in cascade along the direction of propagation. Representative examples include melt-spinning screw extrusion, where solid conveying and melt flow evolve on different spatial regions and are coupled through an interface at the solid-melt transition \cite{dos2013introduction}; supply-chain networks, where each processor is posed on its own spatial segment and governed by a homodirectional first-order hyperbolic conservation law, with multiple regions coupled through routing rules at network nodes \cite{gottlich2005network}; and supercritical open-channel networks, in which each reach is described by the Saint-Venant shallow-water balance laws and, in the supercritical regime, both characteristic speeds are downstream-directed \cite{briani2018fluvial}. In such applications, the distributed transport dynamics often interact with finite-dimensional dynamics (modeled by ordinary differential equations, ODEs) through boundary conditions, forming interconnected PDE-ODE systems \cite{diagne2011well,yu2019bilateral}.
These applications motivate the present work on multi-region, homodirectional hyperbolic PDEs coupled with ODE dynamics.

Control and stabilization of coupled PDE-ODE systems have received sustained attention, largely driven by applications where finite-dimensional actuator/sensor/load dynamics interact with distributed-parameter transport processes.
From a modeling viewpoint, PDE-ODE couplings arise in several canonical architectures. One class represents distributed actuator dynamics via PDEs that drive an ODE plant, with designs covering stabilization, output regulation, semilinear and uncertain features \cite{di2018stabilization, irscheid2023output, strecker2022predictive,ghousein2023adaptive}. Another class captures finite-dimensional actuator/sensor dynamics described by ODEs coupled to a distributed process  \cite{zhang2019input,deutscher2019output}. More complex sandwiched configurations occur when both the actuator and the load (or sensor) are modeled by ODEs surrounding a distributed process, leading to ODE-PDE-ODE systems with robustness enhancements and output-feedback extensions  
\cite{anfinsen2018stabilization, auriol2023robustification, deutscher2018output, wang2021output,di2020robust}. Conversely, PDE-ODE-PDE architectures emerge when a finite-dimensional plant is actuated and sensed through distributed dynamics \cite{aarsnes2019delay, wang2022delay}. 
Beyond fixed-domain settings, PDE-ODE couplings also appear in moving-boundary problems, such as Stefan-type phase-change models \cite{koga2017backstepping}, traffic shockwave dynamics \cite{yu2019bilateral}, and neuron growth models \cite{diehl2014one}, where ODEs govern the evolution of the spatial domain itself. Moreover, by mapping the moving spatial domain onto a fixed reference interval, moving-boundary dynamics can often be reformulated as equivalent PDE-ODE systems in fixed-domain settings \cite{dos2013introduction}. Comprehensive overviews of backstepping developments for PDE control, including interconnected PDE-ODE systems, can be found in a recent survey \cite{vazquez2026backstepping}.

Time delays are inherent in such systems due to the finite propagation speeds of mass, energy, and information. These delays may degrade control performance and can even destabilize the closed-loop system. To address this issue, various control methodologies have been developed for PDE systems with delays \cite{selivanov2018delayed, selivanov2019delayed, katz2021delayed, prieur2019feedback, lhachemi2021feedback}. A prominent class of approaches is based on Lyapunov methods. For instance, in \cite{selivanov2019delayed}, an $H_\infty$ control problem is addressed for two-dimensional diffusion systems with delayed pointwise measurements by constructing a Lyapunov-Krasovskii functional combined with Friedrichs’ inequality. In addition, observer-based LMI approaches are proposed in \cite{selivanov2018delayed} and \cite{katz2021delayed} to stabilize reaction-diffusion systems subject to constant and time-varying delays, respectively.
Another effective methodology relies on spectral analysis for parabolic PDEs. In \cite{prieur2019feedback}, a reaction-diffusion PDE with delayed boundary actuation is decomposed into a finite-dimensional unstable subsystem and an infinite-dimensional stable subsystem via spectral reduction, based on which a finite-dimensional delayed controller is designed to stabilize the overall system.

The PDE backstepping method has exhibited its effectiveness in the design of the compensator for delayed systems \cite{krstic2009control,auriol2018delay,sano2019boundary, qi2021compensation}. The method is pioneered in \cite{krstic2009control} for unstable reaction-diffusion equations and has since been extended to first-order hyperbolic PDEs \cite{aarsnes2019delay,ahmed2021sampled,auriol2018delay2,sano2019boundary,qi2021output}, wave equations \cite{krstic2011dead, wang2011wave,than2021output}, higher-dimensional PDEs \cite{qi2019control,guan2023bilateral}, and systems with distributed input delays \cite{qi2019stabilization,xu2024backstepping,qi2023delay}, spatially-varying distributed input delays \cite{qi2021compensation} as well as  unknown distributed input delays \cite{wang2022delayadaptive}.

However, research on PDE systems involving heterogeneous delays remains limited. Existing results include output-feedback control with boundary actuation delays and measurement delays for coupled linear parabolic PDEs \cite{deutscher2020fredholm} and for first-order hyperbolic PDEs \cite{qi2021output}.

In this paper, we consider an unstable PDE-ODE system that involves both boundary and distributed input delays with distinct delay values. The joint compensation of heterogeneous boundary and distributed delays remains largely unexplored in the existing literature.
The only related work is our previous study on traffic flow \cite{qi2023delay}, where the boundary and distributed actuation share the same physical control mechanism, resulting in identical delays.  
In contrast, the present work considers a more general and practically relevant configuration in which the boundary and distributed actuators are physically distinct, leading to different delay values.

To address this challenge, we propose a set of three backstepping transformations 
that combine Volterra- and Fredholm-type integrals, involving five PDE-associated and three ODE-associated kernel functions. 
Two transformations are designed to compensate for the boundary delay, whereas the third compensates the distributed delay, making its construction more intricate. The kernels arising from the Fredholm-type terms are defined on rectangular domains, and their governing first-order PDEs are subject to two boundary conditions (equivalently, posed as initial-boundary-value problems when one variable is interpreted as time). Moreover, Dirac delta functions enter the boundary conditions and act as sampling operators, thereby forcing the kernel solutions to be defined piecewise. This piecewise structure poses additional challenges for the well-posedness analysis of the resulting kernel equations.

We use the method of characteristics and successive approximations to derive explicit piecewise-defined solutions for a subset of the kernel functions while establishing the well-posedness of the remaining kernel equations. Furthermore, we prove the invertibility of the mixed Fredholm-Volterra transformations with respect to the $L^\infty$ norm. Based on this, we establish the exponential stability of the delay-compensated PDE-ODE system.
Finally, two numerical examples, including an application to a nonlinear screw extrusion process, are presented to demonstrate the effectiveness of the proposed control law.

The contributions of this paper are as follows:
\begin{enumerate}
\item
The method proposed can simultaneously address distinct delays of boundary control and distributed control without requiring ordering of the delays, unlike existing methods that require specific delay ordering assumptions \cite{kharitonov2017prediction,tsubakino2016exact,deutscher2020fredholm,qi2021output}.

\item
To resolve the ill-posedness arising from insufficient boundary conditions, which leads to non-uniqueness of the solution on part of the rectangular domain, we introduce a modified boundary condition that replaces the original one while maintaining consistency. This reformulation preserves the admissible solution set and renders the problem well-posed.

\item  
A unified backstepping transformation is proposed for cross-domain first-order hyperbolic systems with distributed input delay. 
By combining the Dirac delta function and the Heaviside function in the kernel function, the proposed transformation eliminates unstable terms while preserving a triangular target structure, which highlights a distinctive structural feature of the kernel equations.
\end{enumerate}

This paper is organized as follows. In Section~\ref{problem statement}, we present the  two-phase unidirectional PDE-ODE system. Section~\ref{control design} details the backstepping control design  and states the main result. Section~\ref{well-posedness} proves the well-posedness of the kernel equations. The exponential stability of the closed-loop system is analyzed in Section~\ref{stability}. Two simulation examples are provided in Section \ref{simulation}. Conclusions are given in Section~\ref{conclusion}.  
    
    \textit{Notation:} Let $\Omega \subset \mathbb{R}^n$ be a measurable domain, and define
    \begin{equation*}
        L^\infty(\Omega) := \bigl\{
f:\Omega \to \mathbb{R} \;\big|\;
\|f\|_{\infty} < \infty
\bigr\}.
    \end{equation*}
The corresponding norm is denoted by $\|f\|_{\infty} := \operatorname*{ess\,sup}_{x\in\Omega} |f(x)|$.
For any $f \in L^2(\Omega)$, we also use the notation 
$\|f\|:= 
\left( \int_\Omega |f(x)|^2 \, dx \right)^{1/2}$.   
Here, $|\cdot|$ stands for the Euclidean norm. We use the shorthand notation $u_t = \frac{\partial u}{\partial t}$,~ $u_z = \frac{\partial u}{\partial z}$ and define  
    \begin{align*}
        \mathcal{D}_1 =& \{(z,y)\in [0,1]^2:0\le z\le y \le 1\}, \\
        \mathcal{D}_2 =& \{(s,y)\in[0,1]^2\}, \\
        \mathcal{D}_3 =& \{(z,s,y)\in[0,2]\times [0,1]^2\}, \\
        \mathcal{D}_4 =& \{(z,s,y)\in [0,2]\times [0,1]\times [1,2]\}.
    \end{align*}
     Let $\delta(\cdot)$ represent the Dirac delta function and $H(\cdot)$ be the Heaviside function. Moreover, we define the state spaces 
    \begin{align*}
       \mathcal{S} =~& 
    L^\infty(0,1)^3
    \times L^\infty(1,2) 
    \times L^\infty((0,2)\times(0,1) )
    \times \mathbb{R}^n,\\
       \mathcal{S}_c =~& 
   C([0,1])^3
    \times C([1,2]) 
    \times C([0,2]\times[0,1] )
    \times \mathbb{R}^n.
    \end{align*}

\section{Problem Statement}\label{problem statement}
    We consider a coupled unidirectional first-order hyperbolic PDE-ODE system:
    \begin{subequations}\label{main model}
    \begin{align}
		u_t(z,t) =& -au_z(z,t)+\theta u(1,t), ~\quad \quad ~\quad~ \:z\in(0,1],   \label{system 1}\\
		\alpha_t(z,t) =&-a\alpha_z(z,t)+c_1\alpha(z,t)+d_1\alpha(1,t) \notag \\
    &+f(z)u(z,t)+g_1(z)u(1,t)+h_1(z)^{\mathrm{T}}X(t)
    \notag \\
    &+\kappa_1 U_2(z,t-D), \qquad \qquad \qquad z\in(0,1], \label{system 2}  \\
		\beta_t(z,t)=&-b\beta_z(z,t) +c_2\beta(z,t)+d_2\beta(2,t) \notag \\
		&+g_2(z)u(1,t)+h_2(z)^{\mathrm{T}}X(t)\notag \\
        &+\kappa_2 U_2(z,t-D), \qquad \qquad \qquad z\in(1,2],
        \label{system 3}\\
		\dot{X}(t) =&AX(t)+Bu(1,t), \label{system 4}\\
		u(0,t) =& U_1(t-\tau), \label{system 5}\\
		\alpha(0,t) =& 0, \label{system 6}\\
		\beta(1,t) =& \alpha(1,t), \label{system 7}
    \end{align}
    \end{subequations}
   with initial conditions
    \begin{equation}\label{u0, alpha0, beta0, m0}
    \begin{aligned}
        u(z,0) =& u_0(z),  &
        \alpha(z,0) &= \alpha_0(z),
        \\
        X(0) =& X_0, &
        \beta(z,0) & =\beta_0(z), \\
        U_1(t)=&\varpi_1(t), &   t\in& [-\tau, 0], \\
        U_2(z, t)=&\varpi_2(z,t), &  t\in& [-D, 0],
    \end{aligned}
    \end{equation}
    where $X(t)\in\mathbb{R}^n$, $(A,B)$ is a controllable pair with $A\in\mathbb{R}^{n\times n}$ and $B\in \mathbb{R}^{n}$. The constants $a>0$ and $b>0$ denote propagation velocities. The coefficients are $\theta, d_i>0$, $c_i, \kappa_i\in \mathbb{R}$ for $i=1,2$, and the coefficient functions are $f$, $g_1\in C([0,1]; \mathbb{R})$, $g_2 \in C([1,2]; \mathbb{R})$, $h_1\in C([0,1];\mathbb{R}^n)$, and $h_2\in C([1,2];\mathbb{R}^n)$. The open-loop system would be unstable if the coefficients $\theta$ and $d_i$ were sufficiently large \cite{krstic2008backstepping}. We can derive the open-loop transfer function from the input $u(0,t)$ to the output $u(1,t)$ as $G(s)=\frac{s}{(s-\theta)\mathrm{e}^{\frac{s}{a}}+\theta}$, which has at least one pole in the right half-plane when $\theta>a$. This system is subject to both boundary and distributed input delays, where the boundary input is represented by  $U_1(t-\tau)$ and the distributed input by $U_2(z,t-D)$, with $\tau>0$ and $D>0$ denoting the delay parameters.

    \begin{remark}
       Equations \eqref{main model} describe a general linearized model for a unidirectional advective  process governed by first-order hyperbolic PDEs, where transport is dominant, alongside phase transitions or chemical reactions. The upstream ``conveying" zone in this model is characterized by two state variables, commonly representing material concentration and/or thermal energy.       
       After a phase transition or reaction occurs, the downstream zone is modeled using a single state variable, often representing post-reaction temperature. An ODE couples these two regions, capturing dynamic interactions such as the movement of the interface between zones. The system is actuated via boundary input from the feeding mechanism and distributed input from the heating unit, both of which are subject to time delays. Typically, the delay associated with the heater is more significant than that of the feeding system due to thermal inertia and energy propagation characteristics. Classical applications include extrusion energy dynamics \cite{vergnes2006modeling, diagne2011well} and chemical kinetics \cite{suvarov2012simple,ratnakar2017bifurcation,michelsen1996comprehensive}.
    \end{remark}
    \begin{assumption}\label{assumption on delay}
    The distributed input delay $D$ is assumed to be smaller than the transport time in both zones, i.e., $D < \min\left\{\frac{1}{a}, \frac{1}{b}\right\}$, where $\frac{1}{a}$ and $\frac{1}{b}$ represent the time required for the state to propagate from one boundary to the other along the characteristic curves. 
    \end{assumption}    
    \begin{remark}     
     Assumption~\ref{assumption on delay} is specific to first-order hyperbolic systems with distributed input delays, where information propagates along characteristic curves with finite speed. It requires the characteristic transport time to exceed the control lag; otherwise, the delayed distributed input becomes causally ineffective, since the control action arrives only after the state has reached the boundary and can no longer influence the system dynamics. 
\end{remark}

\section{Backstepping Control Design}\label{control design}
   First, we introduce a 1-D and a 2-D transport equation to model the delayed boundary input and the distributed input, respectively, yielding a cascading system:
    \begin{subequations}\label{transport_cascadingsystem}
	\begin{align}
		u_t(z,t) =& -au_z(z,t)+\theta u(1,t), \quad\quad~ \! z\in (0,1],  \label{cascade system 1}\\
		u(0,t) =& r(0,t),\label{cascade system 5}\\
		\tau r_t(s,t) =& r_s(s,t),  \qquad\qquad\qquad\qquad   ~~ s\in [0,1), \label{cascade system 6}
        \\  r(1,t) = &U_1(t), \label{boundary control}\\
		\alpha_t(z,t) =&-a\alpha_z(z,t)+c_1\alpha(z,t)+d_1\alpha(1,t) \notag \\
    &+f(z)u(z,t)+g_1(z)u(1,t)+h_1(z)^{\mathrm{T}}X(t)
    \notag \\
    &+\kappa_1 v(z,0,t), \qquad \qquad \qquad \!z\in (0,1],  \label{cascade system 2}  \\
		\beta_t(z,t)=&-b\beta_z(z,t) +c_2\beta(z,t)+d_2\beta(2,t) \notag \\
		&+g_2(z)u(1,t)+h_2(z)^{\mathrm{T}}X(t)\notag \\
        &+\kappa_2 v(z,0,t), \qquad\qquad \qquad \! z\in (1,2],  
        \label{cascade system 3}\\
         \alpha(0,t) = &0, \quad \beta(1,t) = \alpha(1,t), \\ 
		Dv_t(z,s,t) =& v_s(z,s,t), \qquad ~~~z\in [0,2],~~ \! s\in [0,1),\label{cascade system 7}\\
		v(z,1,t)=&U_2(z,t),\label{cascade system 8}\\
		\dot{X}(t) =&AX(t)+Bu(1,t), \label{cascade system 4}
	\end{align}
    \end{subequations}
    with initial conditions $r(s,0)=r_0(s)=\varpi_1(\tau(s-1))$ and $v(z,s,0) = v_0(z,s)=\varpi_2(z,D(s-1))$.
    The system diagram is illustrated in Fig.~\ref{system structure}.

    \subsection{Backstepping Transformations}  \label{sec_transformation}
	We introduce the following transformations: 
	\begin{align}
		w(z,t)& = u(z,t)\!-\!\!\int_{z}^{1}\!\eta(z,y)u(y,t)dy \!-\!\varphi(z)^{\mathrm{T}}X(t), \label{u trans}
		\\
		\mu(s,t) &=r(s,t)-\int_{0}^{s}\tau aq(s-y,0)r(y,t)dy \notag \\ 
		&~~~-\int_{0}^{1}q(s,y)u(y,t)dy-\psi(s)^{\mathrm{T}}X(t),  \label{r trans}
	\\
		\gamma(z,s,t) &=v(z,s,t)\!-\!\mathcal{T}^v[v](z,s)\!-\!\mathcal{T}^{\alpha}[\alpha](z,s) \!-\! \mathcal{T}^{\beta}[\beta](z,s)\nonumber \\&~~~
		 -\mathcal{T}^{u}[u](z,s)-\phi(z,s)^{\mathrm{T}}X(t), \label{v trans}
	\end{align}
    where
	\begin{subequations}\label{definition of components in trans}
	\begin{align}
		\mathcal{T}^v[v](z,s)\stackrel{\triangle}{=}&D\!\!\int_0^1\!\!\!\!\int_0^s \!\!k(z,s\!-\!\epsilon,y)v(y,\epsilon,t)d\epsilon dy \notag \\
		&+D\!\!\int_1^2\!\!\!\!\int_0^s \!\!l(z,s\!-\!\epsilon,y)v(y,\epsilon,t)d\epsilon dy,\\
		\mathcal{T}^{\alpha}[\alpha](z,s)\stackrel{\triangle}{=}&\frac{1}{\kappa_1}\int_0^1 k(z,s,y)\alpha(y,t)dy, \\
		\mathcal{T}^{\beta}[\beta](z,s)\stackrel{\triangle}{=}&\frac{1}{\kappa_2}\int_1^2 l(z,s,y)\beta(y,t)dy, \\
		\mathcal{T}^{u}[u](z,s)\stackrel{\triangle}{=}&\int_0^1 M(z,s,y)u(y,t)dy,
	\end{align}
	\end{subequations}
    where the kernel functions $\eta$ and $q$ are defined on $\mathcal{D}_1$ and $\mathcal{D}_2$, respectively; $k$ and $M$ are defined on $\mathcal{D}_3$; $l$ is defined on $\mathcal{D}_4$. The vector functions satisfy $\varphi,\psi: [0,1] \to \mathbb{R}^n$, and $\phi : [0,2]\times [0,1] \to \mathbb{R}^n$.

The transformations \eqref{u trans} and \eqref{r trans} jointly eliminate the unstable term $\theta u(1,t)$ in \eqref{cascade system 1}, stabilize the ODE subsystem \eqref{cascade system 4}, and compensate for the delayed boundary input via the boundary control law \eqref{boundary control}.
The third transformation \eqref{v trans}  is designed to remove the unstable terms $d_1\alpha(1,t)$ and $d_2\beta(2,t)$ in \eqref{cascade system 2} and \eqref{cascade system 3}, respectively, and decouple $X$ therein, while simultaneously compensating for the distributed input delay $D$, resulting in a delay-compensated distributed control law $U_2$. 
    \begin{figure*}[hbt]
		\begin{center}
		\includegraphics[width=0.90\linewidth]{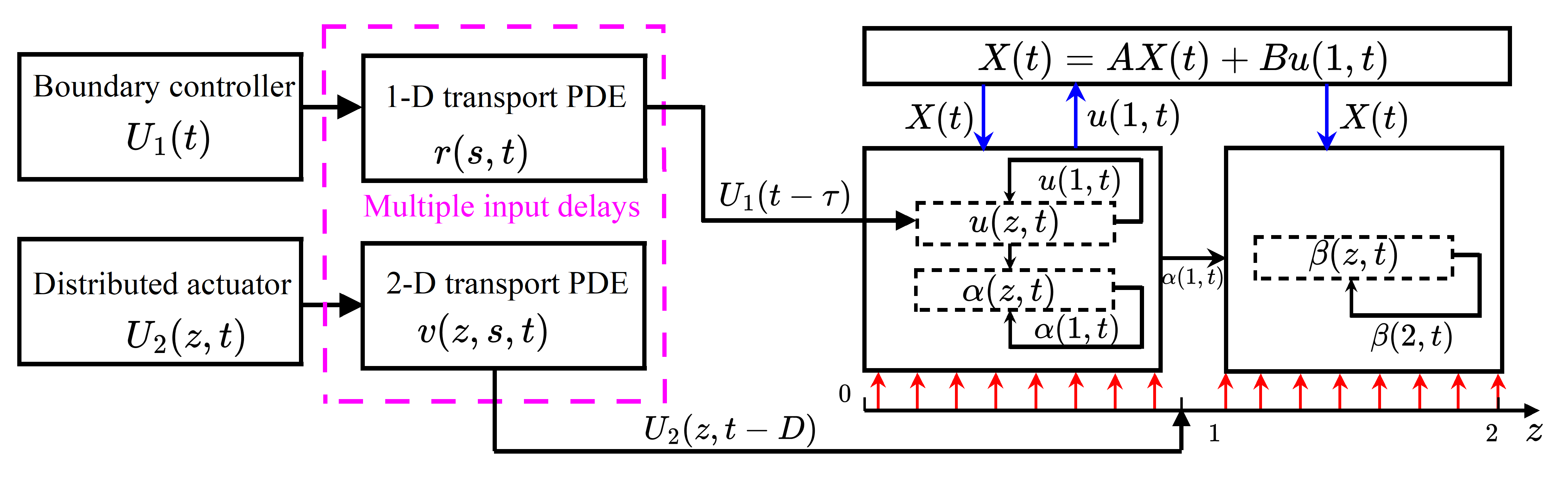}
        \end{center}
        \vspace{-15pt}
		\caption{Schematic diagram of the coupled unidirectional first-order PDE-ODE system with a boundary input delay and a distributed input delay.}
		\label{system structure}
    \end{figure*}
The invertibility of the affine Volterra integral transformations \eqref{u trans} and \eqref{r trans} is straightforward, whereas the invertibility of the Fredholm integral transformation \eqref{v trans} is nontrivial and thus requires a proof. This is stated as follows.
  \begin{proposition}(Invertibility of transformation)\label{invertibility of v trans}
       The transformations \eqref{u trans}--\eqref{v trans} define a bounded linear operator $\mathcal{T}: \mathcal{S} \to \mathcal{S}$, which is boundedly invertible on $\mathcal{S}$.       
  \end{proposition}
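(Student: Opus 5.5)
The transformation acts on $(u,r,\alpha,\beta,v,X)$ and produces $(w,\mu,\alpha,\beta,\gamma,X)$; the components $\alpha,\beta$ and $X$ are left unchanged. I would use this block-triangular structure and invert component by component. Boundedness of $\mathcal{T}$ on $\mathcal{S}$ I take from the regularity of the kernels established in Section~\ref{well-posedness}. The needed properties are that $\eta,q,\varphi,\psi,\phi$ are bounded, and that $k,l,M$ are bounded except for possible Dirac components in $y$. Such Dirac components act as point-sampling operators, for instance $\int \delta(y-z)\alpha(y)\,dy=\alpha(z)$, and these are bounded on $L^\infty$. Each integral operator then maps $L^\infty$ into $L^\infty$, with norm bounded by the sup of the kernel times the measure of the domain, plus the Dirac weights.

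For \eqref{u trans}, I would note that $u\mapsto u-\int_z^1\eta(z,y)u(y)dy$ is a Volterra operator with bounded kernel. Its inverse is $u=w+\int_z^1\bar\eta(z,y)w(y)dy$, where $\bar\eta$ is the resolvent kernel. The resolvent series $\sum_n \eta^{*n}$ converges, since the $n$-fold iterate is bounded by $\|\eta\|_\infty^n(y-z)^{n-1}/(n-1)!$. With $X$ known, this gives $u=(I+\bar{\mathcal V})(w+\varphi^{\mathrm T}X)$, which is bounded in $L^\infty$. The same argument applies to \eqref{r trans}, with the Volterra kernel $\tau a q(s-y,0)$. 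The terms $\int_0^1 q(s,y)u(y)dy$ and $\psi^{\mathrm T}X$ are known bounded functions of the already recovered $u$ and $X$. Hence $r=(I+\bar{\mathcal Q})\bigl(\mu+\int_0^1 qu\,dy+\psi^{\mathrm T}X\bigr)$.

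For \eqref{v trans}, $u$, $\alpha$, $\beta$ and $X$ are known, so I would set $F=\gamma+\mathcal T^\alpha[\alpha]+\mathcal T^\beta[\beta]+\mathcal T^u[u]+\phi^{\mathrm T}X\in L^\infty((0,2)\times(0,1))$ and solve $v-\mathcal T^v[v]=F$. This is the step I expect to be the main obstacle. The operator $\mathcal T^v$ is Fredholm in the spatial variable $y\in[0,2]$, with no triangular structure there, so a small-norm Neumann argument is not available in general. The key observation is that it is Volterra in $s$: $\mathcal T^v[v](z,s)$ depends only on $v(\cdot,\epsilon)$ for $\epsilon\le s$. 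I would therefore treat $s$ as a time-like variable and define the iterates $v^{n+1}=F+\mathcal T^v[v^n]$. Writing $K=D\sup(|k|+|l|)$, where any Dirac parts are handled by sampling as above, induction gives the bound
\begin{equation*}
\sup_{z}|(\mathcal T^v)^n[g](z,s)|\le \frac{(2K s)^n}{n!}\|g\|_\infty .
\end{equation*}
Consequently $\sum_n(\mathcal T^v)^n$ converges in operator norm to $(I-\mathcal T^v)^{-1}$, with norm at most $e^{2K}$. If $k$ or $l$ contains a $\delta(s-\cdot)$- or Heaviside-type singularity in the $s$-argument, I would split off the singular part explicitly. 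This yields a pointwise shift or sampling term, and the same iteration closes provided the singular part is itself of Volterra type with strictly lower-order effect. Checking this against the explicit piecewise kernel formulas is the delicate point.

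Combining the three steps gives an explicit inverse $\mathcal T^{-1}$, built from resolvents and bounded multiplication and integral operators, that is bounded on $\mathcal S$. Composing it with $\mathcal T$ on both sides yields the identity by construction. This proves Proposition~\ref{invertibility of v trans}.
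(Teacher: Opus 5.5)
Your proposal is correct and follows the same route as the paper's proof. Both use the triangular structure, the standard Volterra inverses for \eqref{u trans} and \eqref{r trans}, and, for \eqref{v trans}, successive approximations exploiting the Volterra structure in $s$ to get a $C^{n}s^{n}/n!$ bound, with the affine terms then treated as known inputs; the paper iterates on $\rho=\mathcal{T}^v[v]$ rather than on $v$, which is immaterial. The point you flag as delicate is settled in the paper by substituting the explicit kernels: every Dirac component of $k$ and $l$ sits at a single $y$ for each fixed $s$, so integrating in $y$ leaves only terms like $\int_{\max\{0,s-z/(Da)\}}^{s} D\zeta_1\varrho_1(s-\epsilon)\,v(z-Da(s-\epsilon),\epsilon)\,d\epsilon$ with bounded weights, and no singularity in the $s$-direction ever arises.
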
  
  
    The proof is given in Appendix~\ref{proof of invertibility}.
  
    The inverse transformations are coordinately given by   
	\begin{align}
		u(z,t)=&w(z,t)\!+\!\!\int_{z}^{1}\!\!\breve{\eta}(z,y)w(y,t)dy \!+\! \breve{\varphi}(z)^{\mathrm{T}}X(t),  \label{w trans}
		\\
		r(s,t)=&\mu(s,t)+\int_{0}^{s}\tau a\breve{q}(s-y,0)\mu(y,t)dy \notag 
        \\
		&+\int_{0}^{1}\breve{q}(s,y)w(y,t)dy+\breve{\psi}(s)^{\mathrm{T}}X(t),\label{mu trans} 
        \\
        v(z,s,t) =&\gamma(z,s,t)\!+\!\mathcal{\breve{T}}^{\gamma}[\gamma](z,s)\!+\!\mathcal{\breve{T}}^{\alpha}[\alpha](z,s)\!+\!\mathcal{\breve{T}}^{\beta}[\beta](z,s) \nonumber \\
		  &\!+\!\mathcal{\breve{T}}^{w}[w](z,s)\!+\!\mathcal{\breve{T}}^{\mu}[\mu](z,s)\!+\!\breve{\phi}(z,s)^{\mathrm{T}}\! X(t), \label{gamma trans}
	\end{align}
    where
    \begin{subequations}\label{definition of components in inverse trans}
    \begin{align}
		\mathcal{\breve{T}}^{\gamma}[\gamma](z,s)\stackrel{\triangle}{=}&D\!\!\int_0^1\!\!\!\!\int_0^s \!\!\breve{k}(z,s\!-\!\epsilon,y)\gamma(y,\epsilon,t)d\epsilon dy  \notag\\
		&+D\!\!\int_1^2\!\!\!\!\int_0^s \!\!\breve{l}(z,s\!-\!\epsilon,y)\gamma(y,\epsilon,t)d\epsilon dy,\\
		\mathcal{\breve{T}}^{\alpha}[\alpha](z,s)\stackrel{\triangle}{=}&\frac{1}{\kappa_1}\int_0^1 \breve{k}(z,s,y)\alpha(y,t)dy, \\
		\mathcal{\breve{T}}^{\beta}[\beta](z,s)\stackrel{\triangle}{=}&\frac{1}{\kappa_2}\int_1^2 \breve{l}(z,s,y)\beta(y,t)dy, \\
		\mathcal{\breve{T}}^{w}[w](z,s)\stackrel{\triangle}{=}&\int_0^1 \breve{M}(z,s,y)w(y,t)dy, \\
		\mathcal{\breve{T}}^{\mu}[\mu](z,s)\stackrel{\triangle}{=}&\int_0^1 \breve{N}(z,s,y)\mu(y,t)dy,\label{T mu definition}
	\end{align}  
    \end{subequations}
    where the inverse kernel functions $\breve{\eta}$ and $\breve{q}$ are defined on $\mathcal{D}_1$ and $\mathcal{D}_2$, respectively; $\breve{k}$, $\breve{M}$, $\breve{N}$ are defined on $\mathcal{D}_3$; $\breve{l}$ is defined on $\mathcal{D}_4$. The vector functions satisfy $\breve{\varphi}, \breve{\psi}: [0,1]\to \mathbb{R}^n$, and $\breve{\phi}: [0,2]\times[0,1]\to \mathbb{R}^n$.

    Applying the transformations \eqref{u trans}--\eqref{v trans}, we convert the original system \eqref{transport_cascadingsystem} into the following target system: 
    \begin{subequations}\label{target system}
    \begin{align}
		w_t(z,t) =& -aw_z(z,t), \quad ~\quad \quad ~\quad\quad~ \:z\in(0,1],\label{target 1} \\
		w(0,t) =& \mu (0,t), \label{target 5}\\
		\tau \mu_t(s,t) =& \mu_s(s,t), \qquad\qquad\qquad\qquad   ~ s\in [0,1), \label{target 6} \\
        \mu(1,t) =& 0, \\
		\alpha_t(z,t) =& -a\alpha_z(z,t) - \bar{c}_1\alpha(z,t) \notag \\
		&+ \kappa_1\gamma(z,0,t)+F_1[w], \label{target 2}  \\
		\beta_t(z,t) =& -b\beta_z(z,t) - \bar{c}_2\beta(z,t)-\frac{\kappa_2 d_1}{\kappa_1}\alpha(1,t) \notag \\
		&+\kappa_2 \gamma(z,0,t) + F_2[w], \label{target 3} \\
       \alpha(0,t) =& 0, \quad \beta(1,t) = \alpha(1,t), \label{target boundary alpha beta}  
        \\
		D\gamma_t(z,s,t) =& \gamma_s(z,s,t), \qquad ~~~z\in [0,2],~~ \! s\in [0,1),\label{target 7} \\
        \gamma(z,1,t) =& 0, \label{target U2 boundary} \\
		\dot{X}(t) =& (A+BK)X(t) + Bw(1,t), \label{target 4}
	\end{align}
    \end{subequations}
    with initial conditions $w(z,0)=w_0(z)$, $\mu(s,0)=\mu_0(s)$, and $\gamma(z,s,0)=\gamma_0(z,s)$, where $\bar{c}_i>|c_i|$, $i=1,2$, are design parameters used to tune the convergence rates, with $c_i$ being the reaction coefficients in \eqref{system 2} and \eqref{system 3}. Choose $K\in\mathbb{R}^{1\times n}$ such that $A+BK$ is Hurwitz. 
    The functions $F_1[w]$ and $F_2[w]$ depend on the inverse kernel function $\breve \eta$ in \eqref{w trans}, given by  
    \begin{align}
		F_1[w] =& \breve{f}(z)\left(w(z)\!+\!\!\!\int_z^1 \!\!\breve{\eta}(z,y)w(y)dy\right)\!+\!g_1(z)w(1), \label{F1 rewrite}\\
		F_2[w] =& g_2(z)w(1), \label{F2 rewrite}
    \end{align}
    where 
    \begin{align}
        \breve{f}(z) =& f(z)\!+\!D\zeta_1\int_0^{\frac{z}{Da}}\!\!\mathrm{e}^{Dc_1\varsigma}f(z\!-\!Da\varsigma)d\varsigma H(Da\!-\!z). \label{breve f}
    \end{align}
    The parameter $\zeta_1=\bar c_1+c_1$ combines the convergence-rate design parameter $\bar c_1$ introduced in the target system~\eqref{target system} with the given reaction coefficient $c_1$.

    \subsection{Control Law}\label{section main result}
    Therefore, we obtain the controllers:
    \begin{align}
	&\begin{aligned}\label{U1}
		U_1(t) =& \int_{0}^{1}q(1,y)u(y,t)dy+\psi(1)^{\mathrm{T}}X(t) \\
		&+\int_{t-\tau}^{t}aq(\frac{t-\varsigma}{\tau}, 0)U_1(\varsigma)d\varsigma, 
	\end{aligned}\\
	&\begin{aligned}\label{U2}
		U_2 (z,t) =&  \frac{1}{\kappa_1}\int_0^1 k(z,1,y)\alpha(y,t)dy
		 \\
		&+\frac{1}{\kappa_2}\int_1^2 l(z,1,y)\beta(y,t)dy  \\        &+\int_{0}^{1}\!\!M(z,1,y)u(y,t)dy\!+\!\phi(z,1)^{\mathrm{T}}X(t)  \\
        &+\int_0^1\int_{t-D}^t k(z,\frac{t-\varsigma}{D},y)U_2(y,\varsigma)d\varsigma dy  \\
		&+\int_1^2\int_{t-D}^t l(z,\frac{t-\varsigma}{D},y)U_2(y,\varsigma)d\varsigma dy.
	\end{aligned}
    \end{align}

    \subsection{Main Result}
\begin{theorem}\label{main result}
Consider the system \eqref{transport_cascadingsystem} under the control laws \eqref{U1} and \eqref{U2}, with initial conditions $(u_0, r_0, \alpha_0, \beta_0, v_0, X_0)\in \mathcal{S}_c$ satisfying the corresponding compatibility conditions. Then, the closed-loop system admits a unique weak solution $(u, r, \alpha, \beta, v, X) \in C([0,\infty); \mathcal{S})$ and its zero equilibrium is exponentially stable in $\mathcal{S}$, i.e., there exist constants $\varGamma_0,~ \nu_0> 0 $ such that
\begin{align}
   V_1(t) \leq \varGamma_0 \, \mathrm{e}^{-\nu_0 t} V_1(0),
\end{align}
where  
\begin{align}
    V_1(t) = &\| u \|_\infty^2+ \| r \|_\infty^2 + \| \alpha \|_\infty^2 + \| \beta \|_\infty^2  + \| v \|_\infty^2 + |X|^2.\label{infty norm of the original states}
\end{align}
\end{theorem}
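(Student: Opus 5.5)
The proof proceeds in the target coordinates and then transfers the result back through Proposition~\ref{invertibility of v trans}. We take as given that the kernel equations for $\eta,q,k,l,M,\varphi,\psi,\phi$ are well posed with bounded solutions, as established in Section~\ref{well-posedness}. We also take as given that the transformations \eqref{u trans}--\eqref{v trans} map \eqref{transport_cascadingsystem} under \eqref{U1}--\eqref{U2} into the target system \eqref{target system}; the control laws are exactly the evaluations of \eqref{r trans} and \eqref{v trans} at $s=1$ that enforce $\mu(1,t)=0$ and $\gamma(z,1,t)=0$.

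\textbf{Well-posedness.} The target system is a cascade of transport equations with zero inflow, driving linear hyperbolic equations with bounded source terms and a Hurwitz ODE. Its solution can therefore be written explicitly along characteristics, which gives a unique weak solution in $C([0,\infty);\mathcal{S})$ for compatible continuous initial data. Because $\mathcal{T}$ is bounded and boundedly invertible on $\mathcal{S}$, the same holds for $(u,r,\alpha,\beta,v,X)$.

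\textbf{Stability of the target system.} We work in the cascade order and use explicit characteristic formulas instead of an $L^2$ Lyapunov functional, since the result is stated in the $L^\infty$ norm.
\begin{itemize}
\item[(i)] $\mu(s,t)=\mu_0(s+t/\tau)$ for $s+t/\tau<1$ and $\mu\equiv0$ for $t\ge\tau$. Similarly, $\gamma\equiv0$ for $t\ge D$, with $\|\gamma(t)\|_\infty\le\|\gamma_0\|_\infty$ before that time.
\item[(ii)] Since $w(0,t)=\mu(0,t)$, we get $w\equiv0$ for $t\ge\tau+1/a$, and $\|w(t)\|_\infty\le\max(\|w_0\|_\infty,\|\mu_0\|_\infty)$ before that time.
\item[(iii)] The ODE \eqref{target 4} is Hurwitz, so the variation-of-constants formula gives $|X(t)|\le c\,\mathrm{e}^{-\lambda t}\bigl(|X_0|+\|w_0\|_\infty+\|\mu_0\|_\infty\bigr)$.
\item[(iv)] For $\alpha$, integrate along the characteristics $z-at$. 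This yields the damping factor $\mathrm{e}^{-\bar c_1 t}$, and the source $\kappa_1\gamma(\cdot,0,t)+F_1[w]$ is bounded by $C(\|\gamma_0\|_\infty+\|w_0\|_\infty+\|\mu_0\|_\infty)$ and vanishes after a finite time. The boundary condition $\alpha(0,t)=0$ means $\alpha$ is also flushed out, so it decays exponentially.
\item[(v)] $\beta$ is handled the same way. Its inflow $\alpha(1,t)$ and the source $\alpha(1,t)$ in \eqref{target 3} decay exponentially by (iv).
\end{itemize}
Combining (i)--(v) gives $V_2(t)\le\varGamma\,\mathrm{e}^{-\nu t}V_2(0)$, where $V_2$ is the analogue of \eqref{infty norm of the original states} in the target variables $(w,\mu,\alpha,\beta,\gamma,X)$.

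\textbf{Main obstacle.} The delicate step is transferring the estimate to the original variables. We use Proposition~\ref{invertibility of v trans}: $\|\mathcal{T}\|$ and $\|\mathcal{T}^{-1}\|$ are finite on $\mathcal{S}$, so $V_1\le\|\mathcal{T}^{-1}\|^2V_2$ and $V_2\le\|\mathcal{T}\|^2V_1$. The concern is that the kernels $k,l$ contain Dirac and Heaviside components, so the operator bounds must be read in the distributional sense. In particular, the Dirac parts act as sampling operators $v(z,\epsilon)\mapsto v(z\pm\cdot,\epsilon)$. These are still bounded on $L^\infty$ because they have no $L^2$-type smoothing requirement, which is why the estimate works in $L^\infty$. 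Given the proposition, this step is only a matter of bookkeeping. The result then holds with $\varGamma_0=\varGamma\|\mathcal{T}\|^2\|\mathcal{T}^{-1}\|^2$ and $\nu_0=\nu$.
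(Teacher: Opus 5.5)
Your proposal is correct and matches the paper's argument. The paper likewise uses explicit characteristic solutions to show that the PDE states of the target system \eqref{target system} vanish after a finite time while the Hurwitz ODE decays exponentially (Proposition~\ref{stability of target system}), and then transfers the estimate and well-posedness back to the original variables through norm equivalence and the bounded invertibility of the transformations. The only cosmetic difference is that you take the bound $V_1\le C\,V_2$ directly from Proposition~\ref{invertibility of v trans}, whereas the paper derives it from the explicit inverse transformations \eqref{w trans}--\eqref{gamma trans} together with Lemma~\ref{inverse norm}.
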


\section{Kernel Equations and Their Well-posedness Analysis}\label{well-posedness}
  \subsection{Kernel Equations} \label{kernel functions}
    To map the original system \eqref{transport_cascadingsystem} to the target system \eqref{target system}, we derive three groups of coupled kernel equations.
    First, the kernel functions of transformations \eqref{u trans} and \eqref{r trans}
     satisfy
    \begin{align}
        \varphi(z)^{\mathrm{T}} = & K\mathrm{e}^{A\frac{1-z}{a}}, ~~ \quad \psi(s)^{\mathrm{T}} = \varphi(0)^{\mathrm{T}}\mathrm{e}^{A\tau s}, \quad\label{varphi, psi}        
    \end{align}
    \begin{subequations}\label{eta, p, q}
        \begin{align}
        \eta_z(z,y) =& -\eta_y(z,y), \\
        \eta(z,1) = &\int_z^1\frac{\theta}{a} \eta(z,y)dy+\frac{1}{a}\left(\varphi(z)^{\mathrm{T}}B-\theta\right),        
         \\
        q_s(s,y) =& \tau aq_y(s,y), \\
        q(0,y) =& \eta(0,y), \\
        q(s,1) =& \int_0^1\frac{\theta}{a} q(s,y)dy+\frac{1}{a}\psi(s)^{\mathrm{T}}B.
    \end{align}
    \end{subequations}
       
    Second, the kernel functions of transformation  \eqref{v trans} satisfy 
    \begin{align}
      \left\{\begin{aligned}\label{k, l}
    k_s(z,s,y) =& Dak_y(z,s,y) +Dc_1k(z,s,y),\\
    k(z,s,1) =&\frac{d_1}{a}\int_0^1k(z,s,y)dy+\frac{b\kappa_1}{a\kappa_2}l(z,s,1), \\
    k(z,0,y) =& -d_1\delta(y-1)-\zeta_1\delta(y-z), \\
    l_s(z,s,y) =& Dbl_y(z,s,y) +Dc_2l(z,s,y),\\
    l(z,s,2) =& \frac{d_2}{b}\int_1^2 l(z,s,y)dy, \\
    l(z,0,y) =& -d_2\delta(y-2)H(z-1)-\zeta_2\delta (y-z),
    \end{aligned}\right.
    \end{align}
    where $\zeta_i=\bar c_i+c_i>0$, and the following coupled PDE-ODE system:
    \begin{align}
    		M_{s}(z,s,y) =& DaM_{y}(z,s,y) +\frac{D}{\kappa_1}k(z,s,y)f(y),  \label{M}\\
		M(z,s,1) =& \int_0^1\!\frac{\theta}{a} M(z,s,y)dy \!+\! \frac{1}{a}(\varPi[k,l]\!+\!\!\phi(z,s)^{\mathrm{T}}\!B),  \label{M(z,s,1)}
         \\
		M(z,s,0) =& 0, \label{M(z,s,0)}   \\
    \label{phi(z,s)}
       \phi_s(z,s) =& DA^{\mathrm{T}}\phi(z,s) + \varPhi[k,l],\\   
		\phi(z,0) =& -\frac{1}{\kappa_1}\left[f(z)\breve{\varphi}(z)+g_1(z)K^{\mathrm{T}}+h_1(z)\right]   \nonumber
        \\&-\int_0^1M(z,0,y)\breve{\varphi}(y)dy, \label{phi(z,0)_1}
        \quad  \quad \mathrm{if}~ z < 1,  \\    \nonumber
	\phi(z,0) =& -\frac{1}{\kappa_2}\left[g_2(z)K^{\mathrm{T}}+h_2(z)\right]  \\   
        &-\int_0^1M(z,0,y)\breve{\varphi}(y)dy, \quad  \quad  \mathrm{if}~ z \ge 1,  \label{phi(z,0)_2}
       \end{align}
    where 
    \begin{align}  
		\varPi[k,l] =& \frac{1}{\kappa_1}\!\!\int_0^1k(z,s,y)g_1(y)dy \!+\!
		\frac{1}{\kappa_2}\!\!\int_1^2l(z,s,y)g_2(y)dy, \label{Pi} \\
        \varPhi[k,l] =& \frac{D}{\kappa_1}\!\!\!\int_0^1k(z,s,y)h_1(y)dy \!+\!\frac{D}{\kappa_2}\!\!\!\int_1^2l(z,s,y)h_2(y)dy, \label{Phi}
    \end{align}   	    
   and $\breve{\varphi}(\cdot)$ is the inverse kernel function appearing in \eqref{w trans}, whose explicit expression will be given in the next subsection. 

The dependencies among these kernel functions are summarized as follows. 
The first group, $\{\varphi, \psi, \eta, q\}$, defined by \eqref{varphi, psi} and \eqref{eta, p, q}, exhibits a sequential dependency: 
$\varphi$ is independent; $\psi$ depends on $\varphi$; $\eta$ depends on $\varphi$; and $q$ depends on both $\eta$ and $\psi$. 
Accordingly, they can be solved sequentially in the order $\varphi \rightarrow \psi \rightarrow \eta \rightarrow q$.
    The kernel equation for $\eta$ can be transformed, via a change of variables, into the same form as the kernel equation studied in \cite{krstic2008backstepping} and \cite{anfinsen2019adaptive}, which yields $\eta\in C(\mathcal{D}_1)$. Moreover, by introducing a similar change of variables, the kernel equation for $q(s,y)$ can be transformed into a simplified version of the equation governing $u$ given by Equation (12) in \cite{sano2019boundary}, which gives $q\in C([0,1];L^2(0,1))$ (see Appendix~\ref{appendix:well-posedness of eta and q equation} for details). Consequently, the transformations \eqref{u trans} and \eqref{r trans} can map the state variables $u\in L^\infty(0,1)$ and $r\in L^\infty(0,1)$ to $w\in L^\infty(0,1)$ and $\mu\in L^\infty(0,1)$, respectively.
   
In the second group $\{l, k, M, \phi\}$, defined by \eqref{k, l}--\eqref{phi(z,0)_2}, the function $l$ is independent; $k$ depends on $l$; and $M$ and $\phi$ are coupled functions depending on $k$ and $l$, with $\phi$ additionally depending on $\breve{\varphi}$.

\begin{remark}\label{boundary constraint}
Note that $M(z,s,0) = 0$ is required to ensure that the transformations \eqref{u trans}--\eqref{v trans} consistently map the original system to the target system, i.e.,  $M(z,s,0)u(0,t) = 0$. However, this condition, together with the boundary condition at $y=1$, does not uniquely determine the solution, because there exists a region not governed by any boundary condition, namely, the gray area between the dashed and dash-dotted characteristic lines in Fig.~\ref{M line}. To close the system, an additional boundary (initial) condition at $s=0$ is introduced, given by 
\begin{align}\label{M(z,0,y)}
	M(z,0,y) \!=\! 
	\begin{cases}
		m_0(y)\delta(y\! -\! z), & 0 \le y \le Da, \\
		0, & Da < y \le 1,
	\end{cases}
\end{align}
where 
\begin{equation}
           m_0(y) =\frac{D\zeta_1}{\kappa_1}\int_0^{\frac{y}{Da}}\mathrm{e}^{Dc_1\varsigma}f(y-Da\varsigma)d\varsigma.\label{m0}
\end{equation}
This boundary condition guarantees that the solution of the $M$-equation satisfies $M(z,s,0)=0$, thereby establishing its well-posedness, as shown in the subsequent section.
\end{remark}

    \subsection{Inverse Kernel Equations}\label{inverse kernel}  
    The inverse kernel equations are derived through transforming \eqref{target system} to the original system \eqref{transport_cascadingsystem}, given by
      \begin{align}
            \breve{\psi}(s)^{\mathrm{T}} =& \breve{\varphi}(0)^{\mathrm{T}}\mathrm{e}^{(A+BK)\tau s}, \label{breve psi}\\
            \breve{\varphi}(z)^{\mathrm{T}} =&  K\mathrm{e}^{(A+BK)\frac{1-z}{a}}\!-\!\int_z^1\frac{\theta}{a}K\mathrm{e}^{(A+BK)\frac{y-z}{a}}dy.\label{breve varphi} 
    \end{align}
    \begin{subequations}\label{breve eta q}
    \begin{align}
        \breve{\eta}_z(z,y) =&-\breve{\eta}_y(z,y)  , \\
        \breve{\eta}(z,1) =& \frac{1}{a}(\breve{\varphi}(z)^{\mathrm{T}}B-\theta), \\
        \breve{q}_s(s,y) =& \tau a\breve{q}_y(s,y), \\
        \breve{q}(0,y) =& \breve{\eta}(0,y), \qquad \breve{q}(s,1) = \frac{1}{a}\breve{\psi}(s)^{\mathrm{T}}B. \qquad\quad~
    \end{align}
    \end{subequations}
    \begin{align}\label{breve k, l}
    \left\{
    \begin{aligned}
		\breve{k}_s(z,s,y) =&  Da\breve{k}_y(z,s,y) -D\bar{c}_1\breve{k}(z,s,y),
		\\
		\breve{k}(z,s,1) =& \frac{b\kappa_1}{a\kappa_2}\breve{l}(z,s,1)-\frac{d_1}{a}\int_1^2\breve{l}(z,s,y)dy, \\
		\breve{k}(z,0,y) =& -d_1\delta(y-1)-\zeta_1\delta(y-z),  \\
		\breve{l}_s(z,s,y) =& Db\breve{l}_y(z,s,y) -D\bar{c}_2\breve{l}(z,s,y), 
		 \\
        \breve{l}(z,s,2) =& 0,  \\
		\breve{l}(z,0,y) =& -d_2\delta(y\!-\!2)H(z-1)\!-\!\zeta_2\delta (y-z). \quad
    \end{aligned}\right.
    \end{align}
    \begin{subequations}\label{breve M}
        \begin{align}
		\breve{M}_s(z,s,y) =& Da\breve{M}_y(z,s,y) +\frac{D}{\kappa_1}\breve{k}(z,s,y)\breve{f}(y)+\breve{\varUpsilon}[\breve k],  \\
		\breve{M}(z,s,1) =& \frac{1}{a}(\breve{\varPi}[\breve k,\breve l](z,s)+\breve{\phi}(z,s)^{\mathrm{T}}B), \\
		\breve{M}(z,0,y) =& \varrho(z,y),
    \end{align}
    \end{subequations}
    where 
    \begin{align}
    \breve{\varPi}[\breve k,\breve l] = &\frac{1}{\kappa_1}\!\int_0^1\breve{k}(z,s,y)g_1(y)dy \!+\!
		\frac{1}{\kappa_2}\!\int_1^2\breve{l}(z,s,y)g_2(y)dy,  \label{breve varPi}\\
		\breve{\varUpsilon} [\breve k]= &\!\!\int_0^y\frac{D}{\kappa_1}\breve{k}(z,s,\xi)\breve{f}(\xi)\breve{\eta}(\xi,y)d\xi, \label{breve varUpsilon} \\
        \varrho(z,y) =& M(z,0,y)+\int_{0}^{y}M(z,0, \xi)\breve{\eta}(\xi,y)d\xi, \label{varrho}
    \end{align}
    with $\breve{f}$ defined in  \eqref{breve f}. 
    Substituting \eqref{M(z,0,y)} into \eqref{varrho}, we further obtain
    \begin{align}
        \varrho(z,y)=& m_0(y)\delta(y-z)H(Da-y)\notag \\
        &+ m_0(z)\breve \eta (z,y)H(\min\{y,Da\}-z).
    \end{align}
    
    And
    \begin{align}\label{breve N equation}
    \left\{
    \begin{aligned}
		\breve{N}_s(z,s,y)=& \frac{D}{\tau}\breve{N}_y(z,s,y), \qquad\qquad \qquad\qquad\quad\\
		\breve{N}(z,s,1) =& \tau a\breve{M}(z,s,0), \\
		\breve{N}(z,0,y) =& 0.
    \end{aligned}\right.
    \end{align}
    \begin{align}\label{breve phi}
        \left\{\begin{aligned}
            \breve{\phi}_s(z,s) =& D(A+BK)^{\mathrm{T}}\breve{\phi}(z,s), \\
            \breve{\phi}(z,0)\!=& -\frac{1}{\kappa_1}[f(z)\breve{\varphi}(z)\!\!+\!\!g_1(z)K^{\mathrm{T}}\!\!\!+\!h_1(z)], \:\;z \!<\!1, \\
	\breve{\phi}(z,0) =& -\frac{1}{\kappa_2}[g_2(z)K^{\mathrm{T}}+h_2(z)],  \qquad \qquad~ z \!\ge\! 1. 
        \end{aligned}\right.
    \end{align}

    \subsection{\texorpdfstring{Well-posedness of the Kernel Equations for $\mathit{k}$ and $\mathit{l}$}{Well-posedness of the kernel equations for k and l}}\label{sec_well_k_l}
    
    To define the branching structure of the kernel functions, wherein different expressions apply in different subdomains, we first introduce the following auxiliary functions:
  \begin{align}
        &\lambda_1(s) \!=\! -Das+\!1, ~~ \lambda_2(s) \!=\! -Dbs+\!\!2,~~\lambda_3(s)=-\frac{D}{\tau}s+1,\\
        & \vartheta_1(z,s) \!=\! \min\{\lambda_1(s)\!+\!z,1\}, \label{lambda s} \\
        & \vartheta_2(z,s) \!=\! \min\{\lambda_1(s)\!\!+\!\!\frac{a}{b}(z\!\!-\!\!1),1\},\\
        &\vartheta_3(z,s) \!=\! \min\{2,\max\{\lambda_2(s),\lambda_2(s)+z-1\}\},\label{vartheta_3}\\
        &\vartheta_4(z,s)\!=\!\min\{1, \lambda_1(s)+\max\{0,\frac{a}{b}(z-1)\}\},\label{vartheta_4}\\
         &\vartheta_5(z,s)\!=\!\min\{\lambda_1(s), \max\{0, \lambda_1(s)+z-1\}\} .\label{vartheta_5}
    \end{align}

\begin{figure*}[htp]
		\begin{center}
                \subfigure[]{\includegraphics[width=0.32\linewidth]{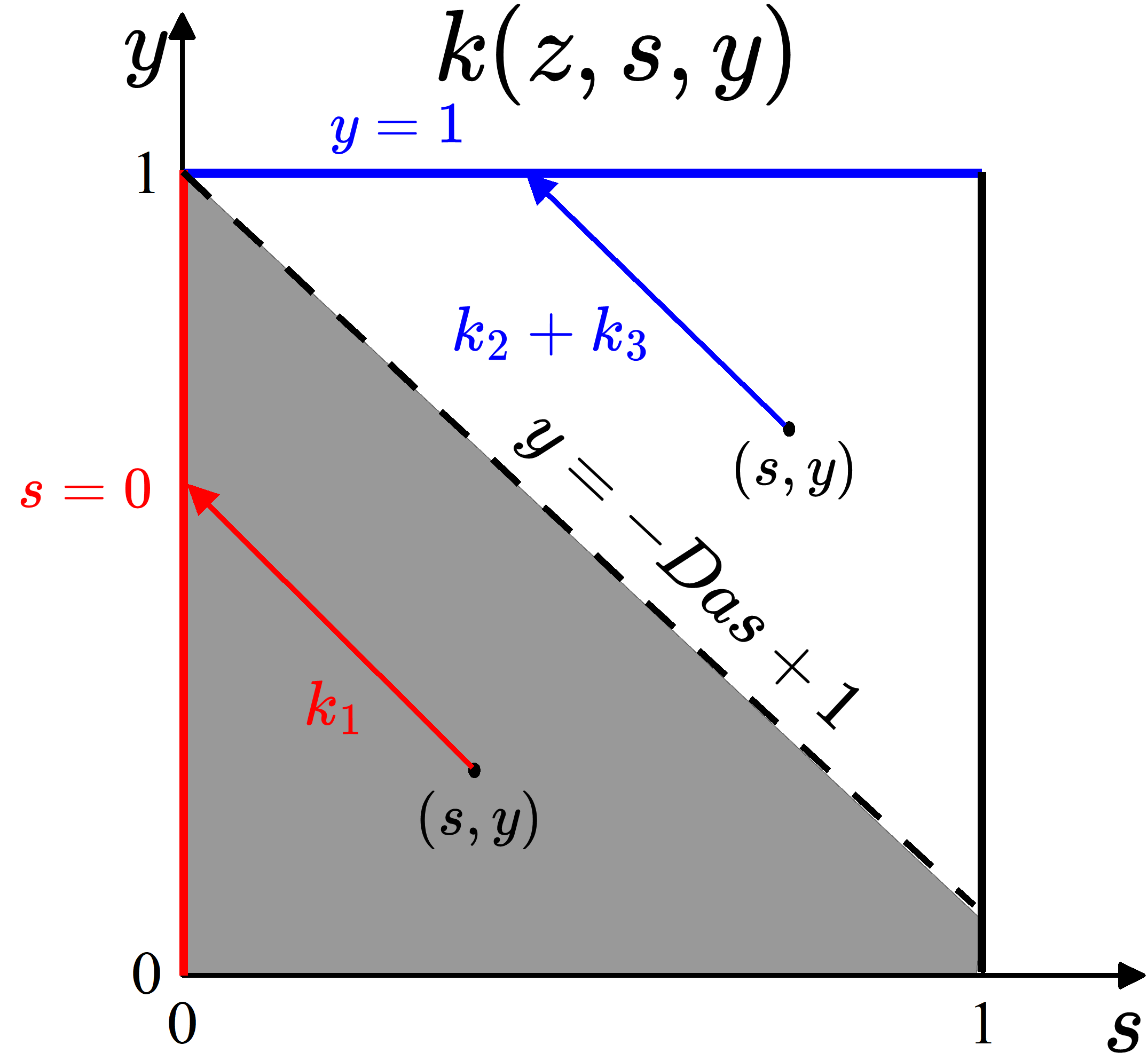}\label{k line}
			}
                \subfigure[]{\includegraphics[width=0.32\linewidth]{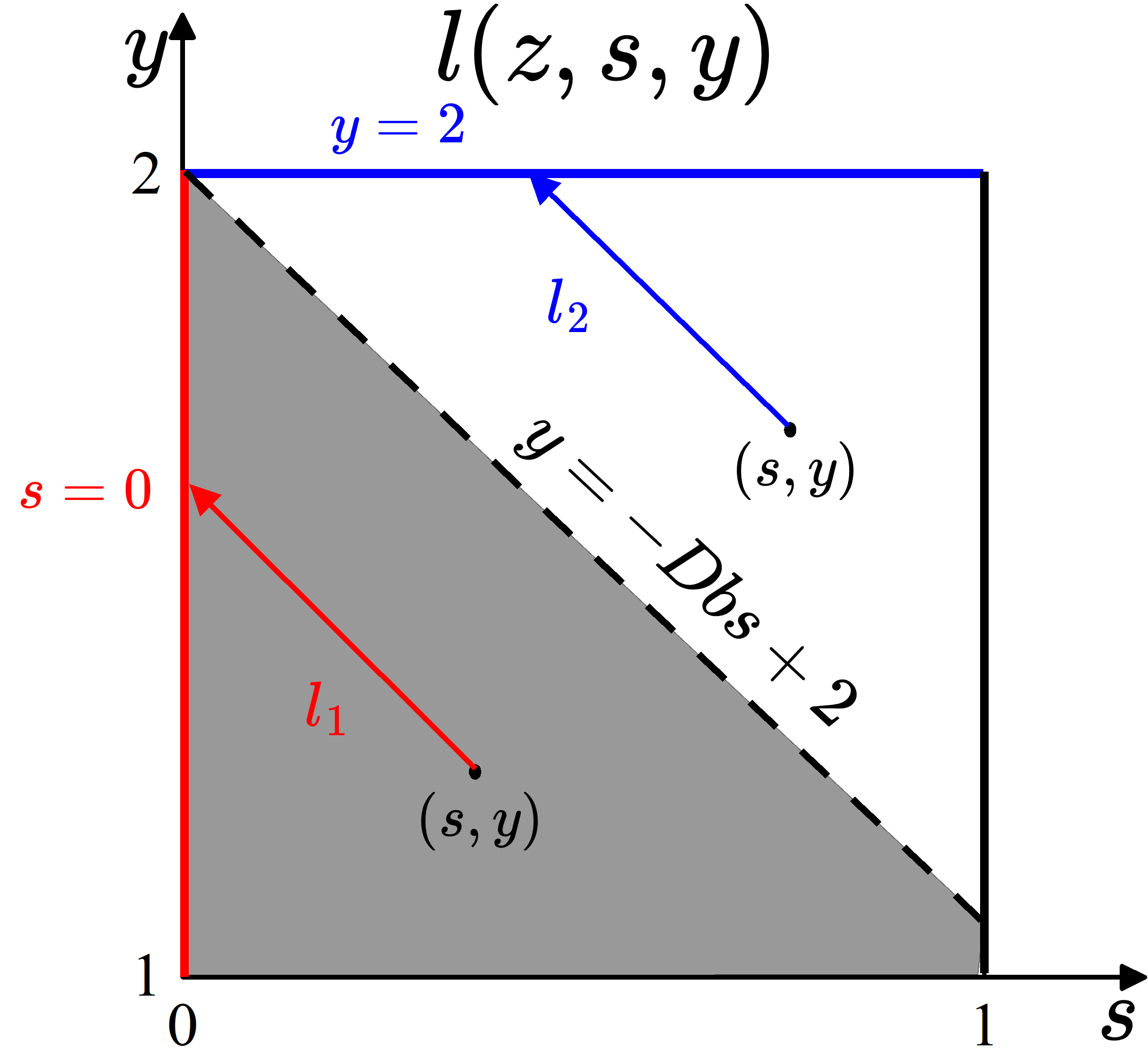}\label{l line}
			}
			\subfigure[]{\includegraphics[width=0.32\linewidth]{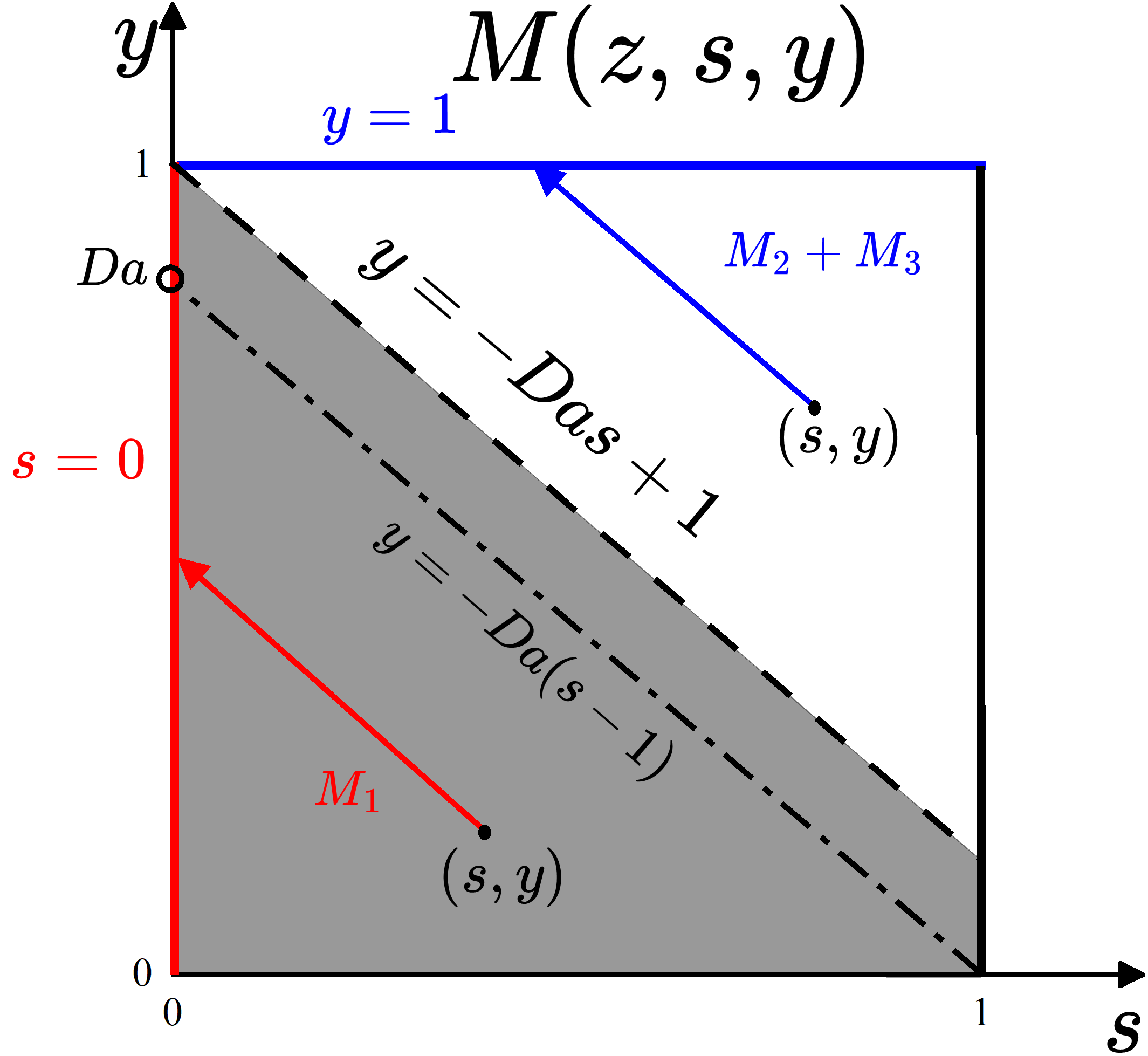}\label{M line}
			}
		\end{center}
        \vspace{-10pt}
        \caption{Schematic of the domain of $k$, $l$ and $M$, showing subregion division and characteristic lines. \subref{k line} Domain $\mathcal{D}_3$ for $k(z,s,y)$. \subref{l line} Domain $\mathcal{D}_4$ for $l(z,s,y)$. \subref{M line} Domain $\mathcal{D}_3$ for $M(z,s,y)$. 
        }
        \label{figure:characteristics}
    \end{figure*}

    \begin{lemma}
    The system of coupled kernel equations \eqref{k, l}, with the initial condition involving delta functions at $s=0$, admits a unique solution $(k,l)$ on the domain $\mathcal{D}_3 \times \mathcal{D}_4$.
    \end{lemma}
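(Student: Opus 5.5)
The plan is to exploit the one-way coupling in \eqref{k, l}: the $l$-equation does not involve $k$, whereas $k$ enters only through the boundary term $l(z,s,1)$. So I first solve for $l$ on $\mathcal{D}_4$, then for $k$ on $\mathcal{D}_3$, treating $z$ as a parameter throughout. In both equations $s$ acts as time and $y$ as space, and information moves along the characteristics $y+Das=\text{const}$ (resp.\ $y+Dbs=\text{const}$) from the boundary $y=1$ (resp.\ $y=2$) toward smaller $y$. I will split each kernel into a singular part, which carries the delta functions from $s=0$, and a bounded regular part. For the singular part I write
\[
l(z,s,y)=-\zeta_2 e^{Dc_2 s}\delta(y-z+Dbs)-d_2H(z-1)e^{Dc_2 s}\delta(y-\lambda_2(s))+\bar l(z,s,y),
\]
and analogously
\[
k=-\zeta_1e^{Dc_1s}\delta(y-z+Das)-d_1e^{Dc_1s}\delta(y-\lambda_1(s))+\bar k .
\]
Under Assumption~\ref{assumption on delay} we have $Das<1$ and $Dbs<1$, so for $s\in[0,1]$ the transported deltas stay inside $[0,1]$ (resp.\ $[1,2]$) or leave through the left boundary, where no condition is imposed. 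This makes the piecewise description, with breakpoints $\vartheta_i$, well defined.

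Next I derive the boundary data for the regular parts. Inserting the decomposition into the nonlocal condition $l(z,s,2)=\frac{d_2}{b}\int_1^2 l\,dy$ turns the delta integrals into explicit bounded functions of $(z,s)$, which are indicators that the delta locations lie in $[1,2]$ times exponentials. The result is
\[
\bar l(z,s,2)=\frac{d_2}{b}\int_1^2\bar l(z,s,y)\,dy+G_2(z,s),
\]
with $G_2$ bounded and piecewise continuous, and $\bar l(z,0,y)=0$. Integrating along characteristics gives $\bar l(z,s,y)=e^{Dc_2\sigma}\bar l(z,s-\sigma,2)$ with $\sigma=(2-y)/(Db)$ whenever $s\ge\sigma$, and $\bar l=0$ otherwise. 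Substituting this into the boundary condition yields a Volterra integral equation of the second kind in $s$ for the trace $\lambda(s):=\bar l(z,s,2)$:
\[
\lambda(s)=G_2(z,s)+\frac{d_2}{b}\int_0^{s}Db\,e^{Dc_2\sigma}\lambda(s-\sigma)\chi(\sigma)\,d\sigma .
\]
I solve this by successive approximations. The $n$-th iterate is bounded by $\|G_2\|_\infty (C s)^n/n!$, so the series converges uniformly in $z$ and gives a unique bounded solution. This settles existence and uniqueness of $l$.

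For $k$ I proceed in the same way. The boundary condition at $y=1$ now contains $\frac{b\kappa_1}{a\kappa_2}l(z,s,1)$. This term is known from the first step, but it may itself contain delta contributions, namely when $z-Dbs=1$ or $\lambda_2(s)=1$. The latter cannot occur because $Dbs<1$. The former occurs only on a measure-zero set of $(z,s)$, and I will handle it by interpreting the trace in the distributional sense and absorbing it into an extra transported delta term along the curve $\vartheta_2$. Apart from this, the Volterra fixed point runs exactly as for $l$ and yields a unique $\bar k$.

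Uniqueness in the class of distributions of the form "finite sum of transported deltas plus $L^\infty$ part" follows because the difference of two solutions satisfies the homogeneous problem with zero data. The Volterra iteration then forces that difference to vanish.

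I expect the main obstacle to be the bookkeeping of the subdomains, in particular the careful treatment of the trace $l(z,s,1)$ entering the $k$-boundary condition and of the Heaviside factor $H(z-1)$. These are what produce the regions delimited by $\vartheta_1,\dots,\vartheta_5$, and the regularity claim for the regular parts holds only piecewise on those regions.
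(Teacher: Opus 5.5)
Your argument is correct apart from one slip, and it uses the same ingredients as the paper. You solve $l$ before $k$, carry the delta data along characteristics, obtain the bounded remainder from the nonlocal boundary condition by successive approximations, and let the $\zeta_2$-delta in the trace $l(z,s,1)$ produce a further transported delta in $k$ along $y=\lambda_1(s)+\frac{a}{b}(z-1)$, which is the paper's $k_2$.

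Where you differ is in how the bounded part is organized. For each fixed $z$ you reduce it to one scalar Volterra equation in $s$ for the boundary trace, with forcing that is piecewise constant in $s$. The paper instead writes integral equations in the domain separately on each sub-branch ($y\le\vartheta_3$ versus $y>\vartheta_3$ for $l$, and analogously with $\vartheta_1,\vartheta_2,\vartheta_4$ for $k$), and sums the iterates into closed-form exponentials. The paper's route delivers explicit kernels, which it reuses for $M$, for the bounds on $\mathcal{T}^{j}$, and in the invertibility proof. Yours gives only existence, uniqueness and boundedness of the regular part, which is all the lemma claims.

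Your equation, however, is global in $s$, so the history of the first sub-branch is automatically fed into the second. This coupling is genuinely present. For $1<z<1+Dbs$, the paper's branch $\varGamma_{12}(z)\varrho_2(s)\mathrm{e}^{\frac{d_2}{b}(y-\lambda_2(s))}$ on $\vartheta_3<y\le 2$ does not satisfy $l(z,s,2)=\frac{d_2}{b}\int_1^2 l\,dy$. Substituting it, the two sides differ by $\frac{d_2\zeta_2}{b}(\mathrm{e}^{\frac{d_2}{b}(z-1)}-1)\varrho_2(s)\neq 0$. Solving your trace equation gives instead $\varrho_2(s)(\varGamma_{11}(z)\mathrm{e}^{\frac{d_2}{b}(y-\lambda_2(s))}+(\varGamma_{12}(z)-\varGamma_{11}(z))\mathrm{e}^{\frac{d_2}{b}(y-\vartheta_3(z,s))})$ on that branch. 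The branch $-\frac{d_1^2}{a}\varrho_1(s)\mathrm{e}^{\frac{d_1}{a}(y-\lambda_1(s))}$ of $k_3$ for $0<z<Das$ has the same defect. None of this affects existence or uniqueness, but it is a concrete reason to prefer your formulation.

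The slip: the $\zeta_1$-term of your singular part of $k$ must carry the factor $H(1-z)$ (compare $\varGamma_{21}$ in the paper). For $z>1$ the initial delta $\delta(y-z)$ lies outside $[0,1]$, so it is absent from the data. As written, however, your term $-\zeta_1\mathrm{e}^{Dc_1s}\delta(y-z+Das)$ would bring a delta into the domain through $y=1$ at $s=(z-1)/(Da)$. That contradicts your statement that transported deltas only leave through the left boundary. It would also force a compensating delta into $\bar k$, so $\bar k$ would no longer be bounded.
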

    
\begin{proof}
   We first remove the source terms from \eqref{k, l} via the change of variables:
    \begin{align}
        l  \!=\! \mathrm{e}^{Dc_2s}\bar{l},~~\quad  \;
        k  \!=\! \mathrm{e}^{Dc_1s}\bar{k},\label{change of variables}
    \end{align}
   which yields 
        \begin{align}
		\bar{k}_s(z,s,y) =& Da\bar{k}_y(z,s,y),\label{k-1}\\
		\bar{k}(z,s,1) =&\frac{d_1}{a}\int_0^1\bar{k}(z,s,y)dy\!+\!\frac{b\kappa_1}{a\kappa_2}\mathrm{e}^{-Dc_1s}l(z,s,1), \label{k-2}\\
		\bar{k}(z,0,y) =& -{d}_1\delta(y-1)-{\zeta}_1\delta(y-z), \label{k-3}\\
		\bar{l}_s(z,s,y) =& Db\bar{l}_y(z,s,y),\label{l-1}\\
		\bar{l}(z,s,2) =& \frac{d_2}{b}\int_1^2 \bar{l}(z,s,y)dy, \label{l-2}\\
		\bar{l}(z,0,y) =& -{d}_2\delta(y-2)H(z-1)-{\zeta}_2\delta (y-z).\label{l-3}
	\end{align}
    
   We first consider the PDE associated with $\bar{l}$, which is solvable via the method of characteristics. For $1 \le y \le \lambda_2(s)$, it is given explicitly by 
    \begin{align}
        \bar{l}(z,s,y)=&-{d}_2\delta(y-\lambda_2(s))H(z-1)\notag \\
       & -{\zeta}_2\delta(y-\lambda_2(s)+2-z). \label{bar l first part}
    \end{align}
In contrast, for $\lambda_2(s) < y \le 2$, which corresponds to the white region in Fig.~\ref{l line}, the solvability analysis requires the combined use of the method of characteristics and the method of successive approximations. We first reformulate the equation in integral form via the method of characteristics as
\begin{align}
        \bar{l}(z,s,y) 
        =&\int_{\chi_2(s,y)}^2\frac{d_2}{b}\bar{l}(z,\chi_1(s,y),\xi)d\xi\nonumber \\       
        -\frac{d_2^2}{b}&H(z-1)\int_1^{\chi_2(s,y)}\delta(\xi-\chi_2(s,y))d\xi  \notag \\
        -\frac{d_2}{b}&{\zeta}_2\int_{1}^{\chi_2(s,y)}\!\delta(\xi-\chi_2(s,y)+2-z)d\xi, \label{l integral the first}
    \end{align}
    where $\chi_1(s,y)=s-\frac{2-y}{Db}$, $\chi_2(s,y)=-Dbs-y+4$.
    Applying the shifting property of the delta function \cite{tang2007mathematical}, we have 
    \begin{align}
       \bar{l}(z,s,y) =&\varGamma_{11}(z) \!+\!\varOmega[\bar{l}](z,s,y),&& \mathrm{if}~ \lambda_2(s)\!<y\!\le \vartheta_3(z,s),\label{with zeta2} \\
       \bar{l}(z,s,y) =&\varGamma_{12}(z) \!+\!\varOmega[\bar{l}](z,s,y), &&\mathrm{if}~ \vartheta_3(z,s) \!<\! y \!\le\! 2,\label{without zeta2}
    \end{align}
    where $\varOmega$, $\varGamma_{11}(z)$, and $\varGamma_{12}(z)$ are defined as
    \begin{align}
        &\varOmega[\bar{l}](z,s,y)=\int_{\chi_2(s,y)}^2\frac{d_2}{b}\bar{l}(z,\chi_1(s,y),\xi)d\xi, \label{operator varOmega}\\
        &\varGamma_{11}(z)\!=\!\frac{d_2^2}{-b}H(z\!-\!1)-\frac{d_2}{b}{\zeta}_2,~\varGamma_{12}(z)\!=\!-\frac{d_2^2}{b}H(z\!-\!1).\label{varGamma11}
    \end{align}
   First, consider \eqref{with zeta2}. We apply the recursive iteration starting with $\bar{l}^{0}(z,s,y)=0$,
        \begin{align}
            \bar{l}^{n+1}(z,s,y) =& \varGamma_{11}(z)\!+\!\varOmega[\bar{l}^n](z,s,y). 
        \end{align}
Further, define $\Delta{\bar{l}}^{n}=\bar{l}^{n+1}-\bar{l}^{n}$ for $n=0, 1, 2, \cdots$, which gives  $\Delta \bar{l}^{0}(z,s,y)=\varGamma_{11}(z)$  and 
\begin{align}
            \Delta \bar{l}^{n+1}(z,s,y)=\varOmega[\Delta\bar{l}^n](z,s,y).\label{l: n+1 and n}
        \end{align}
We can obtain
\begin{align}
        &\Delta \bar{l}^{n}(z,s,y)=\varGamma_{11}(z)\frac{1}{n!}\Bigl(\frac{d_2}{b}\bigl(y+Dbs-2\bigr)\Bigr)^{n},\label{assume on l: n+1 and n}
    \end{align}
which gives     
    \begin{align}
        \bar{l}(z,s,y) 
        =&\sum\limits_{n=0}^{\infty}\varGamma_{11}(z)\frac{1}{n!}(\frac{d_2}{b})^{n}(y+Dbs-2)^{n} . 
    \end{align}
The solution to \eqref{without zeta2} is obtained similarly. Consequently, 
    \begin{align}
        \bar{l}(z,s,y) =&\varGamma_{11}(z)\mathrm{e}^{\frac{d_2}{b}(y-\lambda_2(s))}, ~\mathrm{if}~ \lambda_2(s)\!< \! y \!\le \!\vartheta_3(z,s), \label{bar l second part} \\
        \bar{l}(z,s,y) =&\varGamma_{12}(z)\mathrm{e}^{\frac{d_2}{b}(y-\lambda_2(s))}, ~\mathrm{if}~ \vartheta_3(z,s) \!<\! y \!\le\! 2. \label{bar l third part} 
    \end{align}
    The uniqueness of the solution, which relies on the linearity of the kernel equations, is established via a contradiction argument as presented in \cite[Thm.~1]{krstic2008backstepping}.
    
    Analogously, the explicit solution for  $\bar k$   is obtained after a  lengthy computation (see the detailed computations in Appendix~\ref{appendix:derivation of explicit solution for bar k}). Using the change of variables \eqref{change of variables}, we revet to the original functions $l$ and $k$:
    \begin{align}\label{l solution}
        l(z,s,y)\!=\!\left\{\begin{aligned}
        &\!l_1(z,s,y), \qquad\qquad\quad \:~\;\mathrm{if}\: 1\!\le \!y\!\le\! \lambda_2(s), \\
        &\!l_2(z,s,y), \qquad\qquad\quad \: ~\;\mathrm{if}\: \lambda_2(s)\!<\!y\le 2, 
        \end{aligned}\right. 
    \end{align}
    \begin{align}\label{k solution}
        k(z,s,y)\!=\!\left\{\begin{aligned}
        &\!k_1(z,s,y), \qquad\qquad\quad \:\mathrm{if}\: 0\!\le\!y\!\le\! \lambda_1(s), \\
        &\!k_2(z,s,y)\!+\!k_3(z,s,y), \; \mathrm{if}\: \lambda_1(s)\!<\!y\!\le\! 1,
        \end{aligned}\right. 
    \end{align}   
   where
    \begin{align}
        &l_1(z,s,y) = -d_2\varrho_2(s)\delta\bigl(y\!-\!\lambda_2(s)\bigr)H(z - 1) \notag \\
        &\qquad \qquad  \quad -\zeta_2\varrho_2(s)\delta\bigl(y-\lambda_2(s)+2-z\bigr), \\
        &k_1(z,s,y) = -\zeta_1\varrho_1(s)\delta\bigl(y-\lambda_1(s)+1-z\bigr)\notag \\
        &\qquad \qquad ~ \quad -d_1\varrho_1(s)\delta\bigl(y-\lambda_1(s)\bigr), \label{k1}\\
        &k_2(z,s,y) =\bar{\zeta}_2\varrho_2(s)\mathrm{e}^{\frac{c_2-c_1}{a}(y-1)}\delta\bigl(3\!-\!z\!-\!\lambda_2(s)\!-\!\frac{b}{a}(1\!-\!y)\bigr),\label{k2}
    \end{align}     
    and $\varrho_1(s)\!=\!\mathrm{e}^{Dc_1s}$, $\varrho_2(s)\!=\!\mathrm{e}^{Dc_2s}$, $\bar{\zeta}_2=-\frac{b\kappa_1}{a\kappa_2}\zeta_2$. The bounded function $l_2(z,s,y)$ and $k_3(z,s,y)$ are piecewise defined as
    \begin{align}\label{l2}
        l_2(z,s,y)=\left\{\begin{aligned}
            &\varGamma_{11}(z)\varrho_2(s)\mathrm{e}^{\frac{d_2}{b}(y-\lambda_2(s))}, ~\mathrm{if}\,\lambda_2(s)\!\!< \!\! y \!\!\le\! \!\vartheta_3(z,s), \\
            &\varGamma_{12}(z)\varrho_2(s)\mathrm{e}^{\frac{d_2}{b}(y-\lambda_2(s))}, ~\mathrm{if}\, \vartheta_3(z,s) \!<\! y \!\le\! 2.
        \end{aligned}\right.
    \end{align}
    \begin{subequations}\label{k3}
    \begin{align}
        &\mathrm{For}~z\le  1, \notag \\
        &\; k_3(z,s,y)\!=\!\!\left\{\begin{aligned}
            & \left(-\frac{d_1^2}{a}-\frac{d_1}{a}\zeta_1\right)\varrho_1(s)\mathrm{e}^{\frac{d_1}{a}(y-\lambda_1(s))}, \qquad\qquad  \; \\
            &\qquad \qquad\qquad \qquad \qquad \:\mathrm{if}~ \lambda_1(s) \!\!< \!\!y \!\!\le \!\!\vartheta_1(z,s), \\
            & \!-\!\frac{d_1^2}{a}\varrho_1(s)\mathrm{e}^{\frac{d_1}{a}(y-\lambda_1(s))},\,\: \mathrm{if}~ \vartheta_1(z,s)<y\le 1.
        \end{aligned}\right.\\
        &\mathrm{For}~z> 1, \notag \\
        &\; k_3(z,s,y)\!=\!\!\left\{\begin{aligned}
            & \!-\!\frac{d_1^2}{a}\varrho_1(s)\mathrm{e}^{\frac{d_1}{a}(y-\lambda_1(s))},\,\: \mathrm{if}~\lambda_1(s) \!\!< \!\!y \!<\! \vartheta_2(z,s), \\
            & \left(-\frac{d_1^2}{a}+\varGamma_2(z)\right)\varrho_1(s)\mathrm{e}^{\frac{d_1}{a}(y-\lambda_1(s))},  \\&\qquad \qquad\qquad \qquad \qquad \:\mathrm{if}~ \vartheta_2(z,s)\le y\le 1,
        \end{aligned}\right.
    \end{align}   
    \end{subequations}
    where $\varGamma_2(\!z\!)\!\!=\!\!\frac{d_1}{b}\bar\zeta_2\mathrm{e}^{\frac{c_2\!-\!c_1}{b}(z\!-\!1)}$, and $\varGamma_{11}(z)$, $\varGamma_{12}(z)$ are given by \eqref{varGamma11}. Subregion divisions of  $k$  and  $l$ are shown in Fig.~\ref{figure:characteristics}.
    \end{proof}

    \subsection{\texorpdfstring{Well-posedness of the Kernel Equations for $\phi$ and $\mathit{M}$}{Well-posedness of the kernel equation for phi and M}} \label{wellposedness part 2}     
To proceed, we need to prove the boundedness of the first three integrals defined in \eqref{definition of components in trans}.
\begin{lemma}\label{direct norm}
For each $j \in \{v, \alpha, \beta\}$, the integral operator $\mathcal{T}^{j}$ defined in \eqref{definition of components in trans} is bounded, i.e., there exists a constant $C_j > 0$ such that
\begin{align}\label{direct norm estimate}
    \|\mathcal{T}^{j}[f]\|_\infty \leq C_j \|f\|_\infty.
\end{align}
\end{lemma}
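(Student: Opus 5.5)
The plan is to use the explicit form of $(k,l)$ from the preceding lemma: split each kernel into a singular part built from Dirac deltas and a bounded part. I will then estimate each piece separately, uniformly in $(z,s)$. Write $k=k_1+(k_2+k_3)\mathbf{1}_{\{y>\lambda_1(s)\}}$ and $l=l_1+l_2\mathbf{1}_{\{y>\lambda_2(s)\}}$. The functions $k_3$ and $l_2$ are piecewise exponentials whose coefficients are built from $d_i,\zeta_i,\varGamma_{11},\varGamma_{12},\varGamma_2$. Since $s\in[0,1]$ and $y$ ranges over bounded intervals, these are bounded by a constant $\bar K$ on $\mathcal{D}_3$ (resp. $\mathcal{D}_4$). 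The contributions of $k_3$ and $l_2$ are therefore handled by the trivial estimate $|\int k_3 f\,dy|\le \bar K\|f\|_\infty$.

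For the delta parts $k_1,k_2,l_1$, I will use the sampling property. For $\mathcal{T}^\alpha$ and $\mathcal{T}^\beta$, the term $\int_0^1 k_1(z,s,y)\alpha(y)dy$ equals a sum of at most two point evaluations, $-\zeta_1\varrho_1(s)\alpha(\lambda_1(s)-1+z)-d_1\varrho_1(s)\alpha(\lambda_1(s))$, taken only when the argument lies in $[0,\lambda_1(s)]$. Likewise $k_2$ produces $\bar\zeta_2\varrho_2(s)\mathrm{e}^{\cdots}\frac{a}{b}\,\alpha(y^*)$, where $y^*$ solves $3-z-\lambda_2(s)-\frac{b}{a}(1-y)=0$; the factor $a/b$ comes from the delta scaling. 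Each such term is bounded by $C\|\alpha\|_\infty$, with $C$ depending on $d_i,\zeta_i,\kappa_i,a,b,D,c_i$. To make pointwise evaluation meaningful for $L^\infty$ functions, I will first prove the estimate for continuous $f$ and then extend by density in the relevant sense. Alternatively, I can interpret the operator as acting on the characteristic traces, which are well defined for the transported states. The result is $C_\alpha,C_\beta$.

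For $\mathcal{T}^v$ there is an extra integral in $\epsilon\in[0,s]$, and this is the step I expect to be the main obstacle. A single delta in $y$ collapses the $y$-integral, leaving $D\int_0^s\varrho_1(s-\epsilon)\,v(y^*(z,s-\epsilon),\epsilon)\,d\epsilon$, which is bounded by $D\,\mathrm{e}^{D|c_1|}\|v\|_\infty$. The care needed is in checking that the delta location is a genuine function of $(z,s-\epsilon)$ on the admissible subregion, e.g.\ $y=\lambda_1(s-\epsilon)-1+z\in[0,1]$. It must never degenerate to a case where the delta is integrated against a set of measure zero in both variables, which would give an ill-defined trace. Because the argument of each delta depends affinely on $y$ with nonzero slope ($1$, or $b/a$ for $k_2$), the $y$-integral always collapses cleanly and the remaining $\epsilon$-integral is over a bounded interval. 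The bounded parts $k_3,l_2$ contribute at most $D\bar K\cdot 2\cdot\|v\|_\infty$.

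Summing the finitely many terms gives $C_v,C_\alpha,C_\beta$ explicitly in terms of the system parameters, which completes the argument.
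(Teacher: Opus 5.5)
Your proposal is correct and follows essentially the same route as the paper's proof in the appendix: split $k,l$ into the Dirac parts $k_1,k_2,l_1$ and the bounded parts $k_3,l_2$. The deltas then collapse by the sifting property into point evaluations for $\mathcal{T}^\alpha,\mathcal{T}^\beta$ (the paper's terms $J_1,J_2,J_3$, including the factor $a/b$ from $k_2$) and into single $\epsilon$-integrals along lines for $\mathcal{T}^v_1,\mathcal{T}^v_2$, each bounded by a constant times $\|f\|_\infty$.

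One adjustment: drop the density step, since continuous functions are not dense in $L^\infty$. It is also unnecessary, a point the paper leaves implicit. Every sampled term is $f$ composed with a full-rank affine map in the output variables, e.g.\ $(z,s)\mapsto z-Das$ for $\mathcal{T}^\alpha$ or $(s,\epsilon)\mapsto(\lambda_1(s)+Da\epsilon,\epsilon)$ for $\mathcal{T}^v$. Such a term is therefore defined almost everywhere, with essential supremum at most $\|f\|_\infty$.
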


The detailed proof is provided in Appendix~\ref{appendix:proofs of lemmas 2 and 4}. Mainly, Dirac delta functions do not affect the boundedness of the transformations, as their integration yields bounded terms and simplifies integrals.

By Lemma~\ref{direct norm}, the operators $\varPi[k,l]$ and $\varPhi[k,l]$, defined in \eqref{Pi} and \eqref{Phi}, respectively, are bounded, since $\varPi[k,l]=\mathcal{T}^{\alpha}[g_1]+\mathcal{T}^{\beta}[g_2]$ and $\varPhi[k,l]=D\mathcal{T}^{\alpha}[h_1]+D\mathcal{T}^{\beta}[h_2]$.

The solution for $\phi$, satisfying \eqref{phi(z,s)}--\eqref{phi(z,0)_2}, can be obtained explicitly:
        \begin{align}
            \phi(z,s) = \mathrm{e}^{A^{\mathrm{T}}Ds}\phi(z,0) \!\!+\!\! \int_0^s \mathrm{e}^{A^{\mathrm{T}}D(s-\varsigma)} \varPhi[k,l](z,\varsigma) \, d\varsigma,
        \end{align}
        where $\varPhi$ is defined in \eqref{Phi}, and $\phi(z,0)$ depends on $M(z,0,y)$, which is explicitly defined in \eqref{M(z,0,y)}.

\begin{lemma}\label{well-posedness M}
   The kernel equation \eqref{M}, with boundary conditions \eqref{M(z,s,1)} and \eqref{M(z,0,y)}, admits a unique solution $M$ on the domain $\mathcal{D}_3$, which satisfies $M(z,s,0)=0$.
\end{lemma}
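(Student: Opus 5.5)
We will treat the $M$-equation \eqref{M} as a transport equation in the variable $s$ (playing the role of time) with characteristic speed $Da$ toward decreasing $y$: along the lines $y+Das=\text{const}$ one has $\frac{d}{ds}M(z,s,y_0-Das)=\frac{D}{\kappa_1}k(z,s,y_0-Das)f(y_0-Das)$. The domain $\mathcal{D}_3$ for fixed $z$ is then split by the characteristic $y=\lambda_1(s)=1-Das$ emanating from the corner $(s,y)=(0,1)$. By Assumption~\ref{assumption on delay}, $Da<1$, so this line reaches $s=1$ at $y=1-Da>0$ and never hits $y=0$ inside the domain. Points with $y\le\lambda_1(s)$ trace back to the initial line $s=0$, where \eqref{M(z,0,y)} is prescribed. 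Points with $y>\lambda_1(s)$ trace back to the boundary $y=1$, where \eqref{M(z,s,1)} applies.

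\textbf{Region $y\le\lambda_1(s)$.} Integrating along characteristics gives the explicit formula
\begin{align*}
M(z,s,y)=M(z,0,y+Das)+\frac{D}{\kappa_1}\int_0^s k(z,\sigma,y+Da(s-\sigma))f(y+Da(s-\sigma))d\sigma .
\end{align*}
Here $k$ is known explicitly from the previous lemma, i.e.\ \eqref{k solution}. Its delta parts $k_1,k_2$ integrate along the characteristic to bounded (possibly Heaviside-type) contributions, and its bounded part $k_3$ only appears for $y>\lambda_1$. We also use the Dirac pieces of $M(z,0,\cdot)$; the shifting property of $\delta$ applies exactly as in the proof for $\bar l$.

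The key verification is $M(z,s,0)=0$ for $s\in[0,1]$. The characteristic through $(s,0)$ starts at $y_0=Das\le Da$, where $M(z,0,y_0)=m_0(y_0)\delta(y_0-z)$. Along it, the source term from the component $-\zeta_1\varrho_1\delta(y-\lambda_1(s)+1-z)$ of $k_1$ contributes exactly $-\frac{D\zeta_1}{\kappa_1}\int e^{Dc_1\cdot}f(\cdot)\,d\cdot$ concentrated on the same line $y_0=z$. The term $-d_1\delta(y-\lambda_1)$ lives only on $y=\lambda_1$, which the line through $(s,0)$ never meets since $\lambda_1>0$. Using the definition \eqref{m0} of $m_0$, the two contributions cancel. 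Thus $M(z,s,0)=0$, which is exactly why the extra condition \eqref{M(z,0,y)} was chosen (Remark~\ref{boundary constraint}). I would carry out this cancellation by writing both terms as multiples of $\delta(y+Das-z)$ and comparing coefficients after the substitution $\varsigma=s-\sigma$.

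\textbf{Region $y>\lambda_1(s)$.} Here the value at $y=1$ is needed, which involves the nonlocal term $\int_0^1\frac{\theta}{a}M\,dy$ and $\phi(z,s)$, itself coupled through \eqref{phi(z,s)}. However, $\phi(z,0)$ depends only on $M(z,0,\cdot)$, which is already known, so $\phi$ is given explicitly by the formula preceding the lemma. Writing $G(z,s)=M(z,s,1)$, integrating along characteristics gives
\begin{align*}
M(z,s,y)=G\Big(z,s-\frac{1-y}{Da}\Big)+\frac{D}{\kappa_1}\int_{s-\frac{1-y}{Da}}^{s}k f\,d\sigma .
\end{align*}
Substituting into \eqref{M(z,s,1)} yields a Volterra-type integral equation for $G(z,\cdot)$ of the form
\begin{align*}
G(z,s)=R(z,s)+\int_0^1\frac{\theta}{a}G\Big(z,s-\frac{1-y}{Da}\Big)1_{\{y>\lambda_1(s)\}}dy .
\end{align*}
The forcing $R$ collects the known contributions: the region-one part of $M$, the bounded term $\varPi[k,l]$ (bounded by Lemma~\ref{direct norm}), and $\phi^{\mathrm T}B$.

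\textbf{Main obstacle and how to close it.} After the change of variable $\sigma=s-\frac{1-y}{Da}$, the integral term becomes $\theta D\int_0^s G(z,\sigma)d\sigma$. This equation is solved by successive approximations, $G^{n+1}=R+\theta D\int_0^sG^n$, with differences bounded by $\|R\|_\infty(\theta Ds)^n/n!$, as in the $\bar l$ iteration. This gives existence, boundedness, and uniqueness by linearity, with uniqueness following the contradiction argument of \cite[Thm.~1]{krstic2008backstepping}. The main obstacle I expect is the careful bookkeeping of the delta and Heaviside pieces of $k$ crossing into region two, notably $k_2$ and the $z>1$ branch. These make $M$ piecewise defined rather than continuous, so boundedness must be shown in the sense of Lemma~\ref{direct norm}, i.e.\ for the induced integral operator, rather than pointwise.
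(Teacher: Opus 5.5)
Your proposal is correct and follows the paper's proof essentially step for step. It uses the same split at $y=\lambda_1(s)$, the same explicit delta-valued formula for $M$ on $y\le\lambda_1(s)$ with the $m_0$/$m_1$ cancellation giving $M(z,s,0)=0$, and the same successive approximations with the $(|\theta| D s)^n/n!$ bound on $y>\lambda_1(s)$. Your reduction to the trace $G(z,s)=M(z,s,1)$ is equivalent to the paper's operator $\varPsi$, because $\varPsi[M](z,s,y)$ depends on $(s,y)$ only through $\chi_3(s,y)$.

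One small correction concerns your sentence that the delta parts of $k$ integrate along characteristics to bounded contributions. The arguments of the delta terms in $k_1$ are invariant along the characteristics $y+Das=\mathrm{const}$. They therefore produce delta terms in $M$ on $y=z-Das$ and $y=\lambda_1(s)$, not bounded contributions. Also, $k_2$ never enters the region $y\le\lambda_1(s)$. This delta structure is exactly what your own cancellation computation uses.
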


Before proceeding to the proof of the above lemma, we define 
 $m_1(s,y)$ on $\{(s,y)\in[0,1]^2:0\le y \le \lambda_1(s)\}$, and $m_2$, $\chi_3$, $\chi_4$ on $\{(s,y)\in[0,1]^2: \lambda_1(s)< y \le 1\}$ as
    \begin{align}
   m_1(s,y) =& \frac{D\zeta_1}{-\kappa_1}\int_0^s\mathrm{e}^{Dc_1\varsigma}f(y+Da(s-\varsigma))d\varsigma, \label{m1} \\
     m_2(s,y)\!=&\frac{b\zeta_2D}{-a\kappa_2}\varrho_2(\chi_3(s,y))\int_0^{\frac{1-y}{Da}}\!\mathrm{e}^{Dc_1\varsigma}f(-Da\varsigma +1)d\varsigma,\label{m2}  \\
     \chi_3(s,y)=&s-\frac{1-y}{Da}, \quad \chi_4(s,y)=-Das-y+2. \label{s3, s4} 
    \end{align}

    \begin{proof}
        We employ the method of characteristics to solve for $M$, and obtain the piecewise defined function
        \begin{align}\label{M solution}
        M(z,s,y)\!\!=\!\!\left\{\begin{aligned}
        &\!M_1(z,\!s,\!y), \qquad\qquad\quad \:\:\mathrm{if}\: 0\!\le\!y\!\le\! \lambda_1(s), \\
        &\!M_2(z,\!s,\!y)\!+\!\!M_3(z,\!s,\!y), \;\: \mathrm{if}\: \lambda_1(s)\!<\!y\!\le\! 1.
        \end{aligned}\right. 
    \end{align}
        For the branch corresponding to the gray region in Fig.~\ref{M line}, i.e., $0 \le y \le \lambda_1(s)$, we denote the solution by $M_1$, which is solved as follows:
    \begin{align}\label{M1}
     M_1(z,s,y)\!&=\!\left(m_1(s,y)+m_0(y+Das)H(Da(1\!-\!s)\!-\!y)\right)  \notag\\
             \times\delta\Bigl(y-&\lambda_1(s)+1\!-\!z\Bigr)\!+\!\frac{d_1}{\zeta_1}m_1(s,y)\delta\Bigl(y\!-\!\lambda_1(s)\Bigr),
\end{align}
where we have used $k_1$ expressed in \eqref{k1} and boundary condition \eqref{M(z,0,y)}. To verify the boundary condition $M(z,s,0)=0$ in Remark~\ref{boundary constraint}, we evaluate \eqref{M1} at $y=0$, which yields $M(z,s,0)=0$, where we have used $H(Da(1-s))=1$ and $\delta(\lambda_1(s))=0$. The latter follows from Assumption~\ref{assumption on delay}, which ensures that $\lambda_1(s)=-Das+1>0$ for $s\in[0,1]$, provided that $D<1/a$.

For $M$ with $\lambda_1(s)<y\le 1$, corresponding to the white region in Fig.~\ref{M line}, the solution for $M$ becomes more involved due to the boundary condition $M(z,s,1)$, which contains a self-integral term. By applying $k_2$ and $k_3$ defined in \eqref{k2} and \eqref{k3}, respectively, and using  \eqref{M1}, the solution for $M$ on $\lambda_1(s)<y\le 1$  can be decomposed into two components, $M_2$ and $M_3$, which yields 
\begin{align}
M_2 + M_3 &= Y_2(z,s,y)+Y_3(z,s,y) + \varPsi [M_2 + M_3]
, \label{M integral no split}
\end{align}
where
\begin{align}
 Y_2(z,s,y) =&~ \Theta(z,\chi_3(s,y))+\int_0^{\frac{1-y}{Da}}\frac{D}{\kappa_1}f(-Da\xi+1)\notag \\
        &\times k_3(z,\xi+\chi_3(s,y),-Da\xi+1)d\xi,  \label{Y2}\\
        Y_3(z,s,y)=&~ m_2(s,y)\delta\bigl(3-z-\lambda_2(s)-\frac{b}{a}(1-y)\bigr),   \label{Y3}\\
\varPsi [M]
 = &\int_{\chi_4(s,y)}^{1} \frac{\theta}{a}M(z,\chi_3(s,y),\xi)d\xi, \label{operator varPsi}
\end{align}
with $\Theta(z,s)$ defined piecewise as
\begin{subequations}
\begin{align}
    \Theta(z,s)=&\Theta_1
+ \dfrac{\theta d_1}{a\zeta_1} m_1\!\left(s,\lambda_1(s)\right)
+ \dfrac{\theta}{a}\!\bigl(m_1\!\left(s,z-Das\right) \notag \\
&+m_0(z)\bigr) H(Da-z),\quad
\text{if } Das \le z \le 1,\\
\Theta(z,s)=&\Theta_1
+ \dfrac{\theta d_1}{a\zeta_1} m_1\!\left(s,\lambda_1(s)\right),
\quad  \quad\text{otherwise},
\end{align}
\end{subequations}
and $\Theta_1(z,s)=\bigl(\varPi[k,l](z,s)+\phi(z,s)^{\mathrm T}B\bigr)/a$.

We consider $M_i$  separately: 
 \begin{align}
        M_i(\!z,s,y\!)\!=&Y_{i}(z,s,y)+\varPsi[M_i](\!z,s,y\!), \,i=2,3.\label{Mi integral}
\end{align}
Accordingly, the iterative form is given by
 \begin{align}
        M^{n+1}_i(\!z,s,y\!)\!=&Y_{i}(z,s,y)+\varPsi[M_i^n](\!z,s,y\!), \,i=2,3. 
\end{align}
For $n=0,1,2,\ldots,$ 
let $M_i^0=0$ and $\Delta M_i^n(z,s,y)=M_i^{n+1}(z,s,y)-M_i^{n}(z,s,y)$, which gives  
    \begin{align}
    \Delta M_i^0(z,s,y)=&Y_i(z,s,y),\label{Delta M_0}\\
        \Delta M_i^{n+1}(z,s,y)=&\varPsi[\Delta M_i^n](z,s,y). \label{Delta M n+1 and n relation}
    \end{align}
Assume that for $n\ge 0$, 
    \begin{align}
        \lvert \Delta M_2^n(z,s,y) \rvert \le \bar{Y}_2\frac{1}{n!}\left(\frac{| \theta |}{a}(y+Das-1)\right)^n, \label{Delta M n inequality}
    \end{align}
    where $\bar{Y}_2=\sup\lvert Y_2(z,s,y)\rvert$. Substituting \eqref{Delta M n inequality} into \eqref{Delta M n+1 and n relation} gives
    \begin{align*}
        &\lvert \Delta M_2^{n+1}(z,s,y) \rvert  \\
        \le& \frac{| \theta |}{a}\!\int_{\chi_4(s,y)}^1\!\bar{Y}_2 \frac{1}{n!}\left(\frac{| \theta |}{a}(\xi+Da\chi_3(s,y)-1)\right)^n  d\xi \notag \\
        \le& \bar{Y}_2\frac{1}{(n+1)!}\left(\frac{|\theta|}{a}(y+Das-1)\right)^{n+1}.        
    \end{align*}
    Hence, \eqref{Delta M n inequality} holds, which yields    
    \begin{align}
        | M_2(z,s,y) | \le \bar{Y}_2\mathrm{e}^{\frac{\lvert \theta \rvert}{a}(y+Das-1)}. \label{M2 bound}
    \end{align}
 
 The delta function in   $Y_3$  vanishes upon applying the integral operator \eqref{operator varPsi}, resulting in
    \begin{subequations}
    \begin{align}
               &\Delta M_3^1(z,s,y)= 0,\qquad \qquad \:\mathrm{if}~ \lambda_1(s)\!<\! y< \vartheta_4(z,s), \\
           &\Delta M_3^1(z,s,y)= \frac{\theta}{b}m_2(\chi_3(s,y),\xi_1(z,s,y))H(z-1), \notag \\
                &\qquad \qquad\qquad\qquad\qquad\qquad\mathrm{if}~\vartheta_4(z,s) \le y\le 1, 
    \end{align}
    \end{subequations}
    where $\xi_1(z,s,y)=\lambda_1(s)+\frac{a}{b}(z-1)-y+1$.
Applying a similar mathematical induction, we can prove for $n\ge 1$ that 
    \begin{align*}
        \left| \Delta M_3^n(z,s,y) \right| \!\le\! \frac{\lvert \theta \rvert \bar{m}_2}{b}\frac{1}{(n\!-\!1)!}\left(\frac{\lvert \theta \rvert}{a}(y\!+\!Das\!-\!1)\right)^{n-1},
    \end{align*}
    where $\bar{m}_2\!=\!\sup \lvert m_2(s,y)\rvert$.
    Consequently, we have 
     \begin{align}
         \left|\sum\limits_{n=1}^{\infty}\Delta M_3^n(z,s,y)\right| \le \frac{\lvert \theta \rvert \bar{m}_2}{b}\mathrm{e}^{\frac{\lvert \theta \rvert}{a}(y+Das-1)}.  \label{M4 bound} 
    \end{align}
    Combining \eqref{Delta M_0}, \eqref{Mi integral} and \eqref{M4 bound}, we obtain 
    \begin{align}
        M_3(z,s,y)\!=& Y_3(z,s,y)+\sum\limits_{n=1}^{\infty}\Delta M_3^n(z,s,y). \label{M sum expression}
    \end{align}
   Note that the second term of $M_3$ is bounded; hence, $M_2+M_3$ can be decomposed into a Dirac delta function and a bounded component. 
    This ends the proof.
    \end{proof}
    
    \begin{lemma}
           \label{direct norm u}    The operator $\mathcal{T}^u$ defined in \eqref{definition of components in trans} is bounded, that is, there exists a positive constant $C_u$ such that
\begin{align}\label{T_u}
 \|\mathcal{T}^{u}[f]\|_\infty \leq C_u \|f\|_\infty.
\end{align}
   \end{lemma}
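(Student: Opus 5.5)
The plan is to use the structure of $M$ obtained in the proof of Lemma~\ref{well-posedness M}. There, $M$ splits into a finite sum of Dirac-delta terms with bounded coefficients plus a bounded remainder. Each piece is then estimated separately after substituting it into $\mathcal{T}^u[f](z,s)=\int_0^1 M(z,s,y)f(y)\,dy$.

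Concretely, write $M=M^{\delta}+M^{b}$. The singular part $M^{\delta}$ collects:
\begin{itemize}
\item the two delta terms of $M_1$ in \eqref{M1}, concentrated at $y=\lambda_1(s)-1+z$ and at $y=\lambda_1(s)$;
\item the term $Y_3$ in \eqref{Y3}, concentrated on $y=1-\frac{a}{b}(3-z-\lambda_2(s))$.
\end{itemize}
The bounded part $M^{b}=M_2+\sum_{n\ge1}\Delta M_3^n$ is controlled by \eqref{M2 bound} and \eqref{M4 bound}. For \eqref{M2 bound} to be a genuine finite bound, I first check that $\bar Y_2=\sup|Y_2|<\infty$. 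By \eqref{Y2}, $Y_2$ consists of:
\begin{itemize}
\item $\Theta$, which involves $\varPi[k,l]$ (bounded by Lemma~\ref{direct norm}), $\phi$ (bounded by its explicit formula, since $\varPhi$ is bounded and $\phi(z,0)$ involves only continuous coefficients and the integral of the delta-type $M(z,0,\cdot)$ against the continuous $\breve\varphi$), and $m_0,m_1$ (continuous);
\item an integral of $f$ against $k_3$, which is bounded by \eqref{k3}.
\end{itemize}
Hence $|M^{b}(z,s,y)|\le C_b:=\bigl(\bar Y_2+|\theta|\bar m_2/b\bigr)\mathrm{e}^{|\theta|/a}$ uniformly on $\mathcal{D}_3$, using $y+Das-1\le 1$ there.

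The bounded part then gives immediately
\[
\Bigl|\int_0^1 M^{b}(z,s,y)f(y)\,dy\Bigr|\le C_b\|f\|_\infty .
\]
For the singular part I use the sifting property, exactly as in the proof of Lemma~\ref{direct norm}. Each delta term $c(z,s,y)\,\delta(y-y_0(z,s))$ integrates to $c(z,s,y_0)f(y_0)\mathbf{1}_{\{y_0\in[0,1]\}}$, whose absolute value is at most $\sup|c|\,\|f\|_\infty$. The coefficients are bounded:
\begin{itemize}
\item $|m_1|\le D\zeta_1\mathrm{e}^{D|c_1|}\|f\|_{C}/|\kappa_1|$, and similarly for $m_0$;
\item $m_2$ is bounded since $\varrho_2$ is bounded on the relevant range.
\end{itemize}
Summing the finitely many delta contributions with the bounded part yields \eqref{T_u} with $C_u=C_b+\sup|m_1+m_0|+\frac{d_1}{\zeta_1}\sup|m_1|+\bar m_2$.

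The main obstacle is making the sifting step rigorous for $f\in L^\infty$, where point evaluation $f(y_0)$ is not defined pointwise. I would handle it as in the paper's treatment of $\mathcal{T}^v$ in Lemma~\ref{direct norm}, in one of two ways. The first is to interpret $\mathcal{T}^u[f](z,s)$ via the delta's sampling on the dense subspace of continuous $f$ (the state space $\mathcal{S}_c$), derive the bound there, and extend by density and continuity. The second is to note that, as a function of $(z,s)$, the sampled term $f(y_0(z,s))$ is a composition of $f$ with a nondegenerate affine map ($\partial y_0/\partial z\neq 0$). It is therefore measurable with essential supremum at most $\|f\|_\infty$, which gives the bound directly in the $L^\infty((0,2)\times(0,1))$ norm. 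I would present the second argument, since it avoids any density issue: each delta support is an affine hyperplane in $(z,s,y)$ that is transversal in $z$.
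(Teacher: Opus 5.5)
Your proposal is correct and follows the paper's own proof essentially line by line. Like the paper, you substitute the structure of $M$ from Lemma~\ref{well-posedness M}, sift the Dirac terms of $M_1$ and $Y_3$ into point evaluations with bounded coefficients (the paper's $J_5$, $J_4$ and $\frac{d_1}{\zeta_1}m_1(s,\lambda_1(s))f(\lambda_1(s))$), and bound the remaining kernel $M_2+\sum_{n\ge1}\Delta M_3^n$ via \eqref{M2 bound} and \eqref{M4 bound}.

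There are two harmless slips, neither affecting boundedness. First, the delta in $Y_3$ has argument $\frac{b}{a}\bigl(y-\lambda_1(s)-\frac{a}{b}(z-1)\bigr)$, so its contribution carries an extra factor $\frac{a}{b}$ (as in the paper's $J_4$), which is missing from your $C_u$. Second, the delta at $y=\lambda_1(s)$ is transversal in $s$ rather than in $z$; that is still enough for your essential-supremum argument.
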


\subsection{Well-posedness of the Inverse Kernel Equations}

The ODE \eqref{breve phi} for the inverse kernel function $\breve \phi$ can be expressed by its boundary condition $\breve \phi(z,0)$:  
\begin{align}
    \breve\phi(z,s) =& \mathrm{e}^{(A+BK)^{\mathrm{T}}Ds}\breve \phi(z,0).
\end{align}

\begin{lemma}\label{lemma: inverse kernel solution}
    The system of inverse kernel equations \eqref{breve eta q}--\eqref{breve N equation} admits a unique solution on the domain $\mathcal{D}_1\!\times \mathcal{D}_2 \!\times \mathcal{D}_4 \!\times \mathcal{D}_3^{3}$.
\end{lemma}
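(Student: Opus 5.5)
The plan is to solve the inverse kernel equations in the same sequential order in which they depend on one another, namely $\breve\varphi,\breve\psi \to \breve\eta \to \breve q \to (\breve l,\breve k) \to \breve M \to \breve N$, using the method of characteristics throughout. The inverse kernels have essentially the same structure as the direct kernels treated in Section~\ref{sec_well_k_l} and Lemma~\ref{well-posedness M}, and they are in fact simpler: the boundary conditions for $\breve\eta$, $\breve q$ and $\breve l$ contain no self-integral terms. The vector functions $\breve\varphi,\breve\psi,\breve\phi$ are already given explicitly by \eqref{breve varphi}, \eqref{breve psi} and the closed form $\breve\phi(z,s)=\mathrm{e}^{(A+BK)^{\mathrm T}Ds}\breve\phi(z,0)$, so they are continuous and bounded.

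\textbf{Step 1: $\breve\eta$ and $\breve q$.} The function $\breve\eta$ is constant along $z+y=\mathrm{const}$. Therefore
\[
\breve\eta(z,y)=\frac{1}{a}\bigl(\breve\varphi(z+1-y)^{\mathrm T}B-\theta\bigr),
\]
which is continuous on $\mathcal{D}_1$. Next, $\breve q$ is transported along the lines $\tau a s+y=\mathrm{const}$. Tracing characteristics gives
\[
\breve q(s,y)=\breve\eta\bigl(0,\,y+\tau a s\bigr)\quad\text{for } y+\tau a s\le 1,
\]
and
\[
\breve q(s,y)=\frac{1}{a}\breve\psi\Bigl(s-\frac{1-y}{\tau a}\Bigr)^{\mathrm T}B\quad\text{otherwise}.
\]
This function is bounded and piecewise continuous, so $\breve q\in C([0,1];L^2(0,1))$. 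Uniqueness in both cases follows from linearity, as in \cite[Thm.~1]{krstic2008backstepping}.

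\textbf{Step 2: $\breve l$ and $\breve k$.} I would first apply the change of variables $\breve l=\mathrm{e}^{-D\bar c_2 s}\bar{\breve l}$ and $\breve k=\mathrm{e}^{-D\bar c_1 s}\bar{\breve k}$, which reduces both equations to pure transport. Since $\breve l(z,s,2)=0$, the solution $\breve l$ is given entirely by the shifted delta terms of the initial condition for $1\le y\le\lambda_2(s)$, and it vanishes for $y>\lambda_2(s)$. Assumption~\ref{assumption on delay} ensures $\lambda_2(s)>1$, so these shifts stay inside $\mathcal{D}_4$.

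The boundary datum of $\breve k$ at $y=1$ is then explicit, because it involves only $\breve l(z,s,1)$ and $\int_1^2\breve l\,dy$. The integral $\int_1^2\breve l\,dy$ is a bounded piecewise-constant function in $(z,s)$, obtained from the sifting property of the delta function. The term $\breve l(z,s,1)$ is a delta function, which is carried along characteristics exactly like $k_2$ in \eqref{k2}. Propagating this datum gives $\breve k=\breve k_1$ on $y\le\lambda_1(s)$, where $\breve k_1$ is a pair of shifted deltas as in \eqref{k1}. On $y>\lambda_1(s)$ it gives $\breve k=\breve k_2+\breve k_3$, a single delta plus a bounded piecewise function. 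No successive approximation is needed at this stage.

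\textbf{Step 3: $\breve M$ and $\breve N$.} The $\breve M$-equation is a transport equation with the source
\[
\frac{D}{\kappa_1}\breve k\breve f+\breve\varUpsilon[\breve k].
\]
Here $\breve f$ is continuous, since it comes from \eqref{breve f}. The term $\breve\varUpsilon[\breve k]$ is bounded, because integrating against $\breve\eta$ turns the deltas in $\breve k$ into point evaluations of $\breve f\breve\eta$. The initial datum is $\varrho$, which equals a delta plus a bounded term, and the boundary datum at $y=1$ is explicit: $\breve\varPi$ is bounded by the argument of Lemma~\ref{direct norm}, and $\breve\phi$ is known. Integrating along the characteristics $s\mapsto(s,\,y-Da(s-s_0))$ therefore gives $\breve M$ explicitly, split between the regions $y\le\lambda_1(s)$ and $y>\lambda_1(s)$ as in \eqref{M solution}. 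The result is a sum of delta terms, transported from $\breve k_1$, $\breve k_2$ and $\varrho$, plus a bounded part.

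Finally, $\breve N$ is transported with speed $D/\tau$, with zero initial data and boundary data $\tau a\breve M(z,s,0)$. Hence
\[
\breve N(z,s,y)=\tau a\,\breve M\Bigl(z,\,s-\frac{\tau}{D}(1-y),\,0\Bigr)\quad\text{for } y>\lambda_3(s),
\]
and $\breve N=0$ otherwise.

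\textbf{Main obstacle.} The hardest step is the trace $\breve M(z,s,0)$. Unlike in Lemma~\ref{well-posedness M}, it is not imposed to vanish, so I must check that it is a well-defined bounded function or distribution. This requires tracking whether the deltas transported from $\breve k_1$ and $\varrho$ reach $y=0$. By Assumption~\ref{assumption on delay}, $\delta(\lambda_1(s))=0$, and the remaining delta at $y=z-Das$ produces, on the line $y=0$, a delta in $z$ at $z=Das$. A delta in $z$ is harmless for the operator $\breve{\mathcal T}^{\mu}$, since $z$ is not an integration variable there.

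Uniqueness for every equation follows from linearity: the difference of two solutions satisfies the homogeneous problem, and the characteristics show that it vanishes identically.
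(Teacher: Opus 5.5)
Your proposal is correct and follows essentially the paper's route. You solve the inverse kernels sequentially by characteristics and note that no inverse boundary condition contains a self-integral, so, unlike for $k$, $l$, $M$, no successive approximation is needed. You show $\breve{\varUpsilon}[\breve k]$ is bounded, so $\breve M$ is a sum of delta terms transported along its characteristics plus a bounded part, and you obtain $\breve N=\tau a\breve M_1(z,\chi_5(s,y),0)$ for $y>\lambda_3(s)$ and $\breve N=0$ otherwise.

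Three minor corrections:
\begin{itemize}
\item $\breve\eta$ is constant along $y-z=\mathrm{const}$, not $z+y=\mathrm{const}$; your formula is nonetheless right.
\item $\breve f$ jumps at $z=Da$ because of $H(Da-z)$, so only its boundedness should be invoked, not continuity.
\item The delta $\delta(Das-z)$ in the trace is harmless for $\breve{\mathcal T}^{\mu}$ for a different reason than the one you give. Being a delta in the un-integrated variable $z$ would actually be harmful: a delta at a $y$-independent $z$ would make $\breve{\mathcal T}^{\mu}[f]$ fail to be a function. It is harmless because in $\breve N$ its support $z=Das-\tau a(1-y)$ moves with the integration variable $y$. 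This yields the point evaluation $f\bigl(1+\frac{z-Das}{\tau a}\bigr)$ found in the paper's appendix.
\end{itemize}
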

\begin{proof}
To proceed, we first define the following functions:
\begin{align}
   \breve{m}_1(s,y) =& \frac{D\zeta_1}{-\kappa_1}\int_0^s\mathrm{e}^{-D\bar{c}_1\varsigma}\breve{f}(y+Da(s-\varsigma))d\varsigma, \\
   \breve m_2(s,y)=&\frac{b\zeta_2D}{-a\kappa_2}\breve\varrho_2(\chi_3(s,y))\int_0^{\frac{1-y}{Da}}\!\mathrm{e}^{-D\bar c_1\varsigma}\breve f(-Da\varsigma +1)d\varsigma, \\
   \breve{\varrho}_1(s)=&\mathrm{e}^{-D\bar{c}_1s},  \qquad \breve{\varrho}_2(s)=\mathrm{e}^{-D\bar{c}_2s}. 
\end{align}
Due to space limitations, the detailed proof of well-posedness is omitted, and only the key steps and main differences from the direct kernel equations are summarized.

First, the equations for $\breve{\eta}$, $\breve{q}$, $\breve{k}$, and $\breve{l}$ are simplified versions of those for $\eta$, $q$, $k$, and $l$, respectively; hence, their well-posedness can be established in a straightforward manner.
The main differences arise from the presence of an additional kernel $\breve{N}$ in the inverse equations and an extra integral term, denoted by $\breve{\varUpsilon}$, in the governing equation for $\breve{M}$.

As $\breve{\varUpsilon}$ depends on $\breve{k}$, we first present
     \begin{align}\label{breve k}
        &\breve{k}(z,s,y)\!=\!\!\left\{\begin{aligned}
        &\!\breve{k}_1(z,s,y), &&\mathrm{if}\: 0\!\le\!y\!\le\! \lambda_1(s), \\
        &\!\breve{k}_2(z,s,y)\!\!+\!\!\breve{k}_3(z,s,y), && \mathrm{if}\: \lambda_1(s)\!<\!y\!\le\! 1,
        \end{aligned}\right.
        \end{align}
        with
        \begin{align*}
            &\breve{k}_1(z,s,y) = -\zeta_1\breve{\varrho}_1(s)\delta\bigl(y-\lambda_1(s)+1-z\bigr)\notag \\
        &\qquad \qquad ~ \quad -d_1\breve{\varrho}_1(s)\delta\bigl(y-\lambda_1(s)\bigr), \\
        &\breve{k}_2(z,s,y) =\bar{\zeta}_2\breve{\varrho}_2(s)\mathrm{e}^{\frac{\bar{c}_1-\bar{c}_2}{a}(y-1)}\delta\bigl(3\!-\!z\!-\!\lambda_2(s)\!-\!\frac{b}{a}(1\!-\!y)\bigr),
        \end{align*}
        and the bounded function $\breve k_3(z,s,y)$ is piecewise defined as
        \begin{align*}
         \breve k_3(z,s,y)=\left\{\begin{aligned}
             &\left(\varGamma_3(z)+\frac{d_1\zeta_2}{a}\right)\breve{\varrho}_2(s)\mathrm{e}^{\frac{\bar{c}_1-\bar{c}_2}{a}(y-1)}, \\&\qquad\qquad\qquad\qquad\qquad \,\mathrm{if}~ \lambda_1(s)\!\!<\!\!y\!\!\le \!\!\vartheta_4(z,s), \\
             & \varGamma_3(z)\breve{\varrho}_2(s)\mathrm{e}^{\frac{\bar{c}_1-\bar{c}_2}{a}(y-1)},~ \mathrm{if}~ \vartheta_4(z,s) < y\le 1,
             \end{aligned} \right.
        \end{align*}
        where $\varGamma_3(z)=\frac{d_1d_2}{a}H(z-1)$.

  We now consider $\breve M$ in the region $0\le y\le\lambda_1(s)$, where we denote its solution by $\breve M_1$. We obtain
    \begin{align}\label{breve M1}
        \breve{M}_1(z,s,y)\!&=\!\varrho(z,y\!+\!Das)\!+\!\int_0^s\breve{\varUpsilon}[\breve k](z,\xi,y\!+\!Da(s\!-\!\xi))d\xi  \notag\\
             \!+\breve{m}_1(s,y&)\Bigl(\delta(y\!-\!\lambda_1(s)\!+\!1\!-\!z)\!+\!\frac{d_1}{\zeta_1}\delta(y\!-\!\lambda_1(s))\Bigr),
    \end{align}
   where $\varrho$ is defined in \eqref{varrho} and
     \begin{align*}
\breve{\varUpsilon}[\breve k] =\left\{ 
\begin{aligned}
&0, \qquad\qquad\qquad\qquad\qquad\mathrm{if}~0 < y < \vartheta_5(z,s), \\
&\begin{aligned}
&\frac{\zeta_1 D}{-\kappa_1} \breve{\varrho}_1(s) \breve{f}(z \!-\! D a s)\breve{\eta}(z \!-\! D a s, y) H(z\!-\! D a s),
\end{aligned} \\
&\qquad\qquad\qquad\qquad\qquad \quad \mathrm{if}~\vartheta_5(z,s) \!\le\! y \!\le\! \lambda_1(s),
\end{aligned}\right.
\end{align*}
 For $\lambda_1(s)< y\le1$, we rewrite $\breve M$ as
    \begin{align}
        \breve{M}(z,s,y)=\breve{M}_2(z,s,y)+\breve M_3(z,s,y),
    \end{align}
    where
    \begin{align*}
        \breve M_2(z,s,y)=& \breve\Theta(z,\chi_3(s,y))+\int_0^{\frac{1-y}{Da}}\frac{D}{\kappa_1}\breve f(-Da\xi+1)\notag \\
        &\times \breve k_3(z,\xi+\chi_3(s,y),-Da\xi+1)d\xi \notag\\
        &+ \int_0^{\frac{1-y}{Da}}\breve \varUpsilon[\breve k](z,\xi+\chi_3(s,y),-Da\xi+1)d\xi,\\
        \breve M_3(z,s,y)=& \breve m_2(s,y)\delta\bigl(3-z-\lambda_2(s)-\frac{b}{a}(1-y)\bigr),
    \end{align*}
    and $\breve \Theta(z,s)=\frac{1}{a}\left(\breve\varPi[\breve k,\breve l](z,s)\!+\!\breve\phi(z,s)^{\mathrm{T}}B\right)$. Recall that $\breve f$ and $\breve \varPi[\breve k, \breve l]$  are defined in \eqref{breve f} and \eqref{breve varPi}, respectively. We can also prove that $\breve{\varUpsilon}[\breve k]$ is bounded for $y>\lambda_1(s)$ through substituting $\breve k$, $\breve f$ and $\breve \eta$ into \eqref{breve varUpsilon}.    
    Since the right-hand side of $\breve M_2$ only involves known and bounded functions, 
it follows directly that $\breve M_2$ is bounded. 

Therefore, the equation for $\breve M$ is well-posed.

    Using $\breve{M}(z,s,0)$ from \eqref{breve M1},  we obtain 
    \begin{align*}
        &\breve{N}(z,s,y)=\left\{\begin{aligned}
        &0, & &\mathrm{if}~0\le y\le \lambda_3(s),\\
        &\tau a\breve{M}_1(z,\chi_5(s,y),0),& & \mathrm{if}~ \lambda_3(s)<y\le 1,
        \end{aligned}\right.
    \end{align*}
    where $\chi_5(s,y)=s-\frac{\tau}{D}(1-y)$.  Hence, we have established the well-posedness of the inverse kernel system \eqref{breve eta q}--\eqref{breve N equation}.   
\end{proof}

\section{Stability Analysis}\label{stability}
    In this section, we will prove the main result stated in Theorem~\ref{main result}. The proof is organized into two parts. First, the stability of the target system is proved in Section~\ref{section target system stability}. Second, the norm equivalence between the original controlled system and the target system is demonstrated in Section~\ref{section norm equivalence}.

\subsection{Stability of the Target System}\label{section target system stability}    
 \begin{proposition}\label{stability of target system}
Consider the target system \eqref{target system}, with initial conditions $(w_0, \mu_0, \alpha_0, \beta_0, \gamma_0, X_0)\in \mathcal{S}_c$ satisfying the corresponding compatibility conditions. Then, the system admits a unique weak solution $(w, \mu, \alpha, \beta, \gamma, X) \in C([0,\infty); \mathcal{S})$ and its zero equilibrium is exponentially stable in $\mathcal{S}$, i.e., there exist constants $M_0, \upsilon_0>0$ such that
  \begin{align}
    V_2(t) \le M_0 \mathrm{e}^{-\upsilon_0 t} V_2(0), \label{V0 inequality}
\end{align}
where  
\begin{align}
    V_2(t) = &\| w \|_\infty^2+ \| \mu \|_\infty^2 + \| \alpha \|_\infty^2 + \| \beta \|_\infty^2  + \| \gamma \|_\infty^2 + |X|^2.\label{V0}
    \end{align}
\end{proposition}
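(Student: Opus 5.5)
The plan is to exploit the cascade structure of the target system \eqref{target system}. The subsystems are solved and estimated in the order $\gamma$, $\mu \to w \to X$, then $\alpha \to \beta$. Each piece is a pure transport equation, possibly with decay and forcing. It therefore reaches zero in finite time or decays exponentially, and it can be handled by the method of characteristics in the $L^\infty$ norm. Well-posedness follows the same route. The explicit representation along characteristics yields a unique weak solution in $C([0,\infty);\mathcal{S})$ for initial data in $\mathcal{S}_c$ satisfying compatibility conditions. The bounded source terms $F_1[w]$, $F_2[w]$ and $\gamma(z,0,t)$ enter only as integrable forcing, so Duhamel integrals along characteristics are well defined.

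\textbf{Step 1 (finite-time vanishing of $\gamma$, $\mu$, $w$).} From \eqref{target 7}--\eqref{target U2 boundary}, $\gamma(z,s,t)=\gamma_0(z,s+t/D)$ for $s+t/D<1$ and $\gamma\equiv0$ afterwards. Hence $\|\gamma(t)\|_\infty\le\|\gamma_0\|_\infty$, and $\gamma\equiv0$ for $t\ge D$. Similarly $\mu\equiv0$ for $t\ge\tau$, with $\|\mu(t)\|_\infty\le\|\mu_0\|_\infty$. Since $w(0,t)=\mu(0,t)$, we get $\|w(t)\|_\infty\le\max\{\|w_0\|_\infty,\|\mu_0\|_\infty\}$ and $w\equiv0$ for $t\ge\tau+1/a$. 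Because these quantities are bounded and vanish after a finite time $T_0$, they satisfy bounds of the form $\le \mathrm{e}^{\upsilon T_0}\mathrm{e}^{-\upsilon t}(\cdot)$ for any $\upsilon>0$.

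\textbf{Step 2 ($X$).} By \eqref{target 4} and the variation-of-constants formula,
\begin{equation*}
|X(t)|\le c\,\mathrm{e}^{-\sigma t}|X_0|+c\int_0^t\mathrm{e}^{-\sigma(t-\varsigma)}|w(1,\varsigma)|d\varsigma ,
\end{equation*}
where $\sigma>0$ comes from $A+BK$ being Hurwitz. Using Step 1, this gives exponential decay in terms of $|X_0|$, $\|w_0\|_\infty$ and $\|\mu_0\|_\infty$. The point value $w(1,t)$ is meaningful because, along characteristics, it equals a value of $w_0$ or of $\mu_0$.

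\textbf{Step 3 ($\alpha$, then $\beta$).} This is the main step. I will integrate \eqref{target 2} along $z-at=\text{const}$ with $\alpha(0,t)=0$, which gives
\begin{equation*}
\alpha(z,t)=\mathrm{e}^{-\bar c_1 t}\alpha_0(z-at)\ \text{ or }\ \int \mathrm{e}^{-\bar c_1(t-\varsigma)}\bigl(\kappa_1\gamma(\cdot,0,\varsigma)+F_1[w]\bigr)d\varsigma .
\end{equation*}
Here $F_1[w]$ is bounded by $C\|w\|_\infty$, since $\breve f$, $\breve\eta$ and $g_1$ are bounded (well-posedness of $\breve\eta$ from the kernel section). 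The forcing is bounded and vanishes after $T_0$, and $\alpha$ itself vanishes after time $1/a$ once the forcing is gone. Therefore $\|\alpha(t)\|_\infty\le C\mathrm{e}^{-\upsilon t}(\|\alpha_0\|_\infty+\|\gamma_0\|_\infty+\|w_0\|_\infty+\|\mu_0\|_\infty)$.

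For $\beta$, the equation \eqref{target 3} is integrated along characteristics from the boundary $\beta(1,t)=\alpha(1,t)$. The forcing now also contains $-\frac{\kappa_2d_1}{\kappa_1}\alpha(1,t)$, which is a trace of $\alpha$. I expect this trace to be the main obstacle. In the $L^\infty$ framework the trace is obtained from the characteristic representation of $\alpha$, not from $\|\alpha\|_\infty$ directly, and one must check that it is defined for a.e.\ $t$ and bounded by the same exponential envelope. I will handle it by writing $\alpha(1,t)$ explicitly through the formula above. Since $\alpha\equiv0$ after $T_0+1/a$, $\beta$ vanishes after a further time $1/b$, which gives the analogous bound for $\beta$.

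Finally, I will square all the estimates and sum them to obtain \eqref{V0 inequality} with any $\upsilon_0<2\min\{\sigma,\bar c_1,\bar c_2\}$ (or any $\upsilon_0>0$ for the finite-time pieces). $M_0$ then collects the constants, including $\mathrm{e}^{\upsilon_0(\tau+1/a+1/b+D)}$.
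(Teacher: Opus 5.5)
Your proposal is correct and follows essentially the paper's own argument: explicit characteristic representations of $\gamma$, $\mu$, $w$, $\alpha$, $\beta$ (with the trace $\alpha(1,t)$ feeding $\beta$), finite-time extinction of the whole PDE part turned into an exponential bound, and variation of constants with $A+BK$ Hurwitz for $X$. There are two harmless bookkeeping slips: for $t\le z/a$ the $\alpha$-formula is the initial-data term $\mathrm{e}^{-\bar c_1 t}\alpha_0(z-at)$ \emph{plus} the Duhamel integral, not one or the other; and by your own Step~3 the PDE part is extinct only after $\max\{D,\tau+\frac{1}{a}\}+\frac{1}{a}+\frac{1}{b}=\tau+\frac{2}{a}+\frac{1}{b}$ (since $D<\frac{1}{a}$), so the factor absorbed into $M_0$ must be at least $\mathrm{e}^{\upsilon_0(\tau+2/a+1/b)}$ (the paper uses the safe overestimate $t_F=D+\tau+\frac{2}{a}+\frac{1}{b}$) rather than $\mathrm{e}^{\upsilon_0(\tau+1/a+1/b+D)}$.
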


\begin{proof}
First, we consider the well-posedness of the target system. Since the ODE component is standard, the analysis focuses on the PDE subsystem. 
    Applying the method of characteristics, we obtain
    \begin{align}
        &\gamma(z,s,t)\!=\!\gamma_0(z,s\!+\!t/D),\; t \!\le\! D(1 \!- \!s); \gamma\!=\!0,\; t \!>\! D(1\!- \! s),\label{explicit_gamma}\\
        &\mu(s,t)\!=\!\mu_0(s \!+\! t/\tau),\; t \!\le\! \tau(1\!- \! s); \mu\!=\!0,\; t \!>\! \tau(1 \!- \! s),\label{mu expression}\\
        &w(z,t)\!=\!w_0(z \!- \! at), \: t \le \frac{z}{a};w(z,t)\!=\!\mu(0, t \!- \! z/a), \; t > \frac{z}{a}.\label{w expression}
    \end{align}
     Given the initial conditions, the above explicit solution representations imply that $\gamma, \mu, w \in C([0, \infty); L^\infty)$. Employing the method of characteristic, we also derive the explicit solutions for $\alpha$ and $\beta$:
    \begin{align}
        \alpha(z,t)=& \mathrm{e}^{-\bar c_1t}\Bigl(\alpha_0(z-at)\!+\!\!\int_0^t\mathrm{e}^{\bar c_1\xi}\bigl[\kappa_1\gamma(z-a(t-\xi),0,\xi)\notag \\
        &\quad+F_1[w](z-a(t-\xi),\xi)\bigr]d\xi\Bigr),\quad t\le \frac{z}{a},\label{alpha expression 1}\\
        \alpha(z,t)=&\mathrm{e}^{\frac{-\bar c_1}{a}z}\int_0^{\frac{z}{a}}\mathrm{e}^{\bar c_1\xi}\bigl[\kappa_1\gamma(a\xi,0,\xi+t-\frac{z}{a})\notag \\
        &\quad+F_1[w](a\xi,\xi+t-\frac{z}{a})\bigr]d\xi, \:\qquad t> \frac{z}{a},\label{alpha expression 2}\\
        \beta(z,t)=&\mathrm{e}^{-\bar c_2t}\Bigl(\int_0^t\mathrm{e}^{\bar c_2\xi}\bigl[\frac{\kappa_2d_1}{-\kappa_1}\alpha(1,\xi)+\kappa_2\gamma(z_1+b\xi,0,\xi)\notag \\
        +F_2[&w](z_1+b\xi,\xi)\bigr]d\xi+\beta_0(z_1)\Bigr),\quad t\le \frac{z-1}{b},\label{beta expression 1}\\
        \beta(z,t)=&\mathrm{e}^{\frac{-\bar c_2}{b}(z\!-\!1)}\Bigl(\alpha(1,z_2)\!+\!\int_0^{\frac{z-1}{b}}\mathrm{e}^{\bar c_2\xi}\bigl[\frac{\kappa_2d_1}{-\kappa_1}\alpha(1,\xi+z_2)\notag \\
        +\kappa_2\gamma&(b\xi\!+\!1,0,\xi\!+\!z_2)\!+\!F_2[w](b\xi\!+\!1,\xi\!+\!z_2)\bigr]d\xi\Bigr), \notag \\
        &\qquad \qquad \qquad \qquad\qquad\qquad\quad ~t> \frac{z-1}{b}, \label{beta expression 2}
    \end{align}
    where $z_1(z,t)=z-bt$, $z_2(z,t)=t-\frac{z-1}{b}$. 
    
    From \eqref{F1 rewrite} and \eqref{F2 rewrite}, we get  
    \begin{align}
        |F_1[w]|\le R_1\|w\|_{\infty},\quad
        |F_2[w]|\le R_2\|w\|_{\infty},
    \end{align}
    where  $R_1, R_2>0$.  

    Since $F_1$ and $F_2$ are bounded on the admissible $L^\infty$ states, all the integrals in \eqref{alpha expression 1}--\eqref{beta expression 2} are well defined. Together with the continuity of the initial conditions and the boundary compatibility conditions, the characteristic representations imply that  $\alpha, \beta \in C([0, \infty); L^\infty)$. Therefore, the system \eqref{target system} admits a unique weak solution in $C([0,\infty);\mathcal{S})$.

 Next, we consider the stability of the target system. From \eqref{explicit_gamma}--\eqref{w expression}, we have 
      \begin{align}
        \|\mu\|_{\infty} \!\le& \|\mu_0\|_\infty, \qquad \qquad \|\gamma\|_{\infty} \!\le \|\gamma_0\|_\infty, \label{mu gamma norm estimate}\\ 
        \|w\|_{\infty} \!\le& \max\{\|w_0\|_{\infty},\|\mu_0\|_{\infty}\}\!\le\! \|w_0\|_{\infty}\!+\!\|\mu_0\|_{\infty}. \label{w norm estimate}
    \end{align}

Additionally, from \eqref{alpha expression 1} and \eqref{alpha expression 2}, it follows that
\begin{align}
        \|\alpha\|_{\infty}&\le \|\alpha_0\|_{\infty}+\int_0^{\frac{1}{a}}\mathrm{e}^{\frac{\bar c_1}{a}}(|\kappa_1|\|\gamma\|_{\infty}+R_1\|w\|_{\infty})d\xi, \notag \\
        &\le \|\alpha_0\|_{\infty}\!+\! P_1(\|\gamma_0\|_{\infty}\!+\! \|w_0\|_{\infty}\!+\!\|\mu_0\|_{\infty}),\label{alpha norm estimate}
    \end{align}
where
$P_1\!\!=\!\!\max\{\frac{|\kappa_1|}{a}\mathrm{e}^{\frac{\bar c_1}{a}},\!
\frac{R_1}{a}\mathrm{e}^{\frac{\bar c_1}{a}}\}\!\!>\!\!0$.
Similarly, it follows from \eqref{beta expression 1} and \eqref{beta expression 2} that there exists a constant $P_2>0$ such that
\begin{align}
\|\beta\|_{\infty}
\le \|\beta_0\|_{\infty}
\!+\! P_2(\|\alpha_0\|_{\infty}
+ \|\gamma_0\|_{\infty}
+ \|w_0\|_{\infty}
+ \|\mu_0\|_{\infty}).
\label{beta norm estimate}
\end{align}

Combining \eqref{mu gamma norm estimate}--\eqref{beta norm estimate}, one can verify that there exist constants $P_3>0$ and $\upsilon_1>0$ such that
\begin{align}
        &\| w\|_{\infty}^2 \!+\! \| \mu\|_{\infty}^2 \!+\! \| \alpha \|_{\infty}^2 
    + \| \beta \|_{\infty}^2  + \| \gamma \|_{\infty}^2 \notag \\
    \le& P_3\mathrm{e}^{\upsilon_1t_\mathrm{F}}\mathrm{e}^{-\upsilon_1t}(\| w_0\|_{\infty}^2 \!+\! \| \mu_0\|_{\infty}^2 \!+\! \| \alpha_0 \|_{\infty}^2 
    + \notag \\
    &\qquad\qquad  \| \beta_0 \|_{\infty}^2  + \| \gamma_0 \|_{\infty}^2),\label{no X inequality}
    \end{align}
where $t_F=D+\tau+\frac{2}{a}+\frac{1}{b}$ denotes the finite time
after which the PDE subsystem becomes identically zero.

The solution of the ODE \eqref{target 4} is given by
\begin{align}
X(t)=\mathrm{e}^{(A+BK)t}X_0\!+\!\int_0^t \mathrm{e}^{(A+BK)(t-\varsigma)}B w(1,\varsigma)\,d\varsigma.
\end{align}
Since $A+BK$ is Hurwitz, there exist constants $M_1\ge1$ and
$\upsilon_2>0$ such that
$|\mathrm{e}^{(A+BK)t}|\le M_1 \mathrm{e}^{-\upsilon_2 t}$.
Hence,
\begin{align}
|X(t)|
\le M_1 \mathrm{e}^{-\upsilon_2 t}|X_0|
+ \int_0^t M_1 \mathrm{e}^{-\upsilon_2 (t-\varsigma)}
|B||w(1,\varsigma)|\,d\varsigma.
\end{align}
Using the estimate
$|w(1,t)|\le \|w(\cdot,t)\|_{\infty}
\le M_2 \mathrm{e}^{-\upsilon_3 t}
(\|w_0\|_{\infty}+\|\mu_0\|_{\infty})$,
with $M_2=\mathrm{e}^{(\tau+\frac{1}{a})\upsilon_3}$
and any $\upsilon_3>\upsilon_2$,
we obtain
\begin{align}
|X(t)|
\le M_3 \mathrm{e}^{-\upsilon_2 t}
(\|w_0\|_{\infty}
+ \|\mu_0\|_{\infty}
+ |X_0|),
\label{X inequality}
\end{align}
where
$M_3=\max\{M_1,\frac{M_1M_2|B|}{\upsilon_3-\upsilon_2}\}$.
Combining \eqref{no X inequality} and \eqref{X inequality}
yields \eqref{V0 inequality} with
$\upsilon_0=\min\{\upsilon_1,2\upsilon_2\}$.
\end{proof}

\subsection{Norm Equivalence} \label{section norm equivalence}    \begin{lemma}\label{inverse norm}  
    For each $j \in \{\gamma, \alpha, \beta, w, \mu\}$, the integral operator $\mathcal{\breve T}^{j}[f]$ defined in \eqref{definition of components in inverse trans} is bounded, i.e., there exists a constant $\breve C_j > 0$ such that
        \begin{align}\label{inverse norm estimate}
            \|\mathcal{\breve T}^{j}[f]\|_{\infty} \leq \breve C_j \|f\|_{\infty}.
        \end{align}
\end{lemma}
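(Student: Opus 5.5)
We split the inverse kernels into a singular Dirac part and a bounded remainder, exactly as in the proof of Lemma~\ref{direct norm}. The explicit representations from Lemma~\ref{lemma: inverse kernel solution} are the input. On $0\le y\le\lambda_1(s)$ we use $\breve k_1$; on $\lambda_1(s)<y\le1$ we use $\breve k_2+\breve k_3$. For $\breve l$ we use the analogues of \eqref{l solution}--\eqref{l2} with $\varrho_2$ replaced by $\breve\varrho_2$ and the exponential factor dropped, since $\breve l(z,s,2)=0$. The kernel $\breve M$ splits as $\breve M_1$ in \eqref{breve M1} and $\breve M_2+\breve M_3$, and $\breve N$ is given through $\breve M_1(z,\cdot,0)$.

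Each kernel thus has the form
\begin{align*}
\sum_i c_i(z,s)\,\delta\bigl(y-\sigma_i(z,s)\bigr)+R(z,s,y),
\end{align*}
with finitely many terms. The weights $c_i$ are uniformly bounded: they are $\zeta_1\breve\varrho_1$, $d_1\breve\varrho_1$, $\bar\zeta_2\breve\varrho_2\mathrm{e}^{(\bar c_1-\bar c_2)(y-1)/a}$, $\breve m_1$, $\breve m_2$ and $m_0$. The regular part $R$ is bounded on the domain. For $\breve M_1$, $R$ contains $m_0(z)\breve\eta(z,y)$ from $\varrho$ and the integral of $\breve\varUpsilon$. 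For $\breve M_2$, boundedness was already established in the proof of Lemma~\ref{lemma: inverse kernel solution}.

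For $j\in\{\alpha,\beta,w,\mu\}$ we estimate each term pointwise in $(z,s)$. A delta term gives $|c_i f(\sigma_i)|\le \bar c\,\|f\|_\infty$ by the sifting property. The regular part gives at most $\sup|R|\,\|f\|_\infty$, times the length of the $y$-interval. Summing the terms gives $\breve C_j$.

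For $\breve{\mathcal T}^\gamma$ the double integral over $(y,\epsilon)$ needs more care. A delta in $y$ centred at $\sigma_i(z,s-\epsilon)$ reduces the inner $y$-integral to $|c_i|\,|\gamma(\sigma_i,\epsilon)|\le \bar c\|\gamma\|_\infty$, and integrating over $\epsilon\in[0,s]$ gives a factor $s\le1$.

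The main obstacle is the term $\breve k_2$. Its delta argument $3-z-\lambda_2(s)-\tfrac{b}{a}(1-y)$ is a function of $y$ with derivative $b/a$. Integrating against it therefore produces a Jacobian factor $a/b$ and evaluates $f$ at $y^*=1-\tfrac ab(3-z-\lambda_2(s))$, provided $y^*\in(\lambda_1(s),1]$, and otherwise gives zero. The same check applies to $\breve M_3$. Both contributions stay bounded by $\tfrac ab\sup|c|\,\|f\|_\infty$.

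A second subtlety concerns $\breve N$, which contains $\breve M_1(z,\cdot,0)$. One must verify that no delta in $\breve M_1$ fires at $y=0$: $\delta(\lambda_1)$ vanishes because $\lambda_1>0$ under Assumption~\ref{assumption on delay}. The term $\delta(1-\lambda_1(s)-z)$ with $z=1-\lambda_1(s)$ is a genuine point mass in $z$, not in $y$. So $\breve N$ should be treated as a measure in $(s,y)$ composed through the transport map $\chi_5$, and integrating in $y$ gives $\tfrac{\tau}{D}$ times a bounded weight. If this pointwise-in-$z$ singularity is problematic in $L^\infty$, we would instead interpret $\breve{\mathcal T}^\mu$ as mapping into the same space as the distributional part of $\breve{\mathcal T}^w$, following the appendix treatment of Lemma~\ref{direct norm}.
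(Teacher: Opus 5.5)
Your proposal follows the paper's route: substitute the explicit inverse kernels, sift the Dirac terms, and bound the regular remainders. For $\breve{\mathcal T}^{\mu}$, the only case the paper writes out, it does exactly your composition through $\chi_5$: $\delta(Da\chi_5(s,y)-z)$ becomes a delta in $y$ at $y^*=1+\frac{z-Das}{\tau a}$, and the paper obtains $\breve{\mathcal T}^{\mu}[f](z,s)=\bigl(\breve m_1(\frac{z}{Da},0)-m_1(\frac{z}{Da},0)\bigr)f(y^*)$ for $0\le z\le Das$ (case $\tau\ge D$) and zero otherwise, so your fallback reinterpretation is unnecessary.

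There are two harmless slips. First, the net constant is $1$, not $\frac{\tau}{D}$, because the prefactor $\tau a$ in $\breve N$ cancels the Jacobian factor $\frac{1}{\tau a}$ of $y\mapsto Da\chi_5(s,y)-z$. Second, the weight also receives the delta coming from $\varrho(z,Das)$, with coefficient $m_0(Das)=-m_1(s,0)$, which is where the $-m_1$ term in the paper's formula comes from.
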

   The proof, analogous to that of Lemma~\ref{direct norm}, is omitted here for brevity. It can be found in Appendix~\ref{appendix:proof of lemma 6}.
    \begin{proposition}(Norm equivalence) \label{norm equivalance}
        The Lyapunov functions $V_1(t)$ and $V_2(t)$ defined in \eqref{infty norm of the original states} and \eqref{V0}, respectively, are norm-equivalent on $\mathcal{S}$, i.e., there exist $\varsigma_1>0$, $\varsigma_2>0$ such that
        \begin{align}
        \varsigma_1 V_1(t) \le V_2(t) \le \varsigma_2 V_1(t). \label{V0 L relation} 
        \end{align}
    \end{proposition}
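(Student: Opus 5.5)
The approach is to bound each target state by the original states using the forward transformations \eqref{u trans}--\eqref{v trans}, and each original state by the target states using the inverse transformations \eqref{w trans}--\eqref{gamma trans}. Squaring and summing then gives \eqref{V0 L relation}. The states $\alpha$, $\beta$, $X$ are common to both coordinate systems, so only $(w,\mu,\gamma)$ versus $(u,r,v)$ needs estimating. We also use the elementary inequality $(\sum_{i=1}^m x_i)^2\le m\sum_{i=1}^m x_i^2$ throughout.

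\emph{Upper bound $V_2\le\varsigma_2V_1$.}
\begin{itemize}
\item From \eqref{u trans}, with $\eta\in C(\mathcal{D}_1)$ and $\varphi$ explicit and continuous by \eqref{varphi, psi},
\begin{align*}
\|w\|_\infty\le(1+\|\eta\|_\infty)\|u\|_\infty+\|\varphi\|_\infty|X|.
\end{align*}
\item From \eqref{r trans}, $\psi$ is continuous and $q(\cdot,0)$ is bounded. For the term $\int_0^1 q(s,y)u(y)dy$ we use $q\in C([0,1];L^2(0,1))$ together with Cauchy--Schwarz:
\begin{align*}
\Bigl|\int_0^1 q(s,y)u(y)dy\Bigr|\le\sup_s\|q(s,\cdot)\|\,\|u\|_\infty.
\end{align*}
\item For $\gamma$ in \eqref{v trans}, Lemma~\ref{direct norm} bounds $\mathcal{T}^v,\mathcal{T}^\alpha,\mathcal{T}^\beta$ in $L^\infty$ by $\|v\|_\infty,\|\alpha\|_\infty,\|\beta\|_\infty$. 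Lemma~\ref{direct norm u} bounds $\mathcal{T}^u$ by $\|u\|_\infty$.
\item It remains to show $\phi$ is bounded on $[0,2]\times[0,1]$. Use the explicit formula for $\phi$. Here $\phi(z,0)$ is bounded because $M(z,0,\cdot)$ in \eqref{M(z,0,y)} is a bounded multiple of a delta, so $\int_0^1 M(z,0,y)\breve\varphi(y)dy=m_0(z)\breve\varphi(z)H(Da-z)$ is bounded. The remaining factor $\varPhi[k,l]$ is bounded by Lemma~\ref{direct norm}.
\end{itemize}
Collecting these estimates gives $\|w\|_\infty^2+\|\mu\|_\infty^2+\|\gamma\|_\infty^2\le C V_1$. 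Hence $V_2\le\varsigma_2V_1$ with $\varsigma_2=C+1$.

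\emph{Lower bound $\varsigma_1V_1\le V_2$.} The argument is symmetric, using the inverse transformations.
\begin{itemize}
\item $\breve\eta$ and $\breve q$ are well posed and bounded by Lemma~\ref{lemma: inverse kernel solution}. The functions $\breve\varphi,\breve\psi$ are explicit and continuous by \eqref{breve psi}--\eqref{breve varphi}, and $\breve\phi(z,s)=\mathrm{e}^{(A+BK)^{\mathrm T}Ds}\breve\phi(z,0)$ is bounded.
\item Therefore \eqref{w trans} and \eqref{mu trans} give $\|u\|_\infty,\|r\|_\infty\le C(\|w\|_\infty+\|\mu\|_\infty+|X|)$.
\item For $v$, Lemma~\ref{inverse norm} bounds $\breve{\mathcal{T}}^\gamma,\breve{\mathcal{T}}^\alpha,\breve{\mathcal{T}}^\beta,\breve{\mathcal{T}}^w,\breve{\mathcal{T}}^\mu$. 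This gives $\|v\|_\infty\le C(\|\gamma\|_\infty+\|\alpha\|_\infty+\|\beta\|_\infty+\|w\|_\infty+\|\mu\|_\infty+|X|)$.
\end{itemize}
Summing the squares yields $V_1\le C'V_2$, so $\varsigma_1=1/C'$.

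Alternatively, one can simply invoke Proposition~\ref{invertibility of v trans}. It asserts that $\mathcal{T}$ is bounded with bounded inverse on $\mathcal{S}$, and the norm on $\mathcal{S}$ squared is exactly $V_1$ (or $V_2$ in the new coordinates). Both inequalities then follow immediately.

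\emph{Main obstacle.} The delicate point is that $\mathcal{T}^u$, $\breve{\mathcal{T}}^w$ and $\breve{\mathcal{T}}^\mu$ have kernels containing Dirac deltas, coming from $M_1$ in \eqref{M1}, from $Y_3$, and from $\breve M_1$. As $L^\infty$ operators these act as point evaluations. The bounds $\|f\|_\infty$ must therefore be interpreted for the continuous representatives guaranteed by $\mathcal{S}_c$ initial data and the characteristic solution formulas, which is the content of Lemmas~\ref{direct norm u} and \ref{inverse norm}. The plan is to rely on those lemmas rather than reprove them. One must check that the delta weights $m_0,m_1,m_2,\breve m_1,\breve m_2$ are uniformly bounded, which holds since they are integrals of continuous $f$, $\breve f$ over bounded intervals.
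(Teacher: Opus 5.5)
Your proposal is correct and is essentially the paper's proof. The upper bound follows from the forward transformations \eqref{u trans}--\eqref{v trans} together with Lemmas~\ref{direct norm} and~\ref{direct norm u} and $\bar\phi=\sup|\phi|$, and the lower bound follows from the inverse transformations \eqref{w trans}--\eqref{gamma trans} together with Lemma~\ref{inverse norm}. Your added justifications (boundedness of $\phi$, the Cauchy--Schwarz step for $q\in C([0,1];L^2(0,1))$) and your shortcut via Proposition~\ref{invertibility of v trans} are both sound. Your continuity caveat is not needed: the delta terms produce compositions such as $f(\lambda_1(s))$ or $f(z-Das)$ with nondegenerate affine maps, and these are well defined on $L^\infty$ equivalence classes. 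One small correction: $\breve f$ is only piecewise continuous, with a jump at $z=Da$, but it is bounded, which is all your argument uses.
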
 
    \begin{proof}
        It follows directly from the transformations \eqref{u trans} and \eqref{r trans} that there exist positive constants $\theta_1$, $\theta_2$ such that
        \begin{align}
            \| w\|_{\infty}^2 \le&  \theta_1 (\| u\|_{\infty}^2+\lvert X \rvert^2), \label{w estimate} \\
         \| \mu \|_{\infty}^2 \le& \theta_2(\| r\|_{\infty}^2+\| u\|_{\infty}^2+\lvert X\rvert^2).\label{mu estimate}
        \end{align}
        
       Applying Lemmas~\ref{direct norm} and~\ref{direct norm u} to the transformation \eqref{v trans} yields
        \begin{align}
            \| \gamma\|_{\infty}^2
            \le& 6\|v\|_{\infty}^2+6\varSigma_{j\in \{v,\alpha,\beta,u\} }\|\mathcal{T}^{j}[j]\|_{\infty}^2 + 6\bar{\phi}^2\lvert X\rvert^2\notag \\
            \le& 6(1+C_v^2)\| v\|_{\infty}^2+6C_{\alpha}^2\| \alpha\|_{\infty}^2+6C_{\beta}^2\| \beta\|_{\infty}^2 \notag \\
            &+6C_u^2\| u \|_{\infty}^2+6\bar{\phi}^2\lvert X\rvert^2,
             \label{gamma norm estimate}
        \end{align}
        where $\bar\phi=\sup_{(z,s)\in[0,2]\times[0,1]}|\phi(z,s)|$. The right-hand inequality in \eqref{V0 L relation} follows from \eqref{w estimate}, \eqref{mu estimate}, and \eqref{gamma norm estimate}. The left-hand inequality is obtained via the inverse transformations \eqref{w trans}--\eqref{gamma trans} and Lemma~\ref{inverse norm}.
        \end{proof}

Since the transformations \eqref{u trans}--\eqref{v trans} and their inverses \eqref{w trans}--\eqref{gamma trans} are bounded operators on
$L^\infty$, they establish an isomorphism between the state spaces of
the original and target systems. Therefore, the well-posedness of the
target system implies that the closed-loop original system admits a unique
weak solution in $C([0,\infty);\mathcal{S})$.

    Theorem~\ref{main result} follows directly from Propositions~\ref{stability of target system} and~\ref{norm equivalance}.
    \begin{remark}
        The robustness against delay mismatches can be investigated by applying the backstepping transformations constructed for the nominal delays and analyzing the resulting perturbed target system via a suitable Lyapunov functional. In-domain delay mismatches are more challenging due to the Dirac-delta-type terms in the Fredholm transformation, which may require higher-order spatial regularity. Alternatively, filtering-based robustification approaches can be employed to improve robustness against delays and uncertainties~\cite{auriol2023robustification}.
    \end{remark}

  \begin{figure}[pb]
		\begin{center}
			\subfigure{\includegraphics[width=0.85\linewidth]{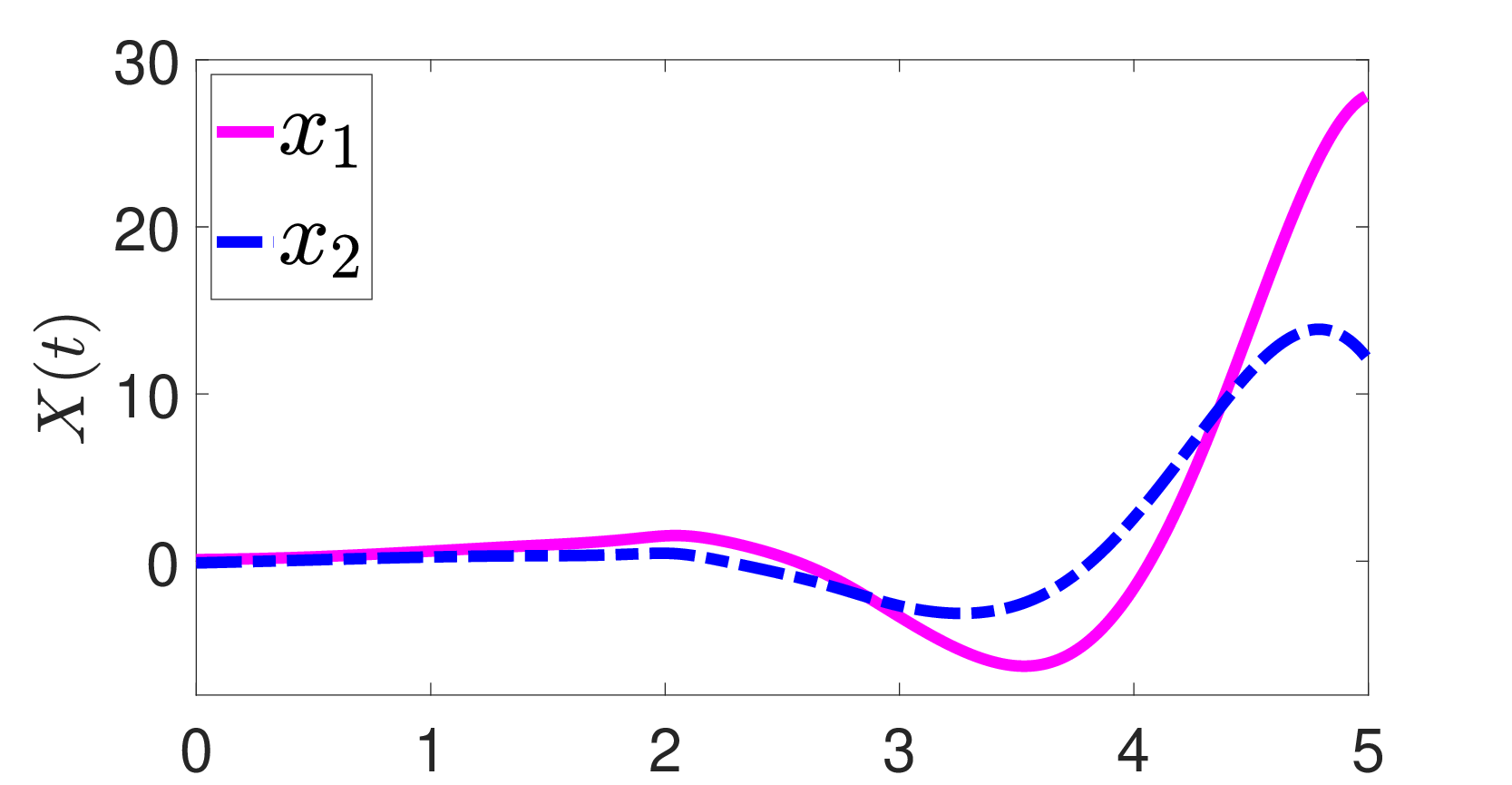}
			}
            
            \vspace{-12pt}
			\subfigure{\includegraphics[width=0.85\linewidth]{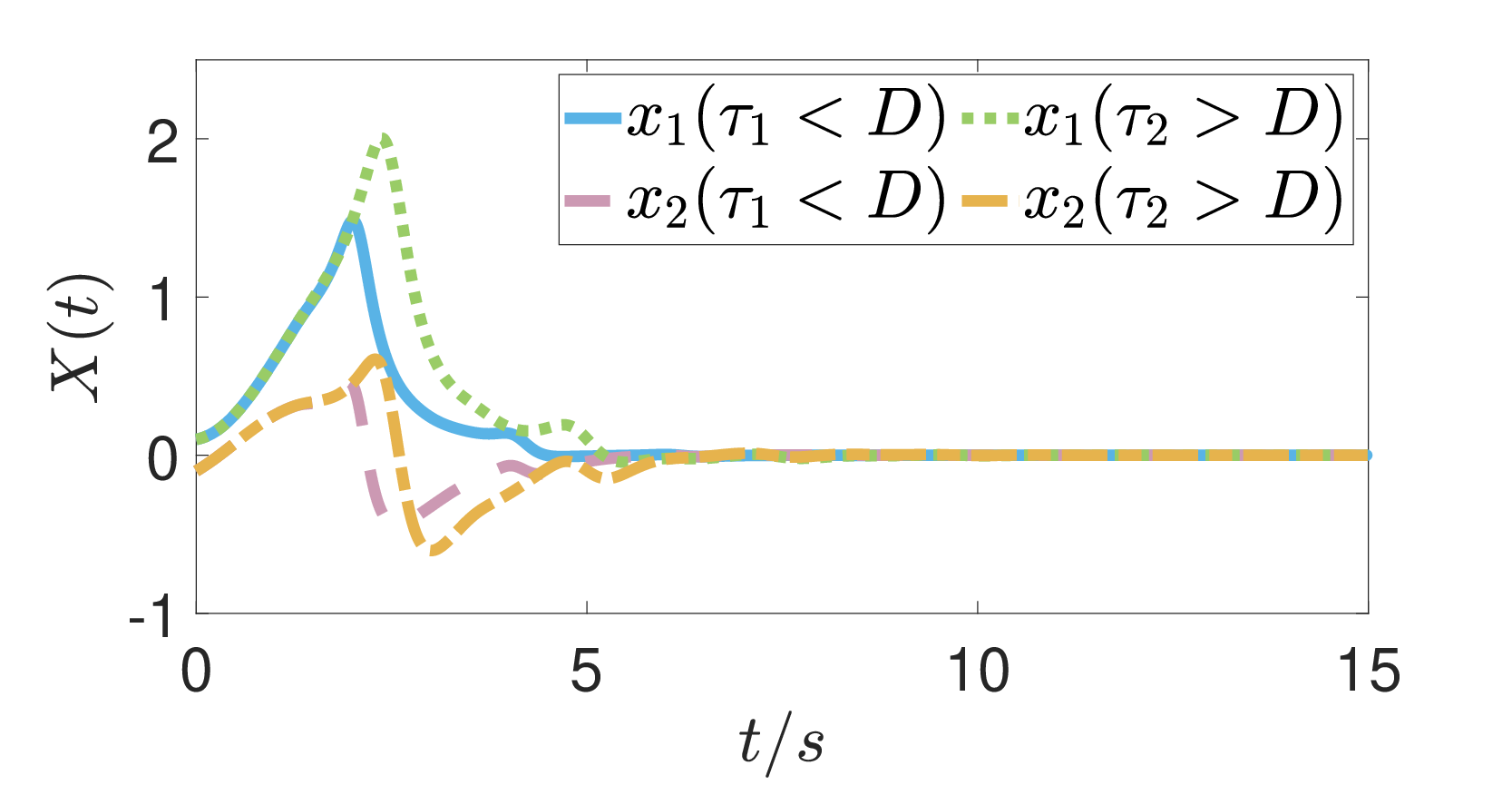}
			}
		\end{center}
        \vspace{-15pt}
        \caption{Top: State $X(t)$ without delay compensation for $\tau=0.8$ and $D=1$. Bottom: State $X(t)$ under the respective delay compensation for two cases: $(\tau_1,D)=(0.8,1)$ and $(\tau_2,D)=(1.2,1)$.}
        \label{time evolution of ODE}
    \end{figure}

\section{Numerical Simulation}\label{simulation}
    To validate  the theoretical results, we present two numerical examples. The first is an unstable linear system with a 2D ODE, and the second is a screw extrusion model for the melt spinning process, which is linearized around its equilibrium point (operating point).

    \subsection{Unstable Linear System}
    The coefficients are set as follows: $a=0.8$, $b=0.4$, $A=[1,0; 1, -2]$, $B=[1, 1]^{\mathrm{T}}$, $c_1=-0.5$, $c_2=0.5$, $d_1=2$, $d_2=1.5$, $f(z)=\mathrm{e}^z$, $g_1(z)=2z$, $h_1(z)=[\mathrm{e}^{-z}, -\mathrm{e}^{2z}]^{\mathrm{T}}$, $g_2(z)=5(z-1)$, $h_2(z)=[\mathrm{e}^{z-1}, \mathrm{e}^{2-z}]^{\mathrm{T}}$, $\theta=0$, $\kappa_1=2$, $\kappa_2=4$, and $D=1$. Note that we perform the simulation under two distinct boundary input delays: $\tau_1=0.8$ and $\tau_2=1.2$. We set the control gain to $K=[-4, -1]$ and the design parameters in the target system to $\zeta_1=3$ and $\zeta_2=5$ (corresponding to $\bar c_1$ and $\bar c_2$, respectively).
    
    The coupled PDE-ODE system is solved numerically using the finite difference method with temporal and spatial step sizes $\Delta t=0.01$, $\Delta z=0.01$, and $\Delta s=0.02$. The initial conditions are chosen as $X_0=[0.1, -0.1]^\mathrm{T}$, $u_0(z)=0.3\sin(\pi z)$, $\alpha_0(z)=3\sin(2\pi z)$, $\beta_0(z)=3\sin(2\pi z)$, $r_0(s)=0.3\cos(\frac{3\pi}{2}s)$, and $v_0(z,s)=2\sin(\pi zs)$.
    \begin{figure*}[htp]
		\begin{center}
            \subfigure[]{\includegraphics[width=0.30\linewidth]{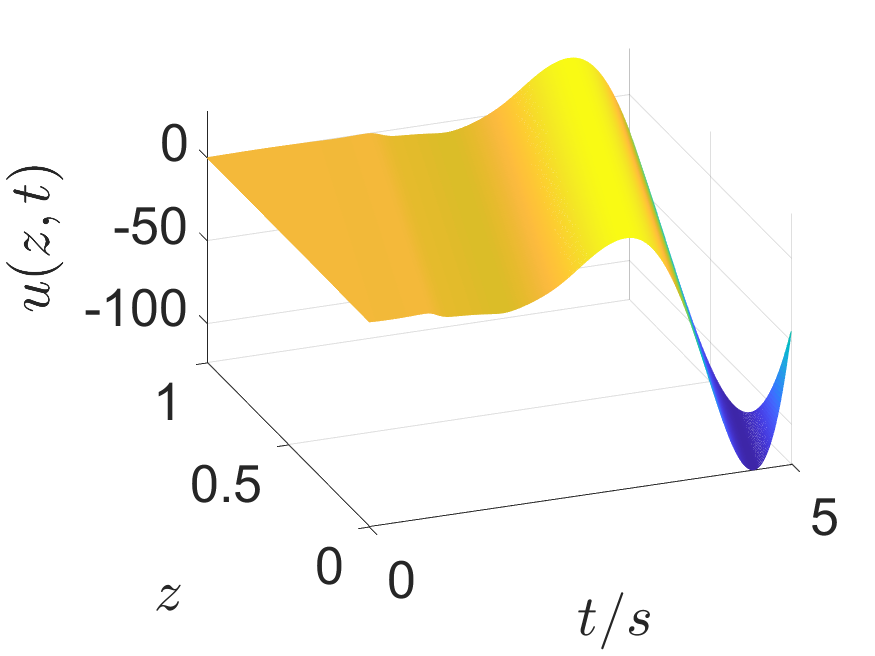}\label{u evolution uncompensated}
			}
			\subfigure[]{\includegraphics[width=0.30\linewidth]{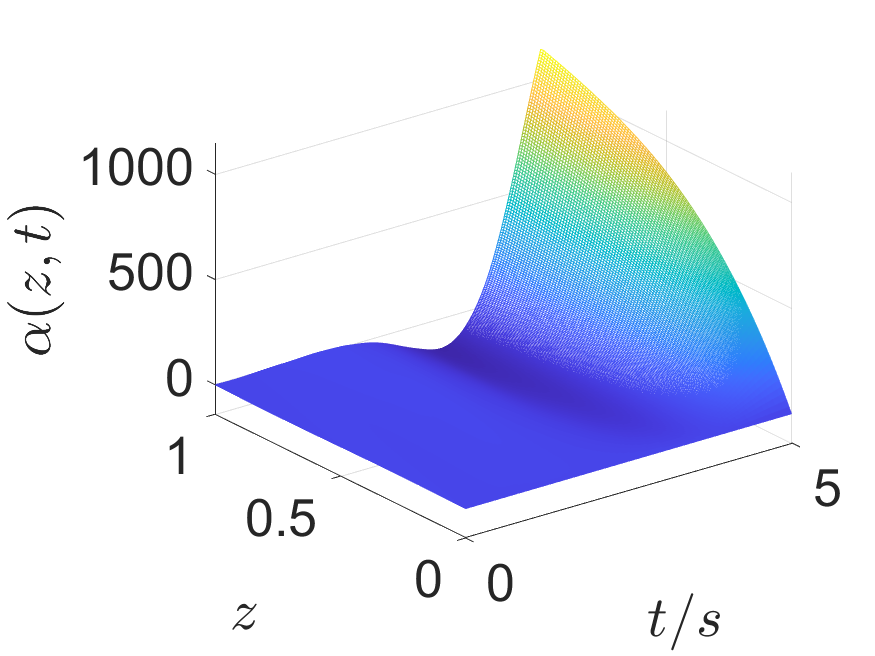}\label{alpha evolution uncompensated}
			}
			\subfigure[]{\includegraphics[width=0.30\linewidth]{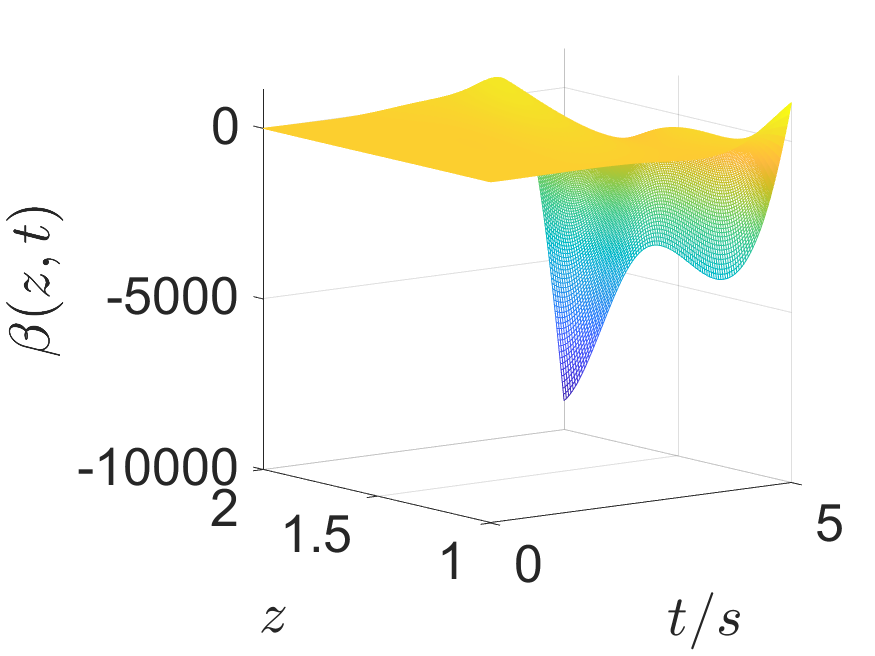}\label{beta evolution uncompensated}
			}
            \vspace{-10pt}
            
			\subfigure[]{\includegraphics[width=0.30\linewidth]{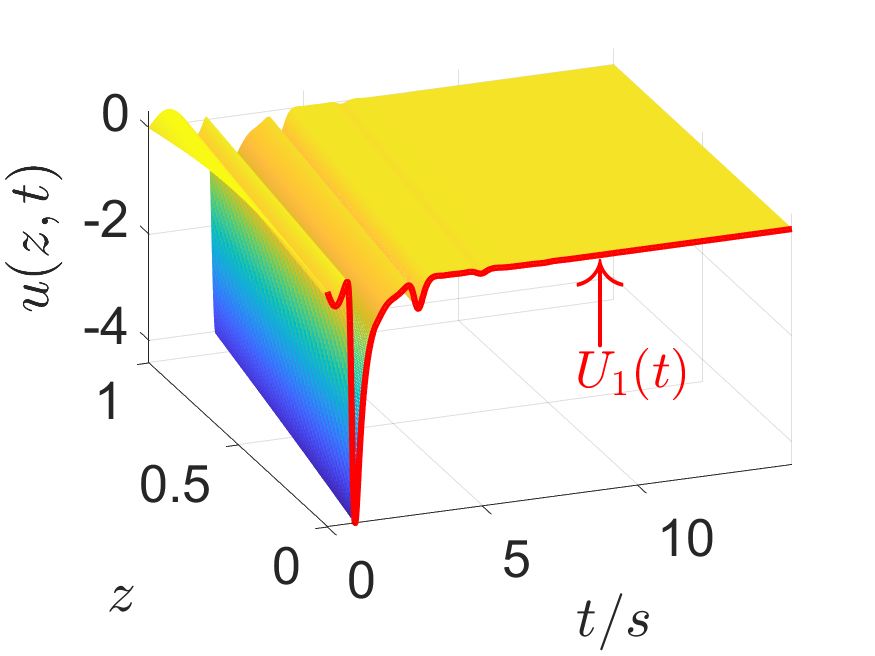}\label{u evolution}
			}
			\subfigure[]{\includegraphics[width=0.30\linewidth]{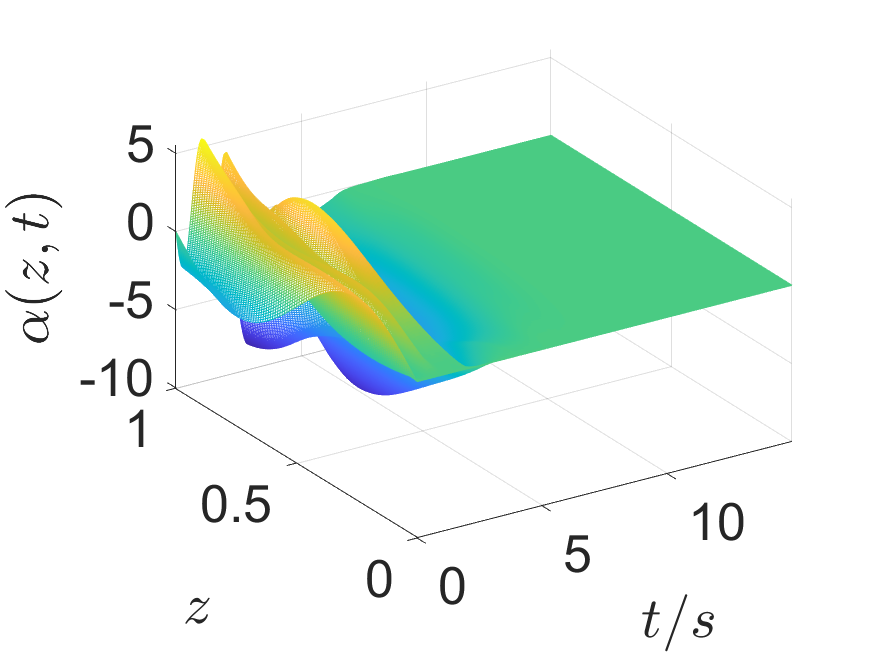}\label{alpha evolution}
			}
			\subfigure[]{\includegraphics[width=0.30\linewidth]{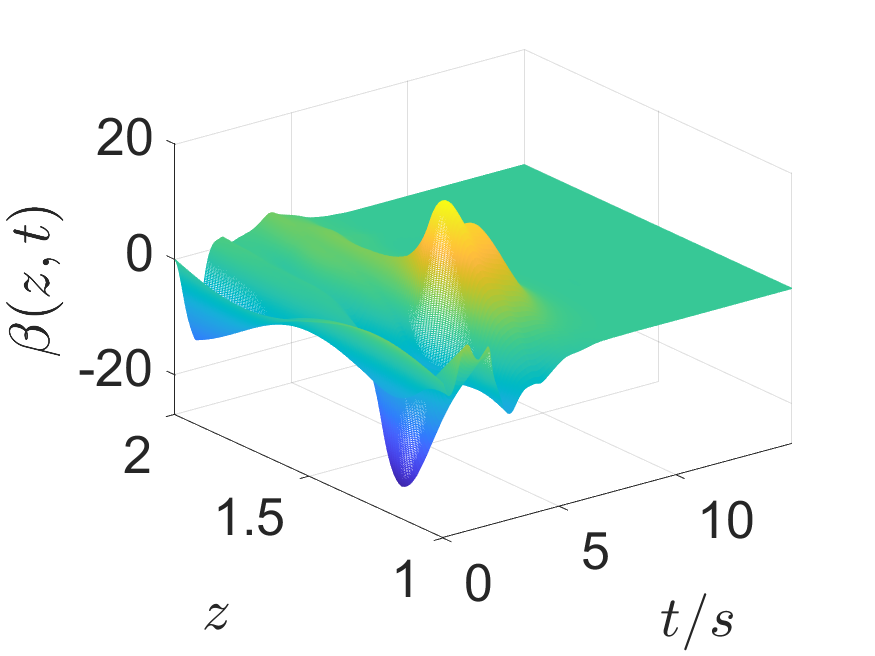}\label{beta evolution}
			}            
			\vspace{-8pt}
            
			\subfigure[]{\includegraphics[width=0.30\linewidth]{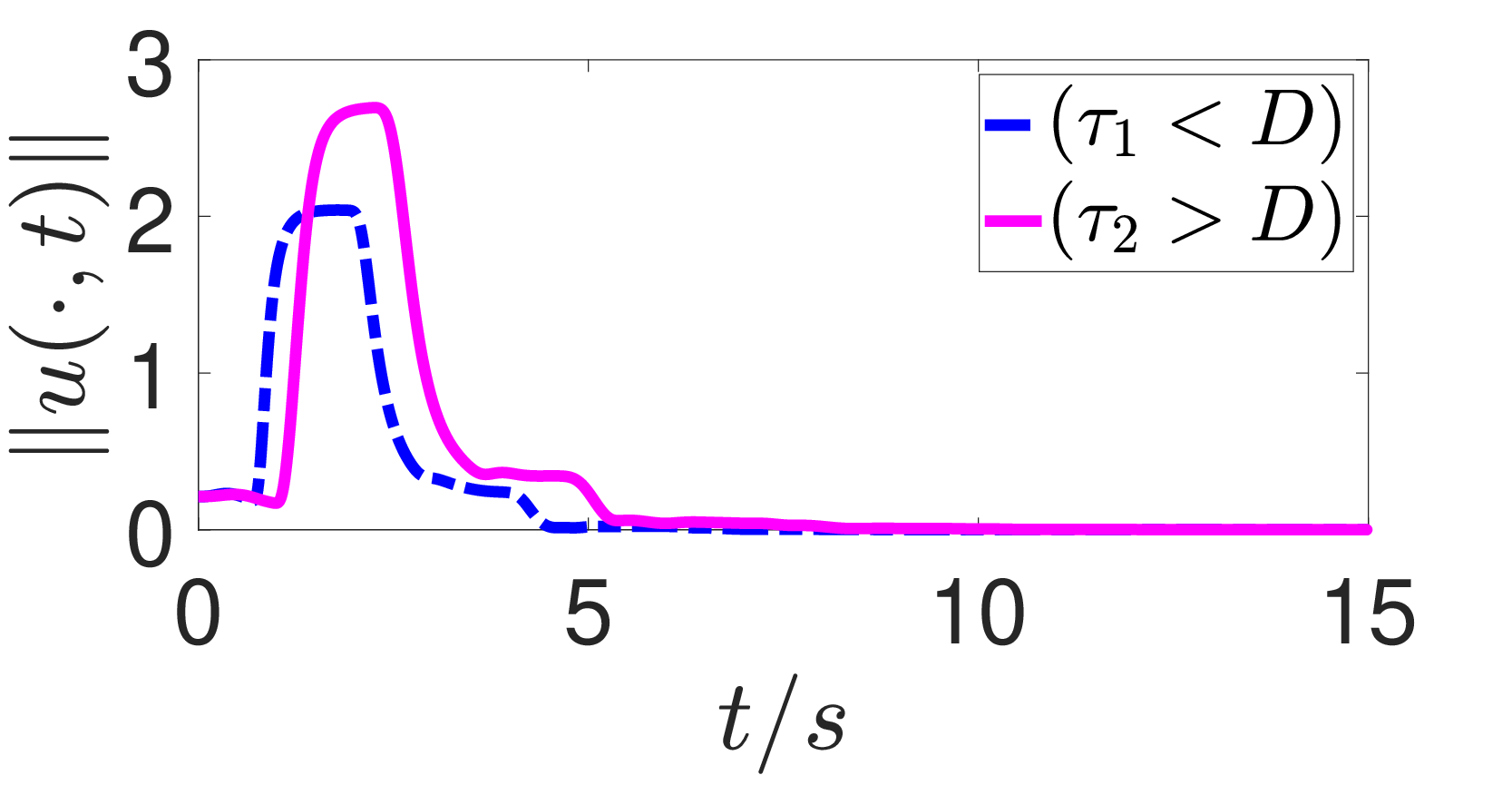}\label{u norm}
			}
			\subfigure[]{\includegraphics[width=0.30\linewidth]{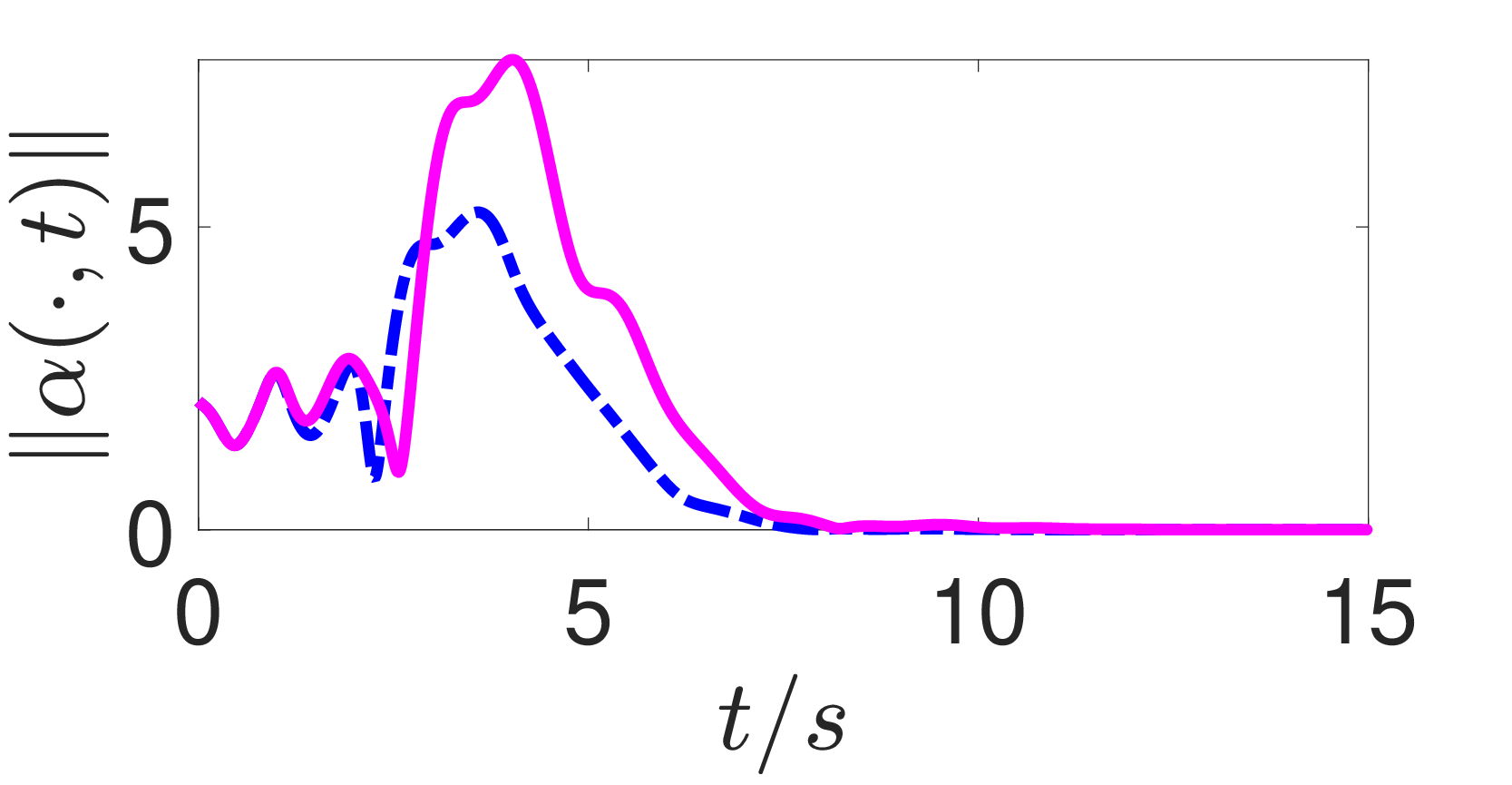}\label{alpha norm}
			}
			\subfigure[]{\includegraphics[width=0.30\linewidth]{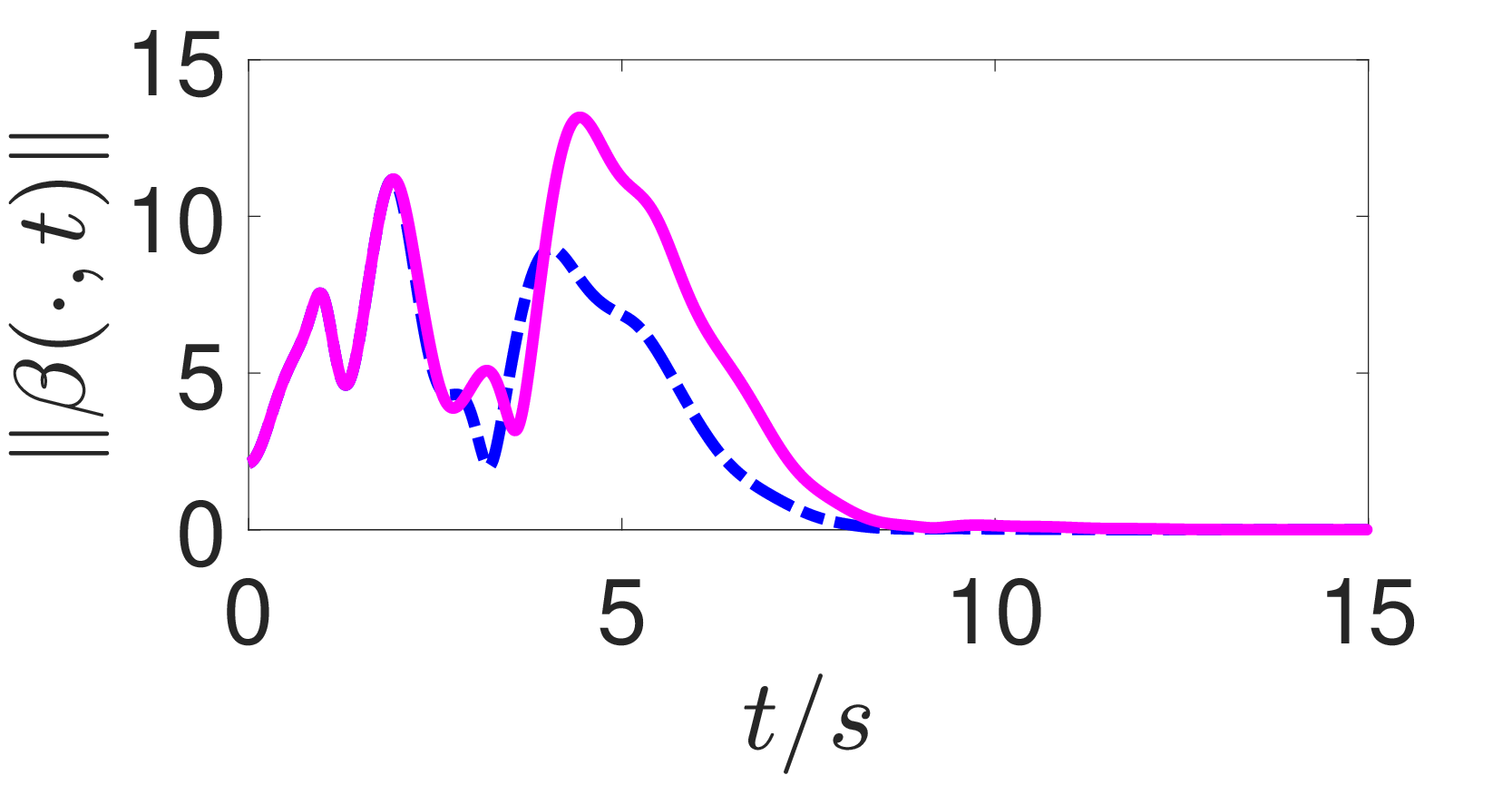}\label{beta norm}
			}            
		\end{center}
            \vspace{-10pt}
            \caption{Closed-loop responses without delay compensation and with delay-compensated. \subref{u evolution uncompensated}-\subref{beta evolution uncompensated} show the dynamics of the states without delay compensation for $\tau=0.8$ and $D=1$. \subref{u evolution}-\subref{beta evolution} show the dynamics of the states under the delay compensation for $\tau=0.8$ and $D=1$. \subref{u norm}-\subref{beta norm} show the $L^2$ norm of the states under the respective delay compensation for two cases: $(\tau_1,D)=(0.8,1)$ and $(\tau_2,D)=(1.2,1)$.
            }
            \label{Comparison of states between delay-compensated and delay-uncompensated}
    \end{figure*}
    
    \begin{figure}[pb]
	\begin{center}
    \subfigure{\includegraphics[width=0.90\linewidth]{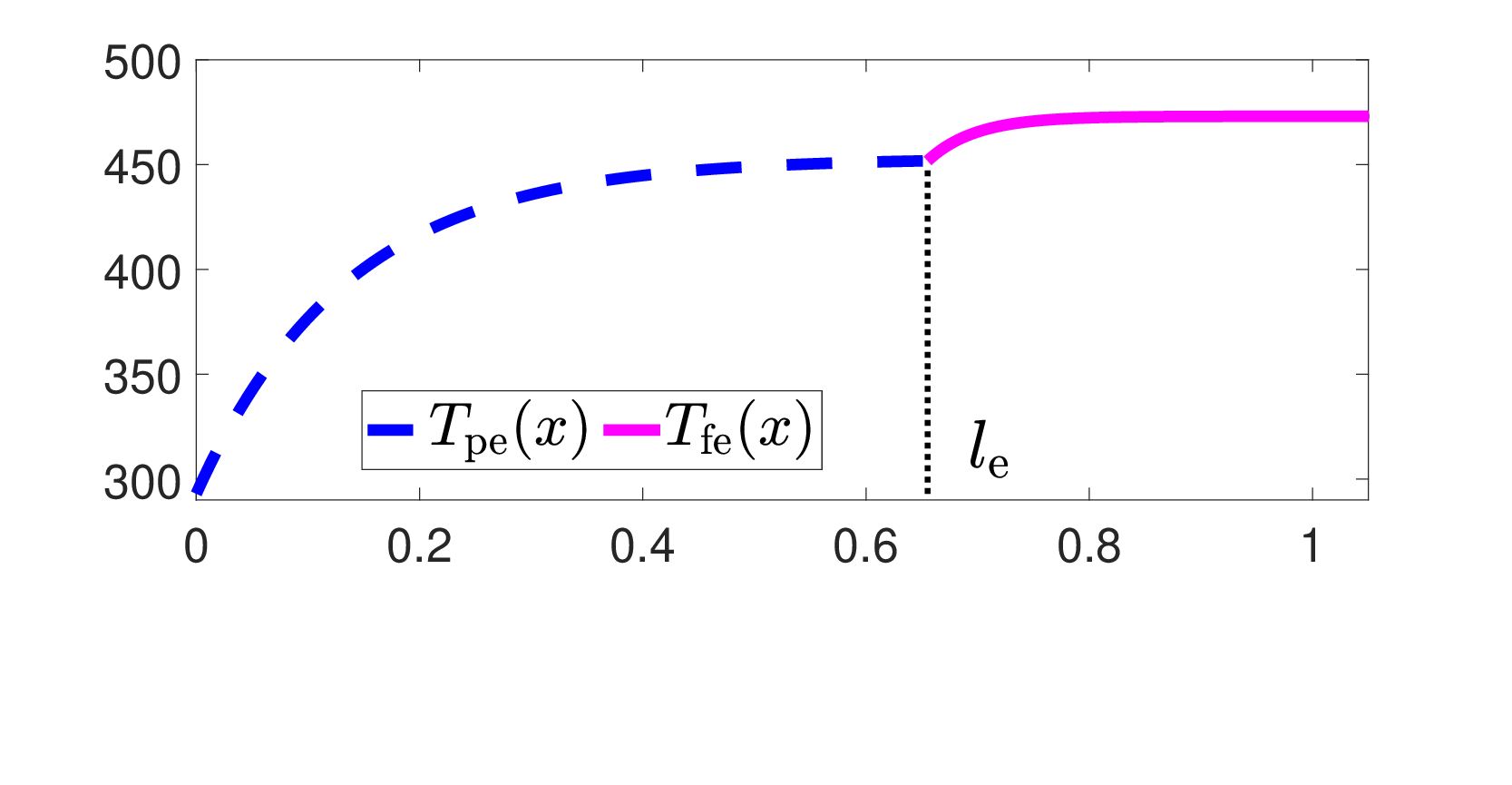}}	
	\end{center}
        \vspace{-30pt}
        \caption{Equilibrium temperature profile.}
        \label{Equilibrium temperature}
    \end{figure}
    For the case without delay compensation, we use the  following nominal controllers:
    \begin{subequations}
		\begin{align}
			U_1(t) =& \int_{0}^{1}\eta (0,y)u(y,t)dy +\varphi (0)^{\mathrm{T}}X(t),\\
			U_2(z,t) =&-\frac{d_1}{\kappa_1}\alpha(1,t)-\frac{\zeta_1}{\kappa_1}\alpha(z,t),   \quad z\in[0,1), \\
			U_2(z,t)
			=&-\frac{d_2}{\kappa_2}\beta(2,t)-\frac{\zeta_2}{\kappa_2}\beta(z,t),   \quad z\in[1,2],
		\end{align}
    \end{subequations}
   where $\varphi$, $\eta$ are solutions of \eqref{varphi, psi} and \eqref{eta, p, q}.

    Although the control laws \eqref{U1} and \eqref{U2} involve kernel functions containing Dirac delta distributions, the corresponding singular terms can be evaluated analytically through integration. For the kernel functions $k$ and $l$, substituting their explicit expressions into the control laws eliminates the delta terms through integration, resulting in explicit control gains. For the kernel function $M$, which is decomposed into a Dirac-delta-related part and a bounded part, the Dirac-delta-related part is treated analytically using the shifting property of the Dirac delta function, while the bounded component is computed by truncating the convergent successive approximation series established in Lemma~\ref{well-posedness M}. In the simulations, the series is truncated after three iterations, since the remaining terms contribute less than $10^{-5}$.

   We compare the system responses under the nominal controllers (without delay compensation) and the delay-compensated controllers, as shown in 
    Figs.~\ref{time evolution of ODE} and~\ref{Comparison of states between delay-compensated and delay-uncompensated}. The simulation results illustrate that the delay-compensated controllers successfully stabilize the system. In contrast, the nominal controllers are unable to achieve stabilization in the presence of input delays.
    \begin{figure*}[tpb]
		\begin{center}
            \subfigure[]{\includegraphics[width=0.30\linewidth]{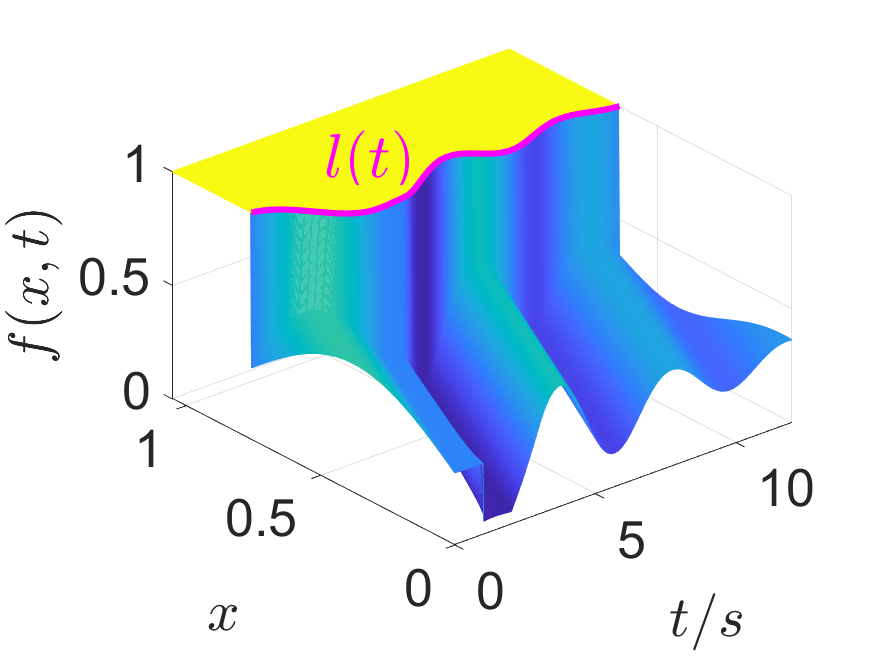}\label{nonlinear filling ratio evolution uncompensated}
			}
			\subfigure[]{\includegraphics[width=0.30\linewidth]{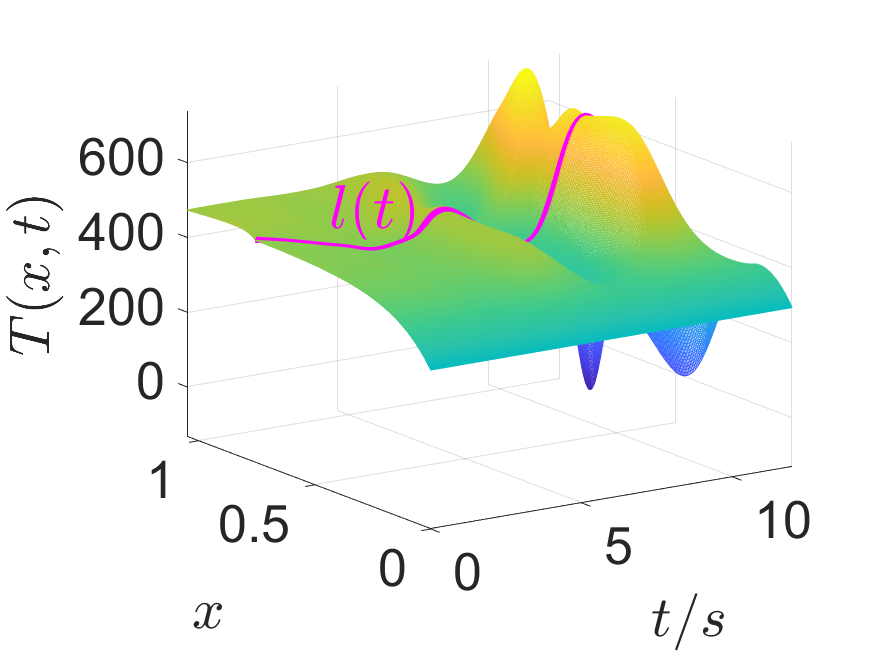}\label{nonlinear temperature evolution uncompensated}
			}
			\subfigure[]{\includegraphics[width=0.30\linewidth]{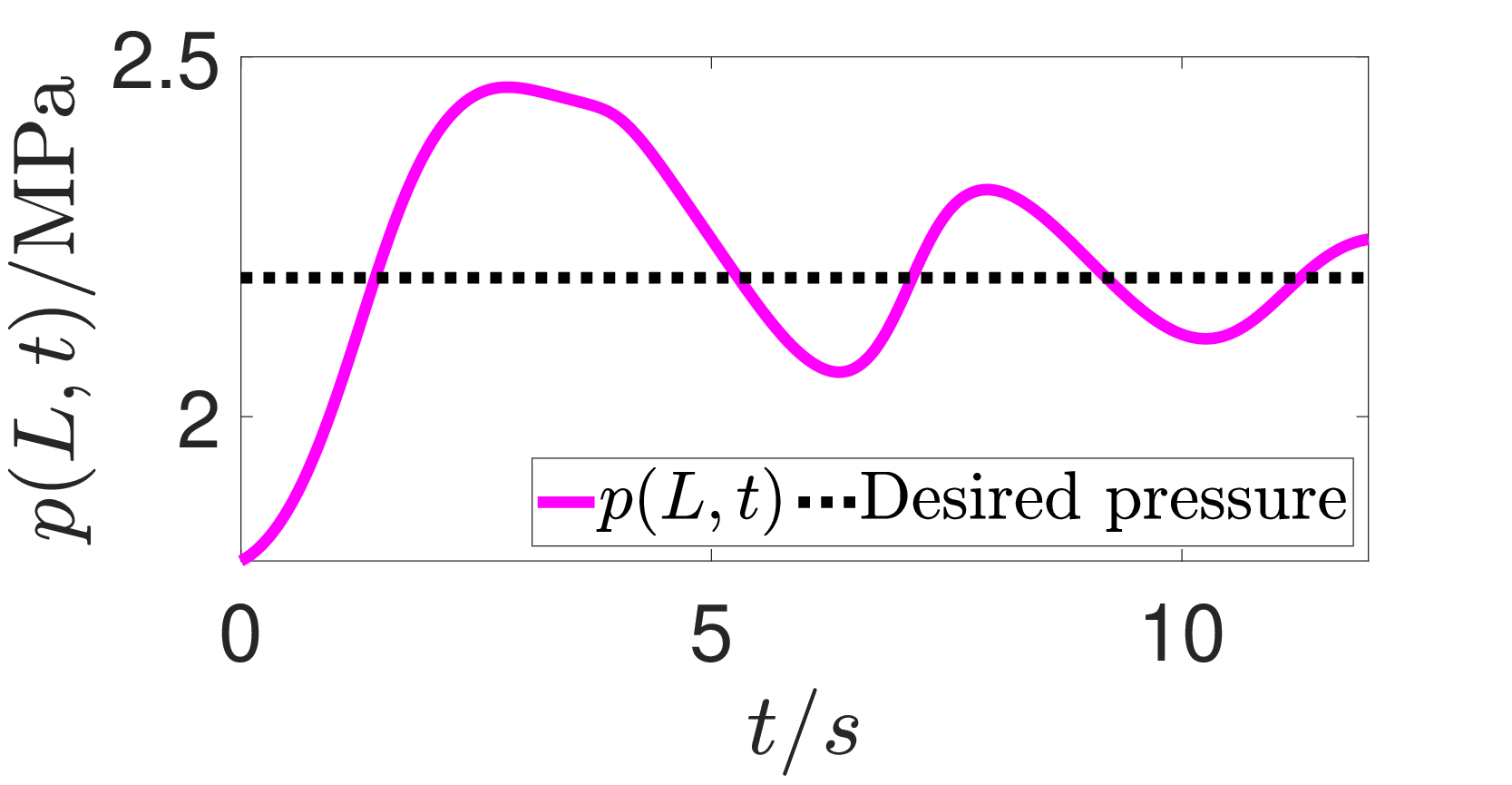}\label{nonlinear p evolution uncompensated}
			}            
            \vspace{-8pt}
            
            \subfigure[]{\includegraphics[width=0.30\linewidth]{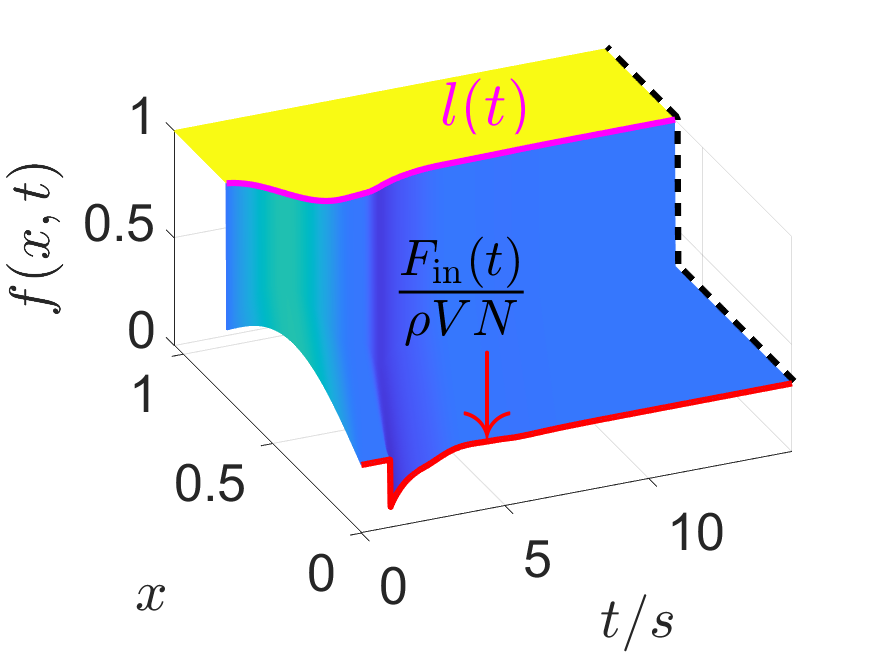}\label{nonlinear filling ratio evolution}
			}
			\subfigure[]{\includegraphics[width=0.30\linewidth]{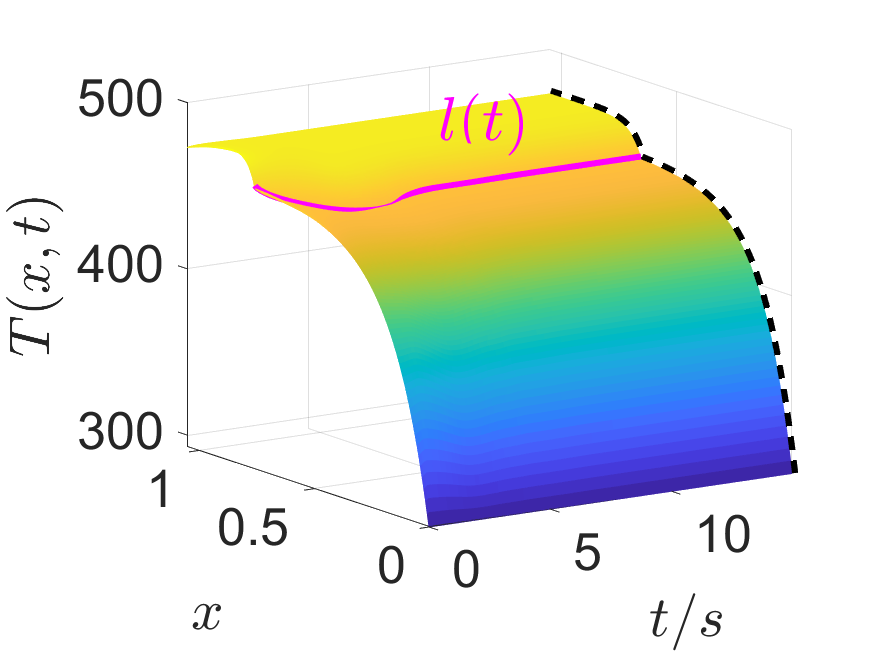}\label{nonlinear temperature evolution}
			}
			\subfigure[]{\includegraphics[width=0.30\linewidth]{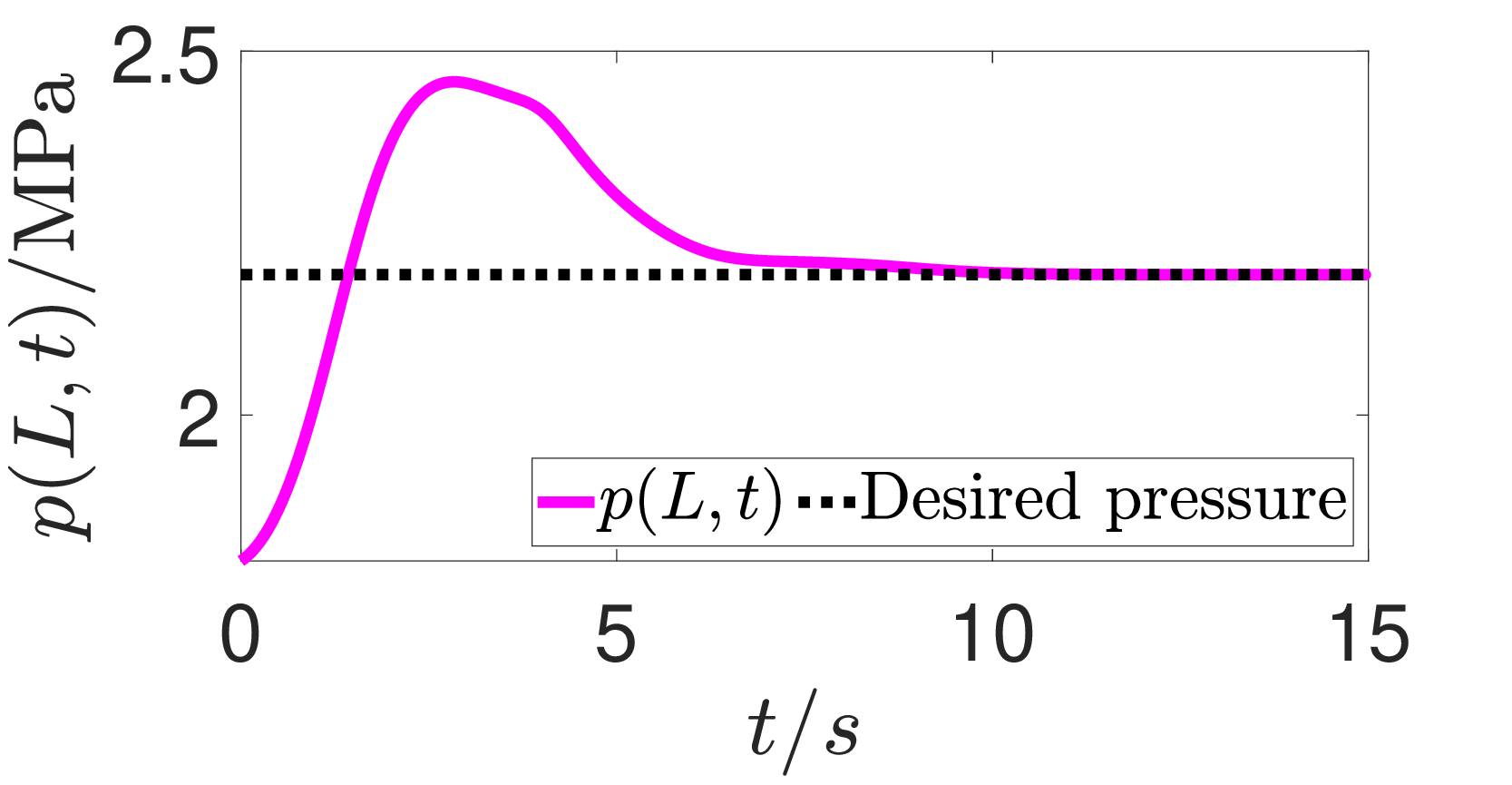}\label{nonlinear p evolution}
			}            
            \vspace{-8pt}
			
			\subfigure[]{\includegraphics[width=0.30\linewidth]{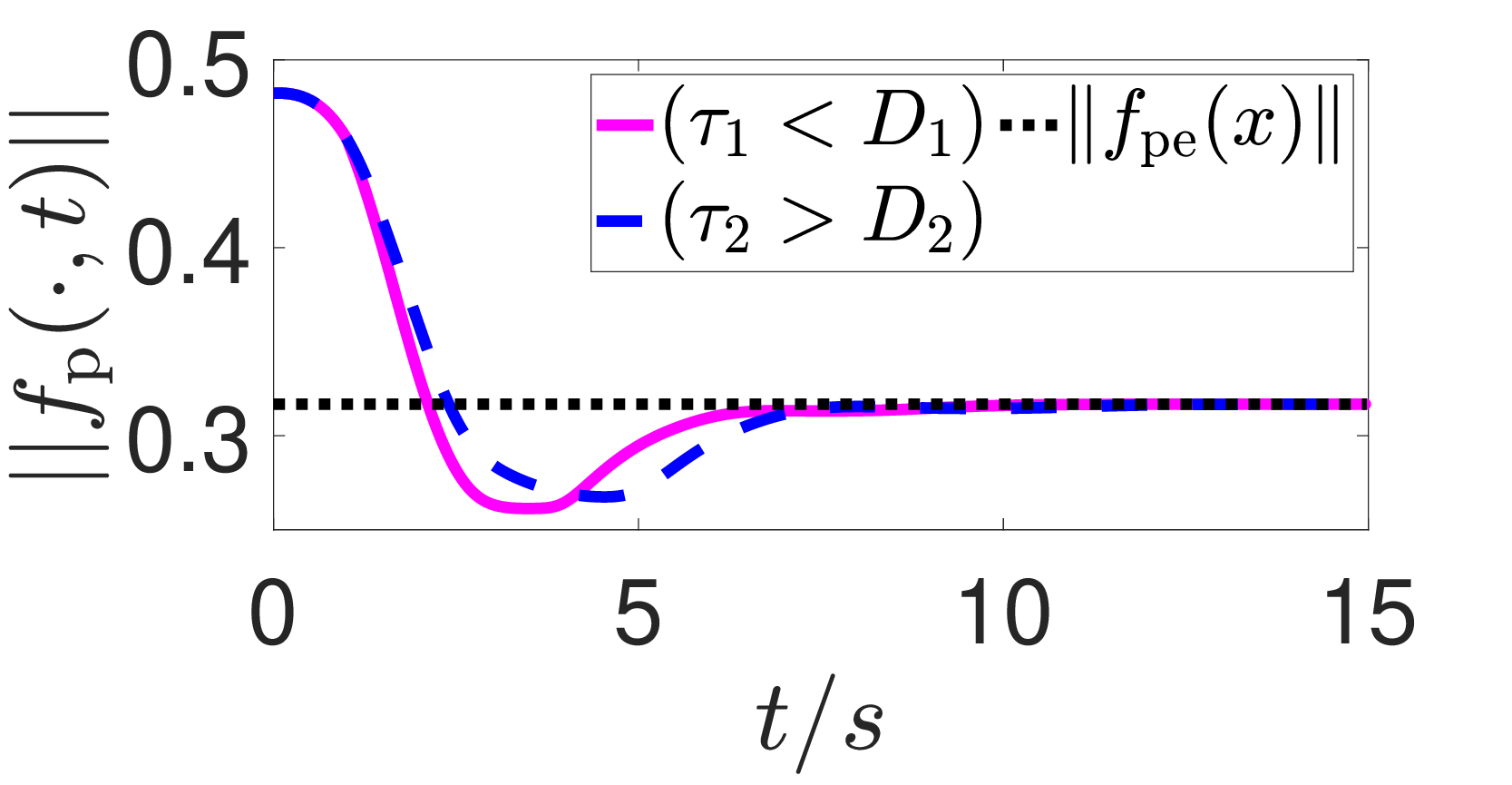}\label{nonlinear fp norm}
			}
			\subfigure[]{\includegraphics[width=0.30\linewidth]{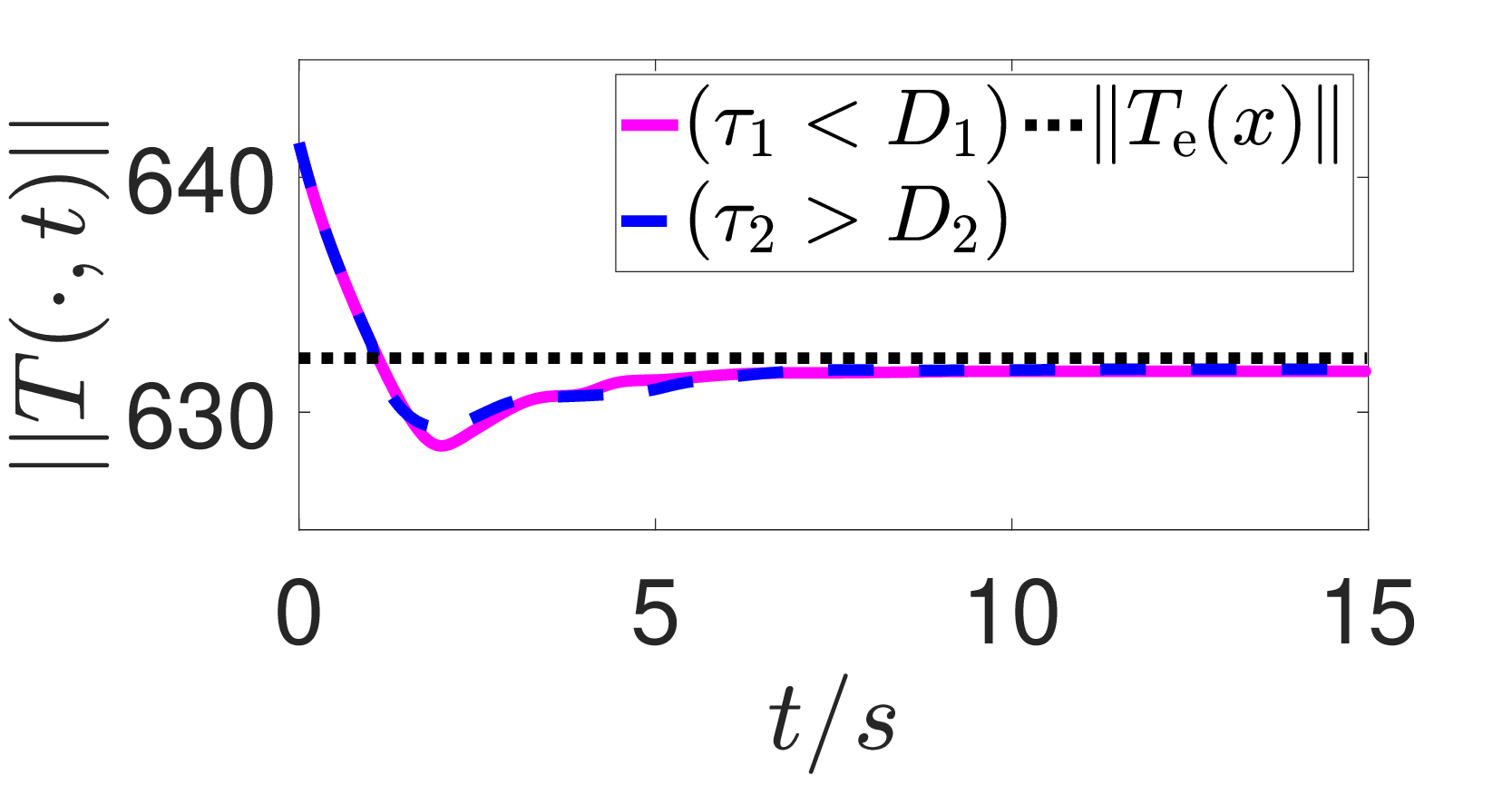}\label{nonlinear T norm}
			}
			\subfigure[]{\includegraphics[width=0.30\linewidth]{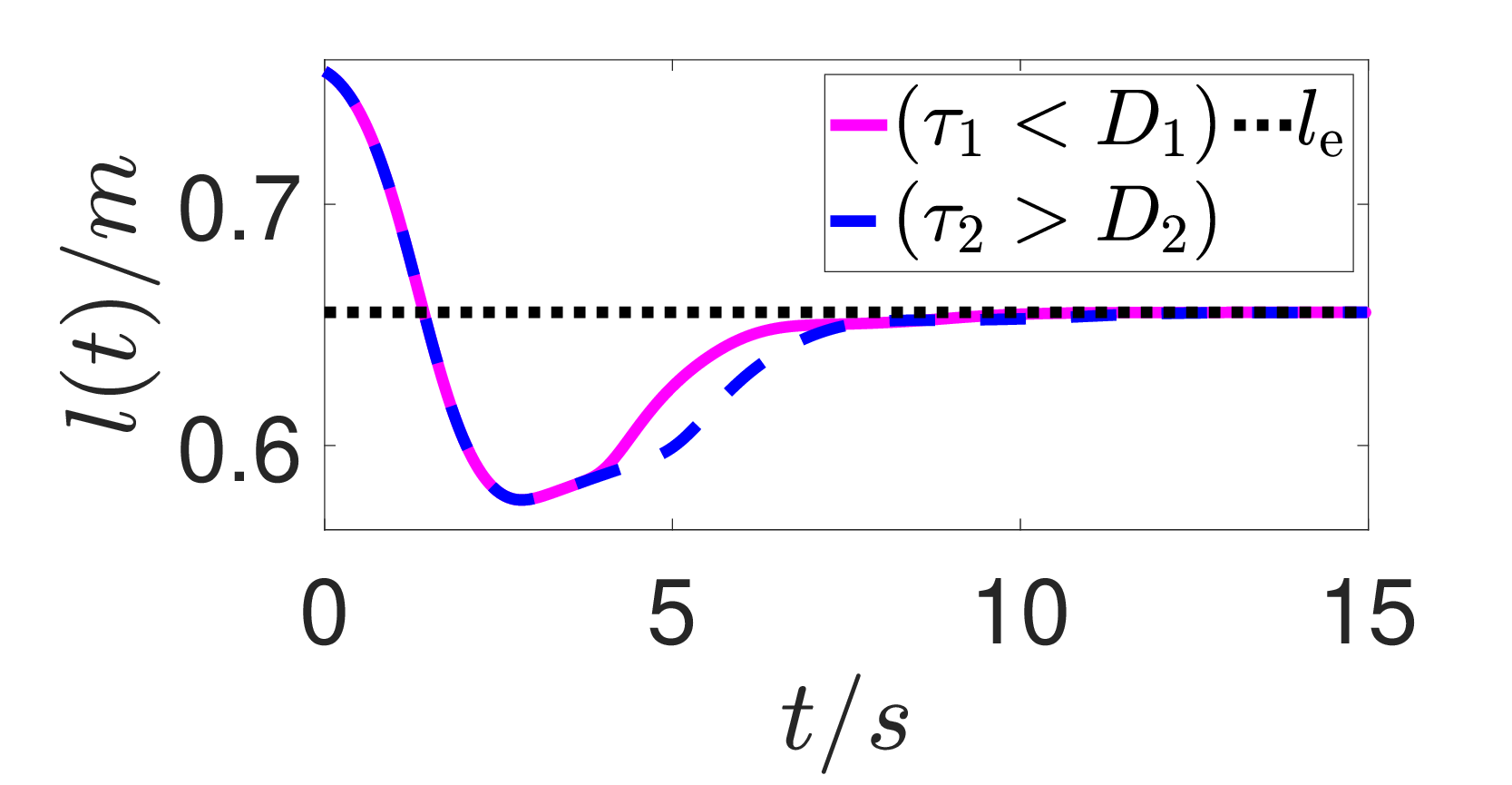}\label{nonlinear l norm}
			}            
		\end{center}
            \vspace{-10pt}
            \caption{Closed-loop response of the screw extrusion system without delay compensation and with delay-compensated. $f:$ filling ratio. $T:$ temperature. \subref{nonlinear filling ratio evolution uncompensated}-\subref{nonlinear p evolution uncompensated} show the dynamics of the states without delay compensation for $\tau=1$ and $D=2$. \subref{nonlinear filling ratio evolution}-\subref{nonlinear p evolution} show the dynamics of the states under the delay compensation for $\tau=1$ and $D=2$. The black dashed lines show the equilibrium profile. \subref{nonlinear fp norm}-\subref{nonlinear l norm} show the $L^2$ norm of the states under the respective delay compensation for two cases: $(\tau_1,D_1)=(1,2)$ and $(\tau_2,D_2)=(2,1)$.}
            \label{fig:closed-loop dynamics of extrusion}
    \end{figure*}
    \vspace{-3pt}
    
    \subsection{Screw Extrusion Process}
    We consider a  melt-spinning screw extrusion model, where the barrel temperature and the feed rate serve as manipulated inputs, with the objective of achieving the desired temperature and pressure at the die exit of the extrusion chamber. The state variables include the incompletely melting polymer temperature $T_\mathrm{p}(x,t)$, the molten polymer temperature $T_\mathrm{f}(x,t)$, the filling ratios $f_\mathrm{p}(x,t)$  and  $f_\mathrm{f}(x,t)$ in the partially filled zone (PFZ) and  in the fully filled zone (FFZ), respectively, the interface position $l(t)$ between the PFZ and the FFZ, and the chamber pressure $p(x,t)$ in the FFZ. Note that $f_\mathrm{f} \equiv 1$ in the FFZ, as the polymer is fully melted.
    
    The mathematical model is presented as follows \cite{diagne2011well, dos2013introduction}:
    \begin{subequations}\label{screw extrusion model}
    \begin{align}
		\frac{\partial f_\mathrm{p}}{\partial t}(x,t) = &-\xi N\frac{\partial f_\mathrm{p}}{\partial x}(x,t), \qquad \quad  \quad x\in (0,l(t)),
		\label{eq:mass balance of PFZ}\\
		\frac{\partial T_\mathrm{p}}{\partial t}(x,t) = 
		&-\xi N\frac{\partial T_\mathrm{p}}{\partial x}(x,t)+\frac{\mu_\mathrm{p}\eta N^2}{f_\mathrm{p}(x,t)\rho Vc_\mathrm{p}}  \notag \\
		+\frac{S_\mathrm{ech}\hbar}{\rho Vc_\mathrm{p}}\Bigl(&T_\mathrm{b}(x, t-D)-T_\mathrm{p}(x,t)\Bigr), ~x\in (0,l(t)),
		\\
		\frac{\partial T_\mathrm{f}}{\partial t}(x,t) =&-\frac{F_\mathrm{d}(t)\xi}{\rho V}\frac{\partial T_\mathrm{f}}{\partial x}(x,t)+\frac{\mu_\mathrm{f} \eta N^2}{\rho Vc_\mathrm{f}}
		\notag \\
		+\frac{S_\mathrm{ech}\hbar}{\rho Vc_\mathrm{f}}\Bigl(&T_\mathrm{b}(x, t-D)-T_\mathrm{f}(x,t)\Bigr), ~x\in (l(t), L),
		 \\
		\frac{\partial p}{\partial x}(x,t) = &\eta \frac{\rho VN-F_\mathrm{d}(t)}{B\rho}, \quad \qquad \quad x\in (l(t), L),\label{eq:mass balance of FFZ}\\
		\frac{dl(t)}{dt} =& \frac{F_\mathrm{d}(t)-\rho VNf_\mathrm{p}(l(t),t)}
		{\rho S_\mathrm{eff}(1-f_\mathrm{p}(l(t),t))} ,  \label{eq:coupling interface}
	\end{align}
    \end{subequations}
	with boundary conditions:
	\begin{align*}
		f_\mathrm{p}(0,t) \!=\!\frac{F_{\rm{in}}(t\!-\!\tau)}{\rho VN}, 
		T_\mathrm{p}(0,t) \!=\! T_\mathrm{in}, 
		T_\mathrm{f}(l(t),t) \!=\! T_\mathrm{p}(l(t),t),
	\end{align*}
    where the feed rate $F_{\rm{in}}(t)$ and the barrel temperature $T_\mathrm{b}(z,t)$ are control inputs; $\tau$ and $D$ are time delays in the feed rate and barrel temperature, respectively, and $F_\mathrm{d}(t)= \frac{k_\mathrm{d}}{\eta}(p(L,t)-P_0)$ is the net flow rate at the die. 
    We use the same parameters as in \cite{lotero2017distributed, zhong2022adaptive}, and let $\alpha=400$, $\mu_\mathrm{p}, \mu_\mathrm{f}=10^{-4}$, and $\eta=0.2$. $L=1.05\,\mathrm{m}$ represents the extruder length.

    The desired  temperature and  pressure at the die exit are set to  $T_\mathrm{f}(L,t)=473\,\mathrm{K}$, $p(L,t)=2.2\,\mathrm{MPa}$, thus the steady-state operating point of  \eqref{screw extrusion model} is given by  $F_\mathrm{in}^{\star}=1.5\,\mathrm{kg/h}$; $T_\mathrm{b}^{\star}(x)=453\,\mathrm{K}$ for PFZ, $T_\mathrm{b}^{\star}(x)=473\,\mathrm{K}$ for FFZ, and the equilibrium: $l_\mathrm{e}=0.6552\,\mathrm{m}$, $f_\mathrm{pe}(x)=0.3168~(x\in[0,l_\mathrm{e}])$, $T_\mathrm{pe}(x)=453.01-160.01\mathrm{e}^{-7.41x}/\mathrm{K}~(x\in[0,l_\mathrm{e}])$, and $T_\mathrm{fe}(x)=473-9.59\times10^{7}\mathrm{e}^{-23.39x}/\mathrm{K}~(x\in[l_\mathrm{e}, L])$. The equilibrium temperature profile is shown in Fig.~\ref{Equilibrium temperature}. 

     The model \eqref{screw extrusion model} is related to the system \eqref{main model} by first mapping the time-varying spatial domain to a fixed reference domain and then linearizing the resulting dynamics about the equilibrium point. Standard procedures for the fixed-domain transformation and linearization can be found in \cite{dos2013introduction}. More details are provided in Appendix~\ref{appendix:transformation of screw extrusion system}.

    This yields the linear system \eqref{main model} with coefficients $a=0.2925$, $b=0.1538$, $c_1=c_2=-1.4202$, $d_1=d_2=0$, $f=-0.0471$, $\theta=0$, $A=-0.1538$, $B=-0.2805$, and $\kappa_1=\kappa_2=1.4202$. The control parameters are chosen as $\zeta_1=5$, $\zeta_2=5$, and $K=8A/B$. We consider two heterogeneous-delay cases: $(\tau_1, D_1)=(1, 2)$ and $(\tau_2, D_2)=(2, 1)$.

    Nevertheless, we design the controllers \eqref{U1} and \eqref{U2} for the linear system \eqref{main model}, and apply the controllers directly to the nonlinear model \eqref{screw extrusion model}, with results shown in Fig.~\ref{fig:closed-loop dynamics of extrusion}.    
    The simulation is performed using the finite difference method after converting the time-varying domain PDEs to fixed-domain PDEs via introducing coordinate transformations as illustrated in \cite{dos2013introduction,zhong2022adaptive}. The step sizes are set to $\Delta t=0.02$, $\Delta z=0.01$, and $\Delta s=0.02$. The initial conditions in the fixed domains are: $\bar{f}_\mathrm{p}(z,0)=0.25\sin(\pi z)+\bar{f}_\mathrm{pe}(z)$, $\bar{T}_\mathrm{p}(z, 0)=10\sin(\frac{\pi}{2} z)+\bar{T}_\mathrm{pe}(z)$, $\bar{T}_\mathrm{f}(z, 0)=10\sin(\frac{\pi}{2} z)+\bar{T}_\mathrm{fe}(z)$, and $l(0)=0.1+l_\mathrm{e}$. Additionally, let $F_\mathrm{in}(t)=F_\mathrm{in}^{\star}$ for $-\tau \le t<0$, and $T_\mathrm{b}(z, t)=T_\mathrm{b}^{\star}$ for $-D \le t<0$.

    Figure~\ref{fig:closed-loop dynamics of extrusion} demonstrates that the delay-compensated controllers effectively stabilize the nonlinear extrusion system. Conversely, the uncompensated nominal controllers fail to reach the desired equilibrium, exhibiting oscillations and divergence as evidenced in Figs.~\ref{nonlinear filling ratio evolution uncompensated}--\ref{nonlinear p evolution uncompensated}. Under the proposed controller, all states converge to their desired values (black dashed lines) within 10 seconds, as shown in Figs.~\ref{nonlinear filling ratio evolution}--\ref{nonlinear p evolution}. This convergence is further confirmed by the $L^2$ norms in Figs.~\ref{nonlinear fp norm}--\ref{nonlinear T norm}, which vanish within 10 seconds. Finally, Fig.~\ref{control effort in extrusion} presents the control inputs.

    The controller used in the simulation is designed based on the linearized model and applied to the original nonlinear screw extrusion system. The effectiveness of this approach is illustrated through the considered numerical example. A rigorous local stability analysis is not included in this work. Following the approach in \cite{coron2013local,hu2019boundary}, such an analysis can be developed by applying the linear backstepping transformations to the original nonlinear system. This leads to a target system consisting of an exponentially stable linear subsystem and higher-order nonlinear perturbation terms. With suitable regularity assumptions and Lyapunov estimates, local exponential stability might be established for sufficiently small initial perturbations.

    \begin{figure}[htpb]
	\begin{center}
        \subfigure{\includegraphics[width=0.84\linewidth]{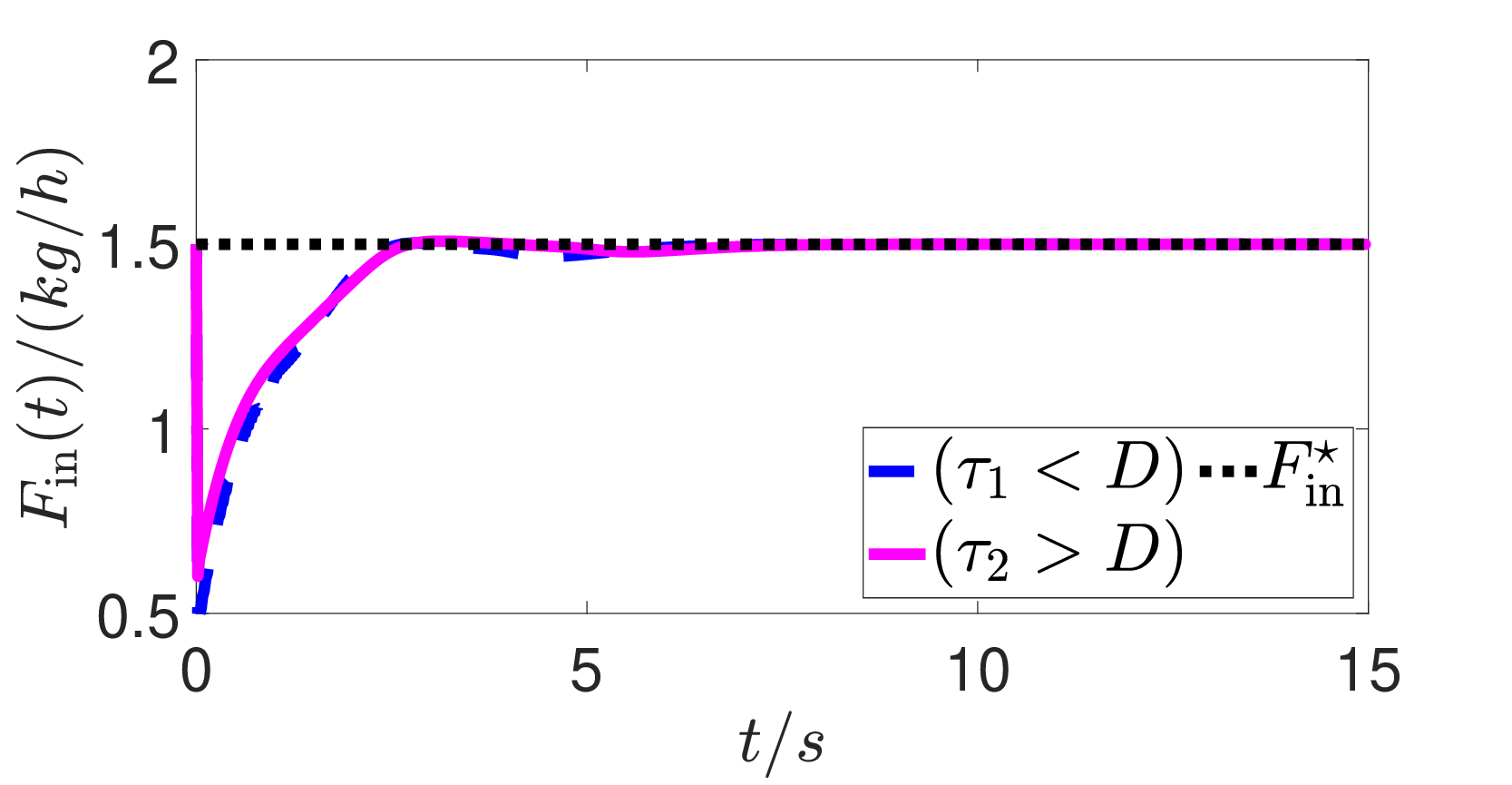}
		}

        \vspace{-12pt}
        \subfigure{\includegraphics[width=0.84\linewidth]{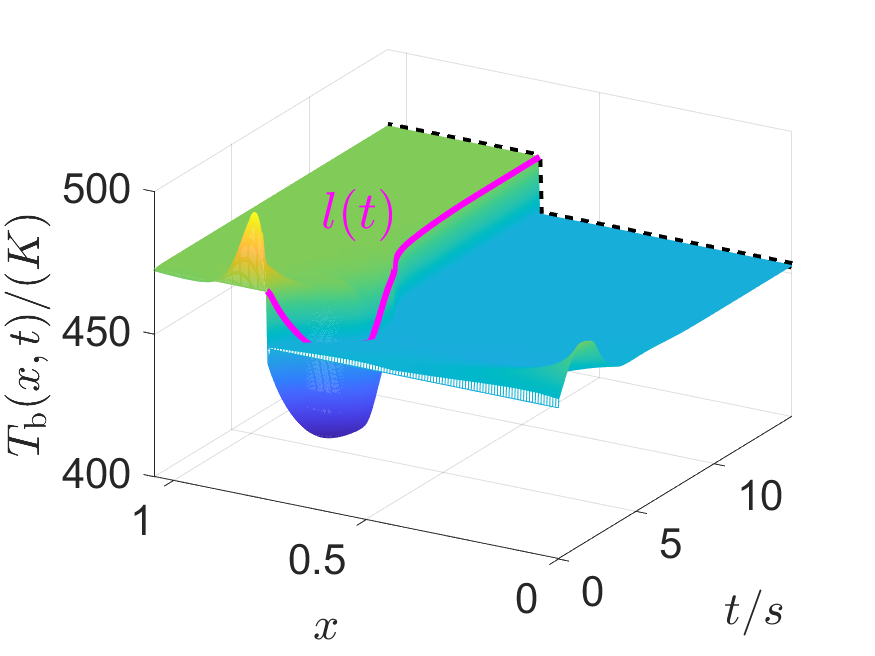}
		}		
	\end{center}
        \vspace{-15pt}
        \caption{Control inputs. Top: feed rate $F_\mathrm{in}(t)$ under the respective delay compensation for two cases: $(\tau_1,D_1)=(1,2)$ and $(\tau_2,D_2)=(2,1)$. Bottom: barrel temperature $T_\mathrm{b}(x,t)$ under the delay compensation for $\tau=1$, $D=2$.}
        \label{control effort in extrusion}
    \end{figure}
    \vspace{-10pt}
	
\section{Conclusion}\label{conclusion}
    This paper presents a control design methodology to compensate for both boundary and distributed input delays occurring in a two-phase advective unidirectional first-order PDE-ODE system. The proposed method can be applied to distinct delays without any ordering constraints. We design three invertible backstepping transformations, which map the delayed unstable system to a stable target system, resulting in three groups of coupled kernel equations. We establish the well-posedness of these kernel equations--which contain delta functions at the boundary--and prove that the direct and inverse transformations are bounded. Finally, the exponential stability of the closed-loop system in the $L^\infty$ norm is established by analyzing the explicit solution of the target system. We apply the controllers designed for the linear system to a nonlinear screw extruder model in simulation. The results show that the proposed controllers can stabilize the nonlinear system with a fast convergence rate. 
    
    Future research will focus on more practical segmented barrel temperature control strategies, in which the entire temperature control region is divided into a finite number of segments with piecewise constant inputs, replacing the distributed control at infinitely many points.

\appendices 
\section{Proof of Proposition \ref{invertibility of v trans}} \label{proof of invertibility}
    Since the transformations \eqref{u trans}--\eqref{v trans} possess a lower-triangular structure, and the transformations \eqref{u trans} and \eqref{r trans} are of affine Volterra type, whose invertibility is straightforward \cite[Theorem 1.4]{anfinsen2019adaptive}, it suffices to prove the invertibility of the transformation \eqref{v trans}, with $\gamma$, $\alpha$, $\beta$, $u$, and $X$ regarded as given functions.

To this end, we interpret the transformation  \eqref{v trans}  as an integral equation with respect to the unknown function $v$, while treating the remaining states $\gamma$, $\alpha$, $\beta$, $u$, and $X$ as known inputs. The invertibility of the transformation then follows from the existence and uniqueness of the solution to this integral equation. We therefore begin by analyzing the corresponding homogeneous integral equation as follows:
\begin{align}
    \gamma(z,s,t) =&v(z,s,t)-\mathcal{T}^v[v](z,s). \label{homogeneous integral transformation}
\end{align}
Let 
\begin{align}
    \rho(z,s,t) = \mathcal{T}^v[v](z,s), \label{rho definition}
\end{align}
then 
\begin{align}
     v(z,s,t)=&\gamma(z,s,t)+\rho(z,s,t). \label{v in terms of gamma and rho}
\end{align}
Substituting  \eqref{v in terms of gamma and rho} into \eqref{rho definition} yields 
\begin{align}
    \rho(z,s,t) = \mathcal{T}^v[\gamma](z,s)+\mathcal{T}^v[\rho](z,s). \label{rho integral equation}
\end{align}
Next, we will prove the existence of a unique solution $\rho$ to the integral equation \eqref{rho integral equation}. Let 
\begin{align}
    \rho^0(z,s,t) =& \mathcal{T}^v[\gamma](z,s), \label{rho 0} \\
    \rho^n(z,s,t) =& \mathcal{T}^v[\rho^{n-1}](z,s), \quad n=1,2,\cdots \label{rhon iteration}
\end{align}

Substituting \eqref{l solution} and \eqref{k solution} into $\mathcal{T}^v$ defined in \eqref{definition of components in trans}, and after a lengthy computation, we obtain the explicit expression for 
 $z<1$ as follows:
\begin{align}
    \mathcal{T}^v&[\rho](z,s) = \!\!\int_{0}^s\int_{\lambda_1(s)+Da\epsilon}^{1}\!\!Dk_3(z,s-\epsilon,y)\rho(y,\epsilon)dyd\epsilon \notag \\ 
        &-\!\int_{0}^{s}Dd_1\varrho_1(s-\epsilon)\rho(\lambda_1(s)+Da\epsilon,\epsilon)d\epsilon \notag \\
        &-\! \int_{\max\{0,s-\frac{z}{Da}\}}^{s}\!\!D\zeta_1\varrho_1(s\!-\!\epsilon)\rho(Da(\epsilon \!-\! s) \!+\! z,\epsilon)d\epsilon. \label{T v rho z<1}
\end{align}
It then follows from \eqref{rho 0} and \eqref{T v rho z<1} that
\begin{align}
    |\rho^0(z,s,t)| \le & C_0\|\gamma\|_{\infty}, 
\end{align}
where $C_0=D\bar k_3+D(d_1+\zeta_1)\bar \varrho>0$, with $\bar k_3=\sup|k_3(z,s,y)|$, $\bar\varrho=\max_{s\in[0,1],i=1,2}\{\mathrm{e}^{Dc_is}\}$. Assume 
\begin{align}
    |\rho^n(z,s,t)|\le \frac{C_0^{n+1}s^n}{n!}\|\gamma\|_{\infty}. \label{rhon estimate}
\end{align}
Inserting \eqref{rhon estimate} into \eqref{rhon iteration} yields
\begin{align*}
    |\rho^{n+1}(z,s,t)|\le &\int_{0}^s\int_{\lambda_1(s)+Da\epsilon}^{1}\!\!D\bar k_3\frac{C_0^{n+1}\epsilon^n}{n!}\|\gamma\|_{\infty} dyd\epsilon \notag \\ 
    &+\int_{0}^{s}Dd_1\bar\varrho \frac{C_0^{n+1}\epsilon^n}{n!}\|\gamma\|_{\infty} d\epsilon \notag \\
    &+\int_{\max\{0,s-\frac{z}{Da}\}}^{s}D\zeta_1\bar\varrho \frac{C_0^{n+1}\epsilon^n}{n!}\|\gamma\|_{\infty} d\epsilon \notag \\
    \le &\frac{C_0^{n+2}\|\gamma\|_{\infty}}{n!}\int_0^s\epsilon^nd\epsilon \le \frac{C_0^{n+2}s^{n+1}}{(n+1)!}\|\gamma\|_{\infty}.
\end{align*}
Hence, \eqref{rhon estimate} is proved. Similarly, when $z\ge 1$, 
we can prove 
\begin{align}
    |\rho^n(z,s,t)|\le \frac{C_1^{n+1}s^n}{n!}\|\gamma\|_{\infty},
\end{align}
where $C_1=D(\bar k_3+\bar l_2+(d_1+d_2+\zeta_2+\frac{a}{b}\mathrm{e}^{|c_2-c_1|/b}\bar\zeta_2)\bar\varrho)$, and $\bar l_2=\sup|l_2(z,s,y)|$. Therefore, the series
\begin{align}
    \rho(z,s,t)=\sum\limits_{n=0}^{\infty}\rho^n(z,s,t)
\end{align}
converges uniformly with a bound $|\rho|\le C_2\|\gamma\|_{\infty}$, where $C_2=\max\{C_0\mathrm{e}^{C_0},C_1\mathrm{e}^{C_1}\}$, which results in $\|\rho\|_{\infty}\le C_2 \|\gamma\|_{\infty}$. Hence, \eqref{rho integral equation} admits a solution. To prove uniqueness, suppose that \eqref{rho integral equation} has two solutions $\rho_1$ and $\rho_2$, and define their difference as $\tilde{\rho}:=\rho_1-\rho_2$. Then  $\tilde\rho$ satisfies the homogeneous equation $\tilde\rho(z,s,t) = \mathcal{T}^v[\tilde\rho](z,s)$. Applying the same contraction argument as above, we obtain that $\tilde{\rho}\equiv 0$, which implies $\rho_1=\rho_2$. Therefore, the solution to \eqref{rho integral equation} is unique.

Therefore, there exist a bounded linear operator $\Lambda$: $L^\infty([0,2]\times[0,1])\to L^\infty([0,2]\times[0,1])$ such that
\begin{align}
    v(z,s,t)=(I+\Lambda)[\gamma](z,s), \label{v operator representation}
\end{align}
where $I$ is the identity operator. The invertibility of \eqref{homogeneous integral transformation} is thereby established.

We now consider the transformation \eqref{v trans} with the affine terms $\mathcal{T}^\alpha$, $\mathcal{T}^\beta$, $\mathcal{T}^u$, and $\phi(z,s)^{\mathrm{T}}X(t)$.  Let  $\rho$ be defined as in \eqref{rho definition}, which yields
\begin{equation}
    \rho=\mathcal{T}^{v}[\rho]+\mathcal{T}^{v}[\gamma+\mathcal{T}^{\alpha}[\alpha]+\mathcal{T}^{\beta}[\beta]
    +\mathcal{T}^{u}[u]+\phi^{\mathrm{T}}X(t)].
\end{equation}
Recalling Lemmas~\ref{direct norm} and~\ref{direct norm u}, it follows that
\begin{align}
    \left|\mathcal{T}^{v}[\gamma+\mathcal{T}^{\alpha}[\alpha]+\mathcal{T}^{\beta}[\beta]
    +\mathcal{T}^{u}[u]+\phi^{\mathrm{T}}X(t)]\right|\le \bar G \varXi,
\end{align}
where $\varXi: =\|\gamma\|_{\infty}+\|\alpha\|_{\infty}+\|\beta\|_{\infty}+\|u\|_{\infty}+|X(t)|$, and $\bar G=C_v\max\{1,C_\alpha,C_\beta,C_u,\bar\phi\}>0$, with $\bar\phi=\sup|\phi(z,s)|$. $C_v=C_0$ when $z<1$, and $C_v=C_1$ when $z\ge1$. Proceeding in a way analogous to the homogeneous case, we obtain
\begin{align}
    |\rho^n(z,s,t)|\le& \frac{C_0^{n}s^n}{n!}\bar G\varXi(t), \quad z<1, \\
    |\rho^n(z,s,t)|\le &\frac{C_1^{n}s^n}{n!}\bar G\varXi(t), \quad z\ge 1.
\end{align}
Consequently, $\|\rho\|_\infty\le C_3  \varXi$ with $C_3=\max\{\bar G\mathrm{e}^{C_0},\bar G\mathrm{e}^{C_1}\}$. Therefore, the Fredholm transformation \eqref{v trans} is invertible. This completes the proof of Proposition~\ref{invertibility of v trans}.

\section*{References}
\bibliographystyle{IEEEtranS}
\bibliography{IEEEabrv, refs.bib}

\clearpage
\appendices
\renewcommand{\Roman}[1]{\Alph{#1}}
\setcounter{lemma}{0}
\renewcommand{\thelemma}{\thesection.\arabic{lemma}}
\makeatletter
\@addtoreset{lemma}{section} 
\makeatother

\setcounter{equation}{0}
\renewcommand{\theequation}{\Alph{section}.\arabic{equation}} 
\makeatletter
\@addtoreset{equation}{section} 
\makeatother

\section{\texorpdfstring{Well-posedness of the Equations for $\eta$ and $q$}{Well-posedness of eta and q}}\label{appendix:well-posedness of eta and q equation}
The well-posedness result is summarized in the following lemma.
   \begin{lemma}
    The PDE system~\eqref{eta, p, q} admits a unique solution $(\eta,q) \in C(\mathcal{D}_1)\times C([0,1];L^2(0,1))$.
    \end{lemma}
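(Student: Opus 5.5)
We treat the two kernels sequentially, following the order $\varphi\to\psi\to\eta\to q$ from Section~\ref{kernel functions}. Since $\varphi$ and $\psi$ in \eqref{varphi, psi} are explicit matrix exponentials, they are smooth on $[0,1]$, and the boundary data $\frac{1}{a}(\varphi(z)^{\mathrm T}B-\theta)$ is continuous.

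\emph{Step 1: the kernel $\eta$.} Along characteristics of $\eta_z=-\eta_y$ the sum $z+y$ is constant. We therefore set $\eta(z,y)=G(y-z)$, or keep the two variables and write $\eta(z,y)=\eta(z+y-1,1)$ for $(z,y)\in\mathcal{D}_1$; note $z+y-1\in[2z-1,z]$, and we extend appropriately. The boundary condition at $y=1$ then becomes a Volterra-type integral equation for the trace $\bar\eta(x):=\eta(x,1)$:
\[
\bar\eta(x)=\frac{\theta}{a}\int_x^1 \bar\eta(x+y-1)\,dy+\frac{1}{a}\bigl(\varphi(x)^{\mathrm T}B-\theta\bigr).
\]
This is exactly the structure treated in \cite{krstic2008backstepping}. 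After substituting $\sigma=x+y-1$, the integral runs over $\sigma\in[2x-1,x]$, which requires $\bar\eta$ on values of $x$ possibly below $0$. This is handled by defining $\bar\eta$ on $[-1,1]$, or by recasting the problem as in \cite[Thm.~1]{krstic2008backstepping}. We then solve by successive approximations $\bar\eta^{n+1}=F+\frac{\theta}{a}\int\bar\eta^n$. A standard induction gives the bound $|\Delta\bar\eta^n|\le \bar F(|\theta|/a)^n(1-x)^n/n!$, so the series converges uniformly, and $\eta\in C(\mathcal{D}_1)$. Uniqueness follows from the same estimate applied to the difference of two solutions.

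\emph{Step 2: the kernel $q$.} For $q_s=\tau a q_y$ on $\mathcal{D}_2$, characteristics satisfy $y+\tau a s=$ const. Each point is reached either from the initial line $s=0$, where $q(0,y)=\eta(0,y)$ is known and continuous, or from the boundary $y=1$, which carries a nonlocal condition. In the first region, $y+\tau as\le 1$, we have $q(s,y)=\eta(0,y+\tau a s)$. In the second region, $q(s,y)=q(s-\frac{1-y}{\tau a},1)$, and the trace $\hat q(s)=q(s,1)$ satisfies
\[
\hat q(s)=\frac{\theta}{a}\int_0^1 q(s,y)\,dy+\frac{1}{a}\psi(s)^{\mathrm T}B,
\]
in which $q(s,y)$ is itself expressed through $\eta(0,\cdot)$ and $\hat q$ at earlier times $s'\in[s-\frac1{\tau a},s]$. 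This yields a linear Volterra equation for $\hat q$ in $s$ with bounded kernel and continuous forcing, which is the simplified form of Equation~(12) in \cite{sano2019boundary}. Picard iteration in $s$, with the estimate $C^n s^n/n!$, gives a unique continuous $\hat q$.

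\emph{Step 3: regularity of $q$.} Where the two characteristic regions meet, $q$ may jump: $\eta(0,1)$ need not equal $\hat q(0)$, since no compatibility condition is imposed. Hence $q$ is piecewise continuous, with a discontinuity along the line $y=1-\tau a s$. Nevertheless, $s\mapsto q(s,\cdot)$ is continuous into $L^2(0,1)$, because the jump line moves continuously and the function is bounded. This gives $q\in C([0,1];L^2(0,1))$.

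The main obstacle is Step 2, specifically writing the nonlocal boundary condition as a well-defined Volterra equation while tracking which part of $[0,1]$ draws on initial data and which on the trace, and confirming that the map is a contraction in the weighted sup norm. Establishing the $L^2$-continuity despite the jump is the remaining delicate point.
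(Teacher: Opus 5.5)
There is a concrete error in Step 1: you have the characteristic direction of the $\eta$-equation backwards. Since $\eta_z+\eta_y=0$, $\eta$ is constant along lines $y-z=\text{const}$, not along $z+y=\text{const}$. Your own ansatz $\eta=G(y-z)$ is the correct one. Following the characteristic through $(z,y)$ up to $y=1$ gives $\eta(z,y)=\eta(1-y+z,1)$. For $(z,y)\in\mathcal{D}_1$ one has $1-y+z\in[z,1]\subset[0,1]$, so nothing ever leaves the domain.

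The formula you actually use, $\eta(z,y)=\eta(z+y-1,1)$, and the resulting integral over $\sigma\in[2x-1,x]$, define a different equation. That equation requires $\bar\eta$ on $[-1,0)$, where neither the PDE nor the boundary condition prescribes anything, so no extension can close it. Your Picard bound $\bar F(|\theta|/a)^n(1-x)^n/n!$ also does not propagate through an integral over $[2x-1,x]$; it is the bound for an integral over $[x,1]$.

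The repair is immediate. Set $\bar\eta(x):=\eta(x,1)$ and $F(x):=\frac1a\bigl(\varphi(x)^{\mathrm T}B-\theta\bigr)$, and substitute $\sigma=1-y+z$. The boundary condition in \eqref{eta, p, q} becomes
\[
\bar\eta(x)=\frac{\theta}{a}\int_x^1\bar\eta(\sigma)\,d\sigma+F(x),\qquad x\in[0,1].
\]
Your estimate is exactly right for this Volterra equation. It even has the closed form $\bar\eta(x)=F(x)+\frac{\theta}{a}\int_x^1\mathrm{e}^{\frac{\theta}{a}(\sigma-x)}F(\sigma)\,d\sigma$. Then $\eta(z,y)=\bar\eta(1-y+z)$ is continuous on $\mathcal{D}_1$ and unique.

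With that fix, your argument is valid and takes a genuinely different route from the paper's. Steps 2 and 3 are correct as written.

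The paper handles $\eta$ by flipping variables ($\bar z=1-z$, $\bar y=1-y$) and citing \cite[Thm.~1]{krstic2008backstepping} and \cite[App.~D.1]{anfinsen2019adaptive}. For $q$, it rescales $\bar y=(1-y)/(\tau a)$ and homogenizes the nonlocal boundary condition with the shift \eqref{change of variable to Q}, which requires $\theta\neq a$. It then invokes the $C_0$-semigroup result of \cite{sano2019boundary} to obtain a mild solution in $C([0,1];L^2)$.

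Your characteristics-plus-Volterra approach has three advantages:
\begin{itemize}
\item It is self-contained.
\item It needs no restriction $\theta\neq a$.
\item It gives more information: $q$ is bounded and piecewise continuous, with an explicit jump $\hat q(0)-\eta(0,1)=\theta/a$ across $y=1-\tau a s$ (because $\psi(0)=\varphi(0)$). In particular, you get a well-defined bounded trace $q(\cdot,0)$. This trace enters \eqref{r trans}, and the bare conclusion $q\in C([0,1];L^2(0,1))$ does not supply it. Your bounds also fit the $L^\infty$ setting used later.
\end{itemize}
The paper's route is shorter but rests entirely on the cited abstract results.
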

    \begin{proof}
First, we consider the equation for $\eta$, which is independent of $q$. Introduce the following change of variables: 
\begin{align}
     \bar z=&1-z, & \bar y=&1-y, 
\end{align}
and let
\begin{align}
    \bar\eta(\bar z,\bar y)=&\eta(z,y), 
\end{align}
where $\bar \eta(\bar z, \bar y)$ is defined on the domain $\bar{\mathcal{D}}_1=\{(\bar z,\bar y)\in [0,1]^2:0\le \bar y \le \bar z\le 1\}$, which yields
\begin{align}
    \bar\eta_{\bar z}(\bar z,\bar y) =& -\bar\eta_{\bar y}(\bar z, \bar y),\label{equ_bar_eta} \\
        \bar\eta(\bar z,0) = &\int_0^{\bar z}\frac{\theta}{a} \bar\eta(\bar z,\bar y)d\bar y+\frac{1}{a}\left(\varphi(1-\bar z)^{\mathrm{T}}B-\theta\right). \label{equ_bar_eta boundary}
\end{align}
The equations~\eqref{equ_bar_eta} and \eqref{equ_bar_eta boundary} are analogous to the kernel equation studied in \cite[Thm.~1]{krstic2008backstepping}, differing only in their coefficients.
Recall that the term $\varphi(z)^{\mathrm{T}} = K \mathrm{e}^{A\frac{1-z}{a}}\in C([0,1];\mathbb{R}^{1\times n})$ appears in the boundary condition for $\eta$. Then, by the results in \cite[App.~D.1]{anfinsen2019adaptive}, the solution to \eqref{equ_bar_eta} and \eqref{equ_bar_eta boundary} belongs to $C(\bar{\mathcal{D}}_1)$; therefore, we recover $\eta \in C(\mathcal{D}_1)$.

Second, we consider the equation for $q(s,y)$, whose well-posedness can be established via $C_0$-semigroup theory. We introduce the change of variables:  
\begin{align}
    &\bar y = \frac{1}{\tau a}(1-y),  
\end{align}
and let
\begin{align}
     \bar q(s,\bar y)= q(s,y),
\end{align}
where $\bar q $ is defined on the domain $\bar{\mathcal{D}}_2=\{(s,\bar y)\in [0,1]\times [0,\frac{1}{\tau a}]\}$, which leads to
\begin{align}
    \bar q_s(s,\bar y) =&- \bar q_{\bar y}(s,\bar y), \\
    \bar q(0,\bar y) =& f(\bar y), \\
    \bar q(s,0) =& \int_0^\frac{1}{\tau a} \tau \theta\bar q(s,\bar y)d\bar y + \frac{1}{a}\psi(s)^{\mathrm{T}}B, \label{bar q boundary}
\end{align}
where $f(\bar y):=\eta(0,1-\tau a\bar y) \in C([0, \frac{1}{\tau a}]; \mathbb{R})$.

Following a method similar to that in \cite{sano2019boundary}, we remove the inhomogeneous term $\frac{1}{a}\psi(s)^{\mathrm{T}}B$ from the boundary condition~\eqref{bar q boundary}. Let
\begin{align}
    Q(s,\bar y) = \bar q(s,\bar y)-\frac{1}{a-\theta}\psi(s)^\mathrm{T}B,\quad  (\theta\neq a),\label{change of variable to Q}
\end{align}
so we obtain the equation for $Q$ as follows:
\begin{align}
    Q_s(s,\bar y) =& -Q_{\bar y}(s, \bar y)+g(s), \\
    Q(0,\bar y) =& h(\bar y),\\
    Q(s,0) =& \int_0^\frac{1}{\tau a} \tau \theta  Q(s,\bar y)d\bar y,
\end{align}
where $g(s)=\frac{1}{\theta-a}\psi^\prime(s)^{\mathrm{T}}B$, $h(\bar y)=f(\bar y)-\frac{1}{a-\theta}\psi(0)^{\mathrm{T}}B \in  C([0, \frac{1}{\tau a}]; \mathbb{R})$.

We define the unbounded operator $\mathcal{A}: D(\mathcal{A})\subset L^2(0,\frac{1}{\tau a}) \to L^2(0,\frac{1}{\tau a})$ as follows: 
\begin{align}
    (\mathcal{A}\varphi)(\bar y) =& -\varphi^\prime(\bar y), \quad \varphi\in D(\mathcal{A}), \\
    D(\mathcal{A}) =&\left\{\varphi\in H^1\left(0,\frac{1}{\tau a}\right): \varphi(0) =\int_0^\frac{1}{\tau a} \tau \theta  \varphi(\bar y)d\bar y\right\},
\end{align}
then $Q$ satisfies the abstract evolution equation
\begin{align}
    \frac{dQ}{ds} =& \mathcal{A}Q + g(s), \label{Q}\\
    Q(0) =& h. \label{Q initial}
\end{align}
We know that 
\begin{enumerate}
    \item the operator $\mathcal{A}$ generates a $C_0$-semigroup $T_\mathcal{A}(s)$ on $L^2(0,\frac{1}{\tau a})$, see \cite[Prop.~3.4]{sano2019boundary}; 
    \item the inhomogeneous term $g(s)\in C([0,1]; L^2(0,\frac{1}{\tau a}))$, recalling $\psi(s)^{\mathrm{T}} = \varphi(0)^{\mathrm{T}}\mathrm{e}^{A\tau s}$;
    \item the initial condition $h\in L^2(0,\frac{1}{\tau a})$.
\end{enumerate}
Consequently, the system~\eqref{Q}--\eqref{Q initial} admits a unique mild solution $Q\in C([0,1];L^2(0,\frac{1}{\tau a}))$, which is given by:
\begin{align}
    Q(s, \bar y) = T_{\mathcal{A}}(s)h(\bar y) + \int_0^s T_\mathcal{A}(s-\tau)g(\tau)d\tau,\quad \forall~ s\in [0,1].
\end{align}

By the inverse transformation, it follows that $q\in C([0,1];L^2(0,1))$. This establishes the well-posedness of System~\eqref{eta, p, q}. 
\end{proof}

\section{\texorpdfstring{Derivation of the Explicit Solution for $\bar k$}{Derivation of the Explicit Solution for Bar k}}\label{appendix:derivation of explicit solution for bar k}
Similar to $\bar l$, the solution to Equations~\eqref{k-1}--\eqref{k-3} exhibits a two-branch structure determined by its two boundary conditions \eqref{k-2} and \eqref{k-3}. For $0\le y\le\lambda_1(s)$, we denote this branch by $\bar k_1$. By applying the boundary condition~\eqref{k-3} alongside the method of characteristics, we obtain the explicit expression
    \begin{align}
        \bar{k}_1(z,s,y)=&-d_1\delta(y-\lambda_1(s)) \notag \\
        &-\zeta_1\delta(y-\lambda_1(s)+1-z). \label{bar k first part}
    \end{align}
    For $\lambda_1(s)<y\le 1$, obtaining $\bar k$ becomes more involved due to the boundary condition \eqref{k-2}, which contains a self-integral term. In this region, we decompose the solution $\bar k$ into two components as $\bar k=\bar k_2+\bar k_3$. By substituting the explicit expression for $l$ in \eqref{l solution} and $\bar k_1$ in \eqref{bar k first part}, integrating \eqref{k-1} along the characteristics, and applying the boundary condition \eqref{k-2}, we arrive at the following integral equations:
    \begin{align}
        \bar{k}_2+\bar{k}_3=&-\frac{d_1^2}{a}-\frac{d_1}{a}\zeta_1\int_0^{\chi_4(s,y)}\delta(\xi-\chi_4(s,y)+1-z)d\xi \notag \\
        &+\varTheta(z,s,y)+\varLambda[\bar{k}_2+\bar{k}_3](z,s,y), \label{k integral equation} 
    \end{align} 
    where 
    \begin{align}
    \varLambda[\bar k](z,s,y) =&\frac{d_1}{a}\int_{\chi_4(s,y)}^1\bar k(z,\chi_3(s,y),\xi)d\xi,\\
    \varTheta(z,s,y) =& \bar \zeta_2\mathrm{e}^{D(c_2-c_1)\chi_3(s,y)}\delta(3\!-\!z\!-\!\lambda_2(s)\!-\!\frac{b}{a}(1-y)). 
    \end{align}
    In view of the linearity of the operator $\varLambda$, \eqref{k integral equation} reduces to
    \begin{align}
        \bar{k}_2(z,s,y) =& -\frac{d_1}{a}\zeta_1\int_0^{\chi_4(s,y)}\delta(\xi-\chi_4(s,y)+1-z)d\xi
        \notag \\
        &-\frac{d_1^2}{a}+\varLambda[\bar{k}_2](z,s,y), \label{k2 integral equation} \\
        \bar{k}_3(z,s,y)   =& \varTheta(z,s,y)+\varLambda[\bar{k}_3](z,s,y). \label{k3 integral equation} 
    \end{align}
    
    We now proceed to solve for $\bar k_2$ and $\bar k_3$ separately. Consider $\bar k_2$ first. Applying the shifting property of the delta function, we rewrite \eqref{k2 integral equation} as
    \begin{subequations}
    \begin{align*}
        \bar k_2(z,s,y) =& \varGamma_{21}(z) + \varLambda[\bar k_2](z,s,y),\mathrm{if}~\lambda_1(s)\!<\!y\!\le\! \vartheta_1(z,s), \\
        \bar k_2(z,s,y) =& \varGamma_{22}(z) + \varLambda[\bar k_2](z,s,y), \mathrm{if}~\vartheta_1(z,s)\!<\!y\!\le \!1, 
    \end{align*}
    \end{subequations}
    where  
    \begin{align}
        \varGamma_{21}(z)=&-\frac{d_1^2}{a}-\frac{d_1}{a}\zeta_1H(1-z), & \varGamma_{22}(z)=&-\frac{d_1^2}{a}.
    \end{align}
    Following a procedure similar to that for solving Equations~\eqref{with zeta2} and \eqref{without zeta2}, we obtain the solution for $\bar k_2$ as follows:
    \begin{subequations}\label{bar k2 piecewise expression}
       \begin{align}
           \bar k_2(z,s,y) =&\varGamma_{21}(z)\mathrm{e}^{\frac{d_1}{a}(y-\lambda_1(s))}, & &\mathrm{if}~\lambda_1(s)<y\le \vartheta_1(z,s),\label{bar k2 piecewise expression branch 1} \\
           \bar k_2(z,s,y) =& \varGamma_{22}(z)\mathrm{e}^{\frac{d_1}{a}(y-\lambda_1(s))}, &&\mathrm{if}~\vartheta_1(z,s)\!<\!\!y\!\!\le \!1. \label{bar k2 piecewise expression branch 2}
       \end{align}
    \end{subequations}

    The solution for $\bar k_3$ is addressed subsequently. Set the following recursive formula with $\bar k_3^0=0$, 
    \begin{align}
        \bar{k}_3^{n+1}(z,s,y)   =& \varTheta(z,s,y)+\varLambda[\bar{k}_3^n](z,s,y). 
    \end{align}
    Define $\Delta \bar k_3^n=\bar k_3^{n+1}-\bar k_3^{n}$, for $n=0,1, 2\cdots$, which yields $\Delta \bar k_3^0(z,s,y)=\varTheta(z,s,y)$ and   
    \begin{align}
        \Delta\bar{k}_3^{n+1}(z,s,y) =& \varLambda[\Delta\bar{k}_3^n](z,s,y). \label{n+1 and n relation}
    \end{align}
    Substituting $\Delta\bar k_3^0$ into \eqref{n+1 and n relation} and repeating this iterative procedure, we can prove by a similar inductive argument that for $n > 1$,
    \begin{align*}
        \Delta\bar{k}_3^{n}(z,s,y) =& 0, \qquad \qquad\quad \mathrm{if}~\lambda_1(s) <y <\vartheta_4(z,s), \\
        \Delta\bar{k}_3^{n}(z,s,y) =& \frac{d_1}{b}\bar\zeta_2\mathrm{e}^{\frac{c_2-c_1}{b}(z-1)}\frac{\left(\frac{d_1}{a}(y\!-\!\lambda_1(s))\right)^{n-1}}{(n-1)!}H(z\!-\!1), \notag \\
        &\qquad \qquad \qquad\mathrm{if}~\vartheta_4(z,s)\le y\le 1, 
    \end{align*}
    which ensures the convergence of the series $\bar k_3(z,s,y)=\sum_{n=0}^{\infty}\Delta\bar k_3^{n}(z,s,y)$, resulting in the following explicit expression for $\bar k_3$:
    \begin{subequations}\label{bar k3}
    \begin{align}
        \bar k_3(z,s,y) =& \varTheta(z,s,y) ,\qquad\mathrm{if}~\lambda_1(s)<\!y\!< \vartheta_4(z,s),\label{bar k3 branch 1}\\
        \bar k_3(z,s,y) =& \varTheta(z,s,y)+\varGamma_{23}(z)\mathrm{e}^{\frac{d_1}{a}(y-\lambda_1(s))}, \notag  \\
        &\qquad\qquad \qquad~\mathrm{if}~\vartheta_4(z,s)\le y\le 1, \label{bar k3 branch 2}
    \end{align}
    \end{subequations}
    where 
    \begin{align}
        \varGamma_{23}(z) =& \frac{d_1}{b}\bar\zeta_2\mathrm{e}^{\frac{c_2-c_1}{b}(z-1)}H(z-1).
    \end{align}
    For $\lambda_1(s)<y\le 1$, when $z>1$, since $\vartheta_1(z,s)=1$, $\bar k_2$ is determined by its first branch~\eqref{bar k2 piecewise expression branch 1}; when $z\le 1$, we have $\vartheta_4(z,s)=\lambda_1(s)$, and $\bar k_3$ is determined by its second branch~\eqref{bar k3 branch 2}. Consequently, combining \eqref{bar k2 piecewise expression} and \eqref{bar k3} yields $\bar k=\bar k_2+\bar k_3$: 
    \begin{subequations}
    \begin{align}
        &\mathrm{For}~z\le 1,\notag \\
       &\qquad\begin{aligned}
           \bar k(z,s,y) =&\varTheta(z,s,y)+\varGamma_{21}(z)\mathrm{e}^{\frac{d_1}{a}(y-\lambda_1(s))}, \notag \\ &\qquad \qquad \qquad \qquad\mathrm{if}~\lambda_1(s)<y\le \vartheta_1(z,s),\\
           \bar k(z,s,y) =& \varTheta(z,s,y)+\varGamma_{22}(z)\mathrm{e}^{\frac{d_1}{a}(y-\lambda_1(s))},\notag \\
           &\qquad\qquad \qquad \qquad\mathrm{if}~\vartheta_1(z,s)<y\le 1.
       \end{aligned}\\
       &\mathrm{For}~z> 1,\notag\\
       &\qquad\begin{aligned}
        \bar k(z,s,y) =& \varTheta(z,s,y)+\varGamma_{21}(z)\mathrm{e}^{\frac{d_1}{a}(y-\lambda_1(s))}, \notag \\
        &\qquad \qquad \qquad \qquad
        \mathrm{if}~\lambda_1(s)<y< \vartheta_2(z,s), \\
        \bar k(z,s,y) =& \varTheta(z,s,y)+\left(\varGamma_{21}(z)+\varGamma_{23}(z)\right)\mathrm{e}^{\frac{d_1}{a}(y-\lambda_1(s))}, \notag \\
        &\qquad \qquad \qquad \qquad\mathrm{if}~\vartheta_2(z,s)\le y\le 1,
    \end{aligned}
    \end{align}   
   \end{subequations}
    Note that for $z>1$, we have $\vartheta_4=\vartheta_2$. This completes the derivation of the explicit solution for $\bar k$.

\section{Proofs of Lemmas~\ref{direct norm} and~\ref{direct norm u}}\label{appendix:proofs of lemmas 2 and 4}
\begin{proof}
Since integral operators with bounded kernels are bounded, it suffices to handle the singular terms involving the Dirac delta function, such as $l_1$, $k_1$, and $k_2$. Substituting the solutions for $l$ and $k$ given in \eqref{l solution} and \eqref{k solution} into $\mathcal{T}^{\beta}[f]$ and $\mathcal{T}^{\alpha}[f]$, respectively, we obtain
\begin{align}
    \mathcal{T}^{\alpha}[f](z,s) =&\frac{d_1\varrho_1(s)}{-\kappa_1}f(\lambda_1(s))+J_1[f](z,s)+J_2[f](z,s)\notag \\
    &+\int_{\lambda_1(s)}^1\frac{1}{\kappa_1}k_3(z,s,y)f(y)dy, \label{t alpha}\\
    \mathcal{T}^{\beta}[f](z,s) =&\frac{d_2\varrho_2(s)}{-\kappa_2}f(\lambda_2(s))H(z-1)+J_3[f](z,s)\notag \\
    &+\int_{\lambda_2(s)}^2\frac{1}{\kappa_2}l_2(z,s,y)f(y)dy, \label{t beta}
\end{align}
where $l_2$ and $k_3$ are bounded functions satisfying \eqref{l2} and \eqref{k3}, respectively. For $Das\le z\le 1$, $J_1[f](z,s)=\frac{\zeta_1}{-\kappa_1}\varrho_1(s)f(z-Das)$, and $J_1=0$ otherwise. For $1\le z\le 1+Dbs$, $J_2[f](z,s)=\frac{a\bar \zeta_2\varrho_1(s)}{b\kappa_1}\mathrm{e}^{\frac{c_1-c_2}{b}(Dbs+1-z)}f(\lambda_1(s)+\frac{a}{b}(z-1))$, and $J_2=0$ otherwise. For $Dbs+1\le z\le 2$, $J_3[f](z,s)=\frac{\zeta_2}{-\kappa_2}\varrho_2(s)f(z-Dbs)$, and $J_3=0$ otherwise. Since $J_1$, $J_2$, and $J_3$ are bounded functions, the estimate \eqref{direct norm estimate} follows directly from \eqref{t alpha} and \eqref{t beta} for $j=\alpha$, $\beta$.

Then, for $\mathcal{T}^v[f]$, we denote $\mathcal{T}^v[f]=\mathcal{T}^v_1[f]+\mathcal{T}^v_2[f]$, where  
    \begin{align}
        &\mathcal{T}^v_1[f](z,s) \stackrel{\triangle}{=} D\!\!\int_0^1\!\!\!\!\int_0^s \!\!k(z,s\!-\!\epsilon,y)f(y,\epsilon)d\epsilon dy, \label{T v k} \\
        &\mathcal{T}^v_2[f](z,s) \stackrel{\triangle}{=} D\!\!\int_1^2\!\!\!\!\int_0^s \!\!l(z,s\!-\!\epsilon,y)f(y,\epsilon)d\epsilon dy. \label{T v l}
    \end{align}
Substituting the solutions for $k$ and $l$ into \eqref{T v k} and \eqref{T v l}, respectively, yields 
\begin{align}
        &\begin{aligned}[b]\mathrm{for}~z<&1,\qquad\quad \\
        \mathcal{T}^v_1[f]&(z,s)=\!\!\int_{0}^s\int_{\lambda_1(s)+Da\epsilon}^{1}\!\!Dk_3(z,s-\epsilon,y)f(y,\epsilon)dyd\epsilon \\ 
        &-\int_{\max\{0,s-\frac{z}{Da}\}}^{s}D\zeta_1\varrho_1(s\!-\!\epsilon)f(Da(\epsilon\!-\!s)\!+\!z,\epsilon)d\epsilon\\
        &-\int_{0}^{s}Dd_1\varrho_1(s-\epsilon)f(\lambda_1(s)+Da\epsilon,\epsilon)d\epsilon,
        \end{aligned} \label{T v 1 z<1} \\
        &\begin{aligned}[b]\mathrm{for}~z\ge&1,\\
        \mathcal{T}^v_1[f]&(z,s)=\!\!\int_{0}^s\int_{\lambda_1(s)+Da\epsilon}^{1}\!\!Dk_3(z,s-\epsilon,y)f(y,\epsilon)dyd\epsilon  \\ 
        &-\int_{0}^{s}Dd_1\varrho_1(s-\epsilon)f(\lambda_1(s)+Da\epsilon,\epsilon)d\epsilon \\
        &+\int_{0}^{\max\{0,s-\frac{z-1}{Db}\}}\frac{Da\bar \zeta_2}{b}\mathrm{e}^{\frac{c_2-c_1}{b}(z-1)}\varrho_1(s-\epsilon)\\
        &\times f(y_0(z,s,\epsilon),\epsilon)d\epsilon,
        \end{aligned} \label{T v 1 z>1}
\end{align}
where $y_0(z,s,\epsilon)=1+Da(\epsilon-s)+\frac{a}{b}(z-1)$, and 
\begin{align}
\mathcal{T}^v_2[f]&(z,s)=\!\!\int_{0}^s\int_{\lambda_2(s)+Db\epsilon}^{1}\!\!Dl_2(z,s-\epsilon,y)f(y,\epsilon)dyd\epsilon  \notag \\ 
        &-H(z-1)\bigg[\int_{0}^{s}Dd_2\varrho_2(s-\epsilon)f(\lambda_2(s)+Db\epsilon,\epsilon)d\epsilon  \notag  \\
        &+\int_{\max\{0,s-\frac{z-1}{Db}\}}^{s}D\zeta_2\varrho_2(s\!-\!\epsilon)f(Db(\epsilon\!-\!s)+z,\epsilon)d\epsilon\bigg]. \label{T v 2}
\end{align} 
Equations~\eqref{T v 1 z<1}--\eqref{T v 2} are expressed in explicit form with bounded integral kernels, from which \eqref{direct norm estimate} follows directly for $j=v$.
    
Finally, employing the solution for $M$, we obtain
\begin{align}
       \mathcal{T}^u[f](z,s) =&\int_{\lambda_1(s)}^1\!\!\left(\sum\limits_{n=1}^{\infty}\Delta M_3^n(z,s,y)\right)f(y)dy\notag \\
       +\int_{\lambda_1(s)}^1&M_2(z,s,y)f(y)dy +J_4[f](z,s)+J_5[f](z,s)\notag \\
       + \frac{d_1}{\zeta_1}m_1&(s,\lambda_1(s))f(\lambda_1(s)),\label{M integral partition}
\end{align}
    where $M_2$ and the series $\sum_{n=1}^{\infty}\Delta M_3^n(z,s,y)$ are bounded according to \eqref{M2 bound} and \eqref{M4 bound}, respectively. For $1\le z\le 1+Dbs$, $J_4[f](z,s)=\frac{a}{b}m_2\bigl(s,\lambda_1(s)\!+\!\frac{a}{b}(z\!-\!1)\bigr)f\bigl(\lambda_1(s)\!+\!\frac{a}{b}(z\!-\!1)\bigr)$, and $J_4=0$ otherwise. For $Das\le z\le 1$, $J_5[f](z,s)=\left(m_1(s,z-Das)+m_0(z)H(Da-z)\right)f(z-Das)$, and $J_5=0$ otherwise. This directly establishes \eqref{T_u}, which completes the proof.
\end{proof}

\section{Proof of Lemma~\ref{inverse norm}}\label{appendix:proof of lemma 6}
\begin{proof}
        The proof is analogous to that of Lemma~\ref{direct norm}. Accordingly, we detail only the derivation associated with the term $\mathcal{\breve T}^{\mu}[f]$. Recall the definition of $\mathcal{\breve{T}}^{\mu}[f](z,s)$ in \eqref{T mu definition}, the inverse kernel $\breve{N}(z,s,y)$ given in Lemma~\ref{lemma: inverse kernel solution}, and the solution $\breve M_1$ in \eqref{breve M1}. Substituting the explicit expression for $\breve{N}(z,s,y)$ into $\mathcal{\breve{T}}^{\mu}[f](z,s)$, we obtain for the case $\tau\ge D$, 
        \begin{subequations}
        \begin{align}
        &\mathcal{\breve{T}}^{\mu}[f](z,s) =\bigl(\breve{m}_1(\frac{z}{Da},0)\!-\!m_1(\frac{z}{Da},0)\bigr)f(1\!+\!\frac{z\!-\!Das}{\tau a}), \notag \\
        &\qquad \qquad \qquad\qquad \qquad \qquad\qquad \mathrm{if} ~ 0\le z\le Das,\\
       &\mathcal{\breve{T}}^{\mu}[f](z,s)=0, \qquad \qquad\qquad \quad \mathrm{otherwise}.
    \end{align}
       \end{subequations}
    It follows from this explicit expression that $\|\mathcal{\breve T}^{\mu}[f]\|_\infty\le \breve C_\mu\|f\|_\infty$ holds for some positive constant $\breve C_\mu$. An analogous expression---and hence the same norm bound---can be derived for the case $\tau<D$. 
\end{proof}

\section{Transformation of the Nonlinear Screw Extrusion Model into \texorpdfstring{System~\eqref{main model}}{System (1)}}\label{appendix:transformation of screw extrusion system}
The nonlinear screw extrusion model \eqref{screw extrusion model} is transformed into system~\eqref{main model} through three successive steps. Before proceeding, note that the pressure equation~\eqref{eq:mass balance of FFZ} is a spatial ODE that can be simplified via its relationship with $F_\mathrm{d}(t)$, i.e., $F_\mathrm{d}(t) = \frac{k_\mathrm{d}}{\eta}\Delta P(t)=\frac{k_\mathrm{d}}{\eta}\left(p(L,t)-P_\mathrm{0}\right)$. Specifically, integrating \eqref{eq:mass balance of FFZ} from $l(t)$ to $L$ yields
    \begin{align}
		\Delta P(t) &= 
		\frac{\eta\rho VN(L-l(t))}{B\rho+k_\mathrm{d}(L-l(t))}.\label{eq:P(l(t))}
	\end{align} 
    Consequently, $p(x,t)$ can be explicitly expressed in terms of the state $l(t)$, bypassing the need for direct analysis of $p(x,t)$ in the subsequent derivation.
    \subsection{Model Representation in Fixed Domains}
    
    First, a coordinate transformation is introduced to map the time-varying spatial intervals from $x \in [0,l(t))$ (respectively, $x \in (l(t),L])$ onto the fixed intervals $z \in [0,1)$ (resp. $z\in(1,2]$) as follows:
    \begin{subequations}
	\begin{align}
		z &=\frac{x}{l(t)},\qquad \qquad ~~~
		\mathrm{for}~~ x\in [0,l(t)],   \\         
		z &=\frac{x+L-2l(t)}{L-l(t)}  
		\label{change of variable},\quad \mathrm{for}~~ x\in (l(t),L],
	\end{align}
    \end{subequations}
	with the inverse coordinate transformation given by
    \begin{subequations}
	\begin{align}
		x &=zl(t),&			&\mathrm{for}~~ z\in [0,1],   \\         
		x &=(2-z)l(t)+(z-1)L,  
		\label{inverse change of variable}\quad &	&\mathrm{for}~~ z\in (1,2].
	\end{align}
    \end{subequations}
	
	We denote
	\begin{align}
		\bar{f}_\mathrm{p}(z,t) &= f_\mathrm{p}(x,t), & \bar{T}_\mathrm{p}(z,t) &= T_\mathrm{p}(x,t), \\
		\bar{T}_\mathrm{b}(z,t-D) &= T_\mathrm{b}(x, t-D), & \bar{T}_\mathrm{f}(z,t) &= T_\mathrm{f}(x,t),
	\end{align}
	and let $h_\mathrm{p} := \frac{S_\mathrm{ech}\hbar}{\rho Vc_\mathrm{p}}$, $
	h_\mathrm{f} := \frac{S_\mathrm{ech}\hbar}{\rho Vc_\mathrm{f}}$, $
	\beta_\mathrm{p} := \frac{\mu_\mathrm{p}\eta N^2}{\rho Vc_\mathrm{p}}$, 
	$\beta_\mathrm{f} := \frac{\mu_\mathrm{f}\eta N^2}{\rho Vc_\mathrm{f}}$. Then, in the new coordinates, the system \eqref{screw extrusion model} can be rewritten as
    \begin{subequations}\label{fixed-domain model}
	\begin{align}
		\frac{\partial \bar{f_\mathrm{p}}}{\partial t}(z,t) = &-\alpha_\mathrm{p}\frac{\partial \bar{f_\mathrm{p}}}{\partial z}(z,t), \qquad \qquad  \quad  ~z\in(0,1),  \label{eq:f_p}  \\
		\frac{\partial \bar{T}_\mathrm{p}}{\partial t}(z,t) = &-\alpha_\mathrm{p}\frac{\partial \bar{T}_\mathrm{p}}{\partial z}(z,t)-h_\mathrm{p}\bar{T}_\mathrm{p}(z,t)+\frac{\beta_{\rm{p}}}{\bar{f_\mathrm{p}}(z,t)}\notag \\
        &+h_{\rm{p}}\bar{T}_\mathrm{b}(z, t-D), \qquad\qquad z\in(0,1), \\
		\frac{\partial \bar{T}_\mathrm{f}}{\partial t}(z,t) =
		&-\alpha_\mathrm{f}\frac{\partial \bar{T}_\mathrm{f}}{\partial z}(z,t)-h_{\rm{f}}\bar{T}_\mathrm{f}(z,t)+\beta_{\rm{f}}\notag \\
        &+h_{\rm{f}}\bar{T}_\mathrm{b}(z, t-D),\qquad\qquad z\in (1,2),\\
		\frac{dl(t)}{dt} =& \frac{\frac{k_\mathrm{d}}{\eta}\Delta P(t)-\rho VN\bar{f}_\mathrm{p}(1,t)}{\rho S_\mathrm{eff}(1-\bar{f}_\mathrm{p}(1,t))}, \label{eq:l(t)}\\
		\bar{f}_\mathrm{p}(0,t) =&\frac{F_{\rm{in}}(t-\tau)}{\rho V N},  \label{fixed domain:boundary 1}\\
		\bar{T}_\mathrm{p}(0,t) =& {T}_\mathrm{in},\label{fixed domain:boundary 2} \\
		\bar{T}_\mathrm{f}(1,t) =& \bar{T}_\mathrm{p}(1,t), 
		\label{fixed domain:boundary 3}
	\end{align}  
    \end{subequations}
	where 
	\begin{align}
		\alpha_\mathrm{p} &= \frac{1}{l(t)}[\xi N-z\frac{dl(t)}{dt}],  \label{alpha_p}\\
		\alpha_\mathrm{f} &=  \frac{1}{L-l(t)}[\frac{k_\mathrm{d}\Delta P(t)\xi}{\eta\rho V}-(2-z)\frac{dl(t)}{dt}]. \label{alpha_f}
	\end{align}

	\subsection{Equilibrium Point}
	The equilibrium point $(l_\mathrm{e}, \bar{f}_\mathrm{pe}(z), \bar{T}_\mathrm{pe}(z),\bar{T}_\mathrm{fe}(z))$ is obtained by setting the time derivative to zero, which leads to
	\begin{align*}
		0 =& -\alpha_\mathrm{pe}\bar{f}_{\mathrm{pe}}^\prime(z), \\
		0 =& -\alpha_\mathrm{pe}\bar{T}_{\mathrm{pe}}^\prime(z)+h_{\rm{p}}\Bigl(\bar{T}_\mathrm{b}^\star(z)-\bar{T}_\mathrm{pe}(z)\Bigr)
		+\frac{\beta_{\rm{p}}}{\bar{f}_{\mathrm{pe}}(z)}, \\
		0 =& -\alpha_\mathrm{fe}\bar{T}_{\mathrm{fe}}^\prime(z)+h_{\rm{f}}\Bigl(\bar{T}_\mathrm{b}^\star(z)-\bar{T}_\mathrm{fe}(z)\Bigr)
		+\beta_{\rm{f}}, \\
		0 =& \rho VN\bar{f}_\mathrm{pe}(1,t)-\frac{k_\mathrm{d}}{\eta}\Delta P_\mathrm{e},
	\end{align*}
	with conditions,
	\begin{align*}
		\bar{f}_\mathrm{pe}(0) =\frac{F_{\rm{in}}^\star}{\rho V N},\quad
		\bar{T}_{\mathrm{pe}}(0) = T_{\mathrm{in}}, \quad
		\bar{T}_{\mathrm{fe}}(1) = \bar{T}_{\mathrm{pe}}(1),
	\end{align*}
    where $F_\mathrm{in}^\star$ and $\bar{T}_\mathrm{b}^\star(z)$ are steady-state operating points, and $\Delta P_\mathrm{e}$, $\alpha_\mathrm{pe}=\frac{\xi N}{l_\mathrm{e}}$, and $\alpha_\mathrm{fe}=\frac{k_\mathrm{d}\xi N}{B\rho+k_\mathrm{d}(L-l_\mathrm{e})}$ denote the steady-state values of \eqref{eq:P(l(t))}, \eqref{alpha_p}, and \eqref{alpha_f}, respectively. Solving the above differential equations yields the steady-state solutions, 
	\begin{align}
		\bar{f}_\mathrm{pe}(z) &= \frac{F_\mathrm{in}^\star}{\rho VN}, \qquad \qquad \qquad\qquad \qquad ~z\in[0,1],  \label{eq: equilibrium 1}\\
		\bar{T}_\mathrm{pe}(z) &= \frac{r_2}{r_1}+ a_1\mathrm{e}^{-r_1l_\mathrm{e}z}, \qquad  ~~ \qquad \qquad  z\in[0,1], \label{eq: equilibrium 2}  \\
		\bar{T}_\mathrm{fe}(z) &= \frac{r_4}{r_3}+ a_2\mathrm{e}^{-r_3\bigl((L-l_\mathrm{e})z+2l_\mathrm{e}-L\bigr)}, ~~~\, z\in[1,2],
		\label{eq: equilibrium 3} \\
		l_\mathrm{e} &= L - \frac{B\rho F_{\mathrm{in}}^\star}{k_\mathrm{d}(\rho VN - F_{\mathrm{in}}^\star)}, \label{eq: equilibrium 4}  
	\end{align}
	where
	\begin{align}
		r_1 =& \frac{h_\mathrm{p}}{\xi N},&
		r_2 =&\bigl(\bar{T}_\mathrm{b}^\star(z)+\frac{\beta_\mathrm{p}}{h_\mathrm{p}\bar{f}_\mathrm{pe}}\bigr)r_1, \\
		r_3 =& \frac{h_\mathrm{f}\rho V}{F_\mathrm{in}^\star\xi},& 
		r_4 =& \bigl(\bar{T}_\mathrm{b}^\star(z)+\frac{\beta_\mathrm{f}}{h_\mathrm{f}}\bigr)r_3, \label{equilibrium parameter 2} \\
		a_1 =& T_\mathrm{in}-\frac{r_2}{r_1}, &
		a_2 =& \bigl(T_\mathrm{in}-\frac{r_2}{r_1}\bigr)\mathrm{e}^{-(r_1\!-\!r_3)l_\mathrm{e}}\!+\!\bigl(\frac{r_2}{r_1}\!-\!\frac{r_4}{r_3}\bigr)\mathrm{e}^{r_3l_\mathrm{e}}.\label{equilibrium parameter 4}
	\end{align}
	
	\subsection{Linearization Around the Equilibrium Point}\label{Sec_II-B}
	Define the error variables,
	\begin{align}
		X(t)&=l(t)-l_\mathrm{e}, \\
		u(z,t)&=\bar{f}_\mathrm{p}(z,t)-\bar{f}_\mathrm{pe}(z),	\,\quad \qquad \qquad z\in [0,1], \\
		\alpha(z,t)&=\bar{T}_\mathrm{p}(z,t)-\bar{T}_\mathrm{pe}(z), \qquad \quad \qquad z\in[0,1],\\
		\beta(z,t)&=\bar{T}_\mathrm{f}(z,t)-\bar{T}_\mathrm{fe}(z), ~\quad \qquad \qquad z\in[1,2],\\
		U_1(t) &= \frac{F_\mathrm{in}(t) - F_\mathrm{in}^\star}{\rho VN}, \\
		U_2(z,t) &= \bar{T}_\mathrm{b}(z,t) - \bar{T}_\mathrm{b}^\star(z), ~\quad\qquad \qquad z\in[0,2].
	\end{align} 
	
	Linearizing the system~\eqref{fixed-domain model} around the equilibrium point $(l_\mathrm{e}, \bar{f}_\mathrm{pe}(z), \bar{T}_\mathrm{pe}(z),\bar{T}_\mathrm{fe}(z))$ via Taylor series with first order approximation, we obtain
    \begin{subequations}\label{linearized system}
	\begin{align}
		u_t(z,t) =& -a_0u_z(z,t),   \label{linearized system 1}\\
		\alpha_t(z,t) =&-a_0\alpha_z(z,t)-h_{\mathrm{p}}\alpha(z,t)+h_{\mathrm{p}}U_2(z, t-
		D) \notag \\
		&-c_0u(z,t)+g_1(z)X(t)+g_2(z)u(1,t),  \label{linearized system 2}\\
		\beta_t(z,t)=&-b_0\beta_z(z,t) -h_{\mathrm{f}}\beta(z,t)+h_{\mathrm{f}}U_2(z, t-
		D) \notag \\
		&+g_3(z)X(t)+g_4(z)u(1,t), \label{linearized system 3}\\
		\dot{X}(t) =& -\gamma_0 X(t)-\theta_0u(1,t), \label{linearized system 4}\\
		u(0,t) =& U_1(t-\tau), \label{linearized system 5}\\
		\alpha(0,t) =& 0, \label{linearized system 6}\\
		\beta(1,t) =& \alpha(1,t), \label{linearized system 7}
	\end{align}
    \end{subequations}
    where
	\begin{align}
		a_0 &= \frac{\xi N}{l_\mathrm{e}}, \quad	
		b_0 = \frac{k_\mathrm{d}\xi N}{B\rho+k_\mathrm{d}(L-l_\mathrm{e})},\quad
		&c_0 =& \frac{\beta_\mathrm{p}}{\bar{f}_\mathrm{pe}^2(z)},\label{def_abc} \\
		\gamma_0 &= \frac{k_\mathrm{d}\xi NB\rho}{[1-\bar{f}_\mathrm{pe}(1)][B\rho+k_\mathrm{d}(L-l_\mathrm{e})]^2}, \\
        \theta_0 &= \frac{\xi NB\rho}{[1-\bar{f}_\mathrm{pe}(1)]^2[B\rho+k_\mathrm{d}(L-l_\mathrm{e})]}. \label{def_theta} 	
	\end{align}
	The coefficient functions $g_i(z)$ ($i = 1, \cdots, 4$), which depend on the steady-state solutions, are bounded and are given by
	\begin{align}
		g_1(z) =&\frac{a_0-\gamma_0 z}{l_\mathrm{e}}\bar{T}_\mathrm{pe}^\prime(z), \;~g_2(z) =-\frac{\theta_0 z}{l_\mathrm{e}}\bar{T}_\mathrm{pe}^\prime(z),  
		\label{g1 g2} \\
		g_3(z) =&-\left(\frac{k_\mathrm{d}^2\xi N}{[B\rho+k_\mathrm{d}(L-l_\mathrm{e})]^2}+\frac{\gamma_0(2-z
			)}{L-l_\mathrm{e}}\right)\bar{T}_\mathrm{fe}^\prime(z), \label{g4} \\
		g_4(z) =&-\frac{\theta_0 (2-z)}{L-l_\mathrm{e}}\bar{T}_\mathrm{fe}^\prime(z).
		\label{g5}
	\end{align}

    Therefore, the system \eqref{linearized system} represents a simplified version of the system~\eqref{main model}.

\end{document}